\renewcommand{\Function}[2]{%
	\csname ALG@cmd@\ALG@L @Function\endcsname{#1}{#2}%
	\def\jayden@currentfunction{#1}%
}
\newcommand{\funclabel}[1]{%
	\@bsphack
	\protected@write\@auxout{}{%
		\string\newlabel{#1}{{\jayden@currentfunction}{\thepage}}%
	}%
	\@esphack
}
\newtheorem{thm}{Theorem}[section]
\newtheorem{lem}[thm]{Lemma}
\newtheorem{prop}[thm]{Proposition}
\newtheorem{defn}{Definition}[section]
\newcommand{\dfnn}[2]{$\quad$ \textbf{\emph{(#1)}} {#2}}
\renewcommand{\vec}[1]{\mathbf{#1}}
\newcommand{\eself}{\hookrightarrow^{s}}
\newcommand{\eref}{\hookrightarrow^{r}}
\newcommand{\eancestor}{\hookrightarrow^{a}} 
\newcommand{\eselfancestor}{\hookrightarrow^{sa}} 
\newcommand{\erefz}{\hookrightarrow}
\newcommand{\efork}{\pitchfork}
\newcommand{\dom}{\gg}
\newcommand{\sdom}{\gg^{s}}
\newcommand{\domf}{\gg^{f}}
\newcommand{\hibefore}{\mapsto}
\newcommand{\hbefore}{\rightarrow}
\newcommand{\concur}{\parallel}
\def\BState{\State\hskip-\ALG@thistlm}
\title{Fantom: A scalable framework for asynchronous distributed systems}
\author{\large Sang-Min Choi}
\author{Jiho Park}
\author{Quan Nguyen}
\author{Andre Cronje}
\affil{FANTOM Lab\\ FANTOM Foundation}
\begin{document}
\maketitle

\begin{abstract}
	We describe \emph{Fantom}, a framework for asynchronous distributed systems. \emph{Fantom} is based on the Lachesis Protocol~\cite{lachesis01}, which uses asynchronous event transmission for practical Byzantine fault tolerance (pBFT) to create a leaderless, scalable, asynchronous Directed Acyclic Graph (DAG).
	
	We further optimize the \emph{Lachesis Protocol} by introducing a permission-less network for dynamic participation. Root selection cost is further optimized by the introduction of an n-row flag table, as well as optimizing path selection by introducing domination relationships.
	
	We propose an alternative framework for distributed ledgers, based on asynchronous partially ordered sets with logical time ordering instead of blockchains.
	
	This paper builds upon the original proposed family of \emph{Lachesis-class} consensus protocols. We formalize our proofs into a model that can be applied to abstract asynchronous distributed system.
\end{abstract}

\keywords{Consensus algorithm \and Byzantine fault tolerance \and Lachesis protocol \and OPERA chain \and Lamport timestamp \and Main chain \and Root \and Clotho \and Atropos \and Distributed Ledger \and Blockchain}

\newpage
\pagenumbering{arabic} 
\tableofcontents 
\newpage
\section{Introduction}\label{ch:intro}

Blockchain has emerged as a technology for secure decentralized  transaction ledgers with broad applications in financial systems, supply chains and health care.
\emph{Byzantine} fault tolerance~\cite{Lamport82} is addressed in distributed database systems, in which up to one-third of the participant nodes may be compromised. Consensus algorithms~\cite{bcbook15} ensures the integrity of transactions between participants over a distributed network~\cite{Lamport82} and is equivalent to the proof of \emph{Byzantine} fault tolerance in distributed database systems~\cite{randomized03, paxos01}. 

A large number of consensus algorithms have been proposed. For example; 
the original Nakamoto consensus protocol in Bitcoin uses Proof of Work (PoW)~\cite{bitcoin08}. Proof Of Stake (PoS)~\cite{ppcoin12,dpos14} uses participants' stakes to generate the blocks respectively. Our previous paper gives a survey of previous DAG-based approaches~\cite{lachesis01}.

In the previous paper~\cite{lachesis01},  we introduced a new consensus protocol, called $L_0$. The protocol $L_0$ is a DAG-based asynchronous non-deterministic protocol that guarantees pBFT.
$L_0$ generates each block asynchronously and uses the OPERA chain (DAG) for faster consensus by confirming how many nodes share the blocks.

The Lachesis protocol as previously proposed is a set of protocols that create a directed acyclic graph for distributed systems. Each node can receive transactions and batch them into an event block. An event block is then shared with it's peers. When peers communicate they share this information again and thus spread this information through the network. In BFT systems we would use a broadcast voting approach and ask each node to vote on the validity of each block. This event is synchronous in nature. Instead we proposed an asynchronous system where we leverage the concepts of distributed common knowledge, dominator relations in graph theory and broadcast based gossip to achieve a local view with high probability of being a global view. It accomplishes this asynchronously, meaning that we can increase throughput near linearly as nodes enter the network.

In this work, we propose a further enhancement on these concepts and we formalize them so that they can be applied to any asynchronous distributed system.

\subsection{Contributions}

In summary, this paper makes the following contributions:
\begin{itemize}
\item We introduce the n-row flag table for faster root selection of the Lachesis Protocol.
\item We define continuous consistent cuts of a local view to achieve consensus.
\item We present proof of how domination relationships can be used for share information.
\item We formalize our proofs that can be applied to any generic asynchronous DAG solution.
\end{itemize}

\subsection{Paper structure}

The rest of this paper is organised as follows.
Section~\ref{se:prelim} describes our Fantom framework.
Section~\ref{se:lca} presents the protocol implementation.
Section~\ref{se:con} concludes.
Proof of Byzantine fault tolerance is described in Section~\ref{se:proof}.

\section{Preliminaries}\label{se:prelim}

The protocol is run via nodes representing users' machines which together create a network. The basic units of the protocol are called event blocks - a data structure created by a single node to share transaction and user information with the rest of the network. These event blocks reference previous event blocks that are known to the node. This flow or stream of information creates a sequence of history.

The history of the protocol can be represented by a directed acyclic graph $G=(V, E)$, where $V$ is a set of vertices and $E$ is a set of edges. Each vertex in a row (node) represents an event. Time flows left-to-right of the graph, so left vertices represent earlier events in history. 

For a graph $G$, a path $p$ in $G$ is a sequence  of vertices ($v_1$, $v_2$, $\dots$, $v_k$) by following the edges in $E$.
Let $v_c$ be a vertex in $G$.
A vertex $v_p$ is the \emph{parent} of $v_c$ if there is an edge from $v_p$ to $v_c$.
A vertex $v_a$ is an \emph{ancestor} of $v_c$ if there is a path from $v_a$ to $v_c$.

\begin{figure}[H] \centering
\includegraphics[height=5cm, width=1.0\columnwidth]{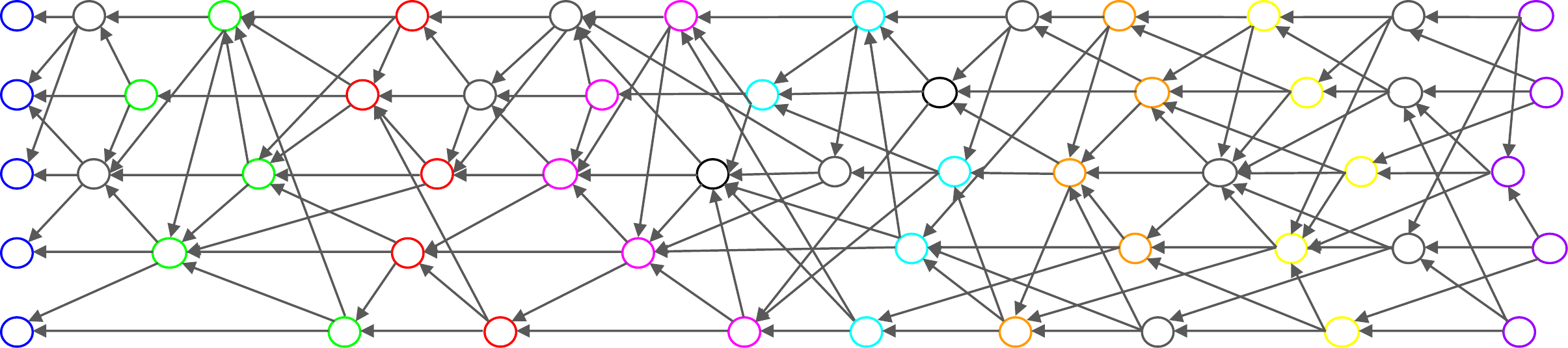}
\caption{An Example of OPERA Chain}
\label{fig:operachain}
\end{figure}

Figure~\ref{fig:operachain} shows an example of an OPERA chain (DAG) constructed through the Lachesis protocol. Event blocks are representated by circles. Blocks of the same frame have the same color.

\subsection{Basic Definitions}

\dfnn{Lachesis}{The set of \emph{protocols}}

\dfnn{Node}{Each machine that participates in the Lachesis protocol is called a \emph{node}. Let $n$ denote the total number of nodes.}

\dfnn{$k$}{A \emph{constant} defined in the system.}

\dfnn{Peer node}{A node $n_i$ has $k$ \emph{peer nodes}.}

\dfnn{Process}{A process $p_i$ represents a machine or a \emph{node}. The process identifier of $p_i$ is $i$. A set $P$ = \{1,...,$n$\} denotes the set of process identifiers.}

\dfnn{Channel}{A process $i$ can send messages to process $j$ if there is a channel ($i$,$j$). Let $C$ $\subseteq$ \{($i$,$j$) s.t. $i,j \in P$\} denote the set of channels.}

\dfnn{Event block}{Each node can create event blocks, send (receive) messages to (from) other nodes.The structure of an event block includes the signature, generation time, transaction history, and hash information to references.}

All nodes can create event blocks. The information of the referenced event blocks can be copied by each node. The first event block of each node is called a \emph{leaf event}.
 
Suppose a node $n_i$ creates an event $v_c$ after an event $v_s$ in $n_i$.  Each event block has exactly $k$ references. One of the references is self-reference, and the other $k$-1 references point to the top events of $n_i$'s $k$-1 peer nodes.

\dfnn{Top event}{An event $v$ is a top event of a node $n_i$ if there is no other event in $n_i$ referencing $v$.}

\dfnn{Height Vector}{The height vector is the number of event blocks \emph{created} by the $i$-th node.}

\dfnn{In-degree Vector}{The in-degree vector refers to the number of \emph{edges} from other event blocks created by other nodes to the top event block of this node. The top event block indicates the most recently created event block by this node.}

\dfnn{Ref}{An event $v_r$ is called ``ref" of event $v_c$ if the reference hash of $v_c$ points to the event $v_r$. Denoted by $v_c \eref v_r$. For simplicity, we can use $\erefz$ to denote a reference relationship (either $\eref$ or $\eself$).}

\dfnn{Self-ref}{An event $v_s$ is called ``self-ref" of event $v_c$, if the self-ref hash of $v_c$ points to the event $v_s$. Denoted by $v_c \eself v_s$.}

\dfnn{$k$ references}{Each event block has at least $k$ references. One of the references is self-reference, and the other $k$-1 references point to the top events of $n_i$'s $k$-1 peer nodes.}

\dfnn{Self-ancestor}{An event block $v_a$ is self-ancestor of an event block $v_c$ if there is a sequence of events such that $v_c \eself v_1 \eself \dots \eself v_m \eself v_a $. Denoted by $v_c \eselfancestor v_a$.}

\dfnn{Ancestor}{An event block $v_a$ is an ancestor of an event block $v_c$ if there is a sequence of events such that $v_c \erefz v_1 \erefz \dots \erefz v_m \erefz v_a $. Denoted by $v_c \eancestor v_a$.}

For simplicity, we simply use $v_c \eancestor v_s$ to refer both ancestor and self-ancestor relationship, unless we need to distinguish the two cases.

\dfnn{Flag Table}{The flag table is a $n \times k$ matrix, where $n$ is the number of nodes and k is the number of roots that an event block can reach. If an event block $e$ created by $i$-th node can reach $j$-th root, then the flag table stores the hash value of the $j$-th root.}

\subsection{Lamport timestamps}

Our Lachesis protocols relies on Lamport timestamps to define a topological ordering of event blocks in OPERA chain.  
By using Lamport timestamps, we do not rely on physical clocks to determine a partial ordering of events.

The ``happened before" relation, denoted by $\rightarrow$, gives a partial ordering of events from a distributed system of nodes. 
Each node $n_i$ (also called a process) is identified by its process identifier $i$. 
For a pair of event blocks $v$ and $v'$, the relation "$\rightarrow$" satisfies: (1) If $v$ and $v'$ are events of process $P_i$, and $v$ comes before $v'$, then $b \rightarrow v'$. (2) If $v$ is the send($m$) by one process and $v'$ is the receive($m$) by another process, then $v \rightarrow v'$. (3) If $v \rightarrow v'$ and $v' \rightarrow v''$ then $v \rightarrow v''$. 
Two distinct events $v$ and $v'$ are said to be concurrent if $v \nrightarrow v'$ and $v' \nrightarrow v$.

For an arbitrary total ordering $\prec$ of the processes, a relation $\Rightarrow$ is defined as follows: if $v$ is an event in process $P_i$ and $v'$ is an event in process $P_j$, then $v \Rightarrow v'$ if and only if either (i) $C_i(v) < C_j(v')$ or (ii) $C(v)= C_j(v')$ and $P_i \prec P_j$. This defines a total ordering, and that the Clock Condition implies that if $v \rightarrow v'$ then $v \Rightarrow v'$. 

We use this total ordering in our Lachesis protocol. This ordering is used to determine consensus time, as described in Section~\ref{se:lca}.

\dfnn{Happened-Immediate-Before}{An event block $v_x$ is said Happened-Immediate-Before an event block $v_y$ if $v_x$ is a (self-) ref of $v_y$. Denoted by $v_x \hibefore v_y$.}

\dfnn{Happened-before}{An event block $v_x$ is said Happened-Before an event block $v_y$ if $v_x$ is a (self-) ancestor of $v_y$. Denoted by $v_x \hbefore v_y$.}

Happened-before is the relationship between nodes which have event blocks. If there is a path from an event block $v_x$ to $v_y$, then $v_x$ Happened-before $v_y$. ``$v_x$ Happened-before $v_y$" means that the node creating $v_y$ knows event block $v_x$. This relation is the transitive closure of happens-immediately-before. Thus, an event $v_x$ happened before an event $v_y$ if one of the followings happens: (a) $v_y \eself v_x$, (b) $v_y \eref v_x$,  or (c) $v_y \eancestor v_x$. The happened-before relation of events form an acyclic directed graph $G' = (V, E')$ such that an edge $(v_i,v_j) \in E'$ has a reverse direction of the same edge in $E$.

\dfnn{Concurrent}{Two event blocks $v_x$ and $v_y$ are said concurrent if neither of them  happened before the other. Denoted by $v_x \concur v_y$.}

Given two vertices $v_x$ and $v_y$ both contained in two OPERA chains (DAGs) $G_1$ and $G_2$ on two nodes. We have the following:
\begin{enumerate}[(1)]
\item $v_x \hbefore v_y$ in $G_1$ if $v_x \hbefore v_y$ in $G_2$.
\item $v_x \concur v_y$ in $G_1$ if $v_x \concur v_y$ in $G_2$.
\end{enumerate}

\subsection{State Definitions}

Each node has a local state, a collection of histories, messages, event blocks, and peer information, we describe the components of each.

\dfnn{State}{A state of a process $i$ is denoted by $s_j^i$.}

\dfnn{Local State}{A local state consists of a sequence of event blocks $s_j^i = v_0^i, v_1^i, \dots, v_j^i$.}

In a DAG-based protocol, each $v_j^i$ event block is valid only if the reference blocks exist before it. From a local state $s_j^i$, one can reconstruct a unique DAG. That is, the mapping from a local state  $s_j^i$ into a DAG is \emph{injective} or one-to-one. Thus, for Fantom, we can simply denote the $j$-th local state of a process $i$ by the DAG $g_j^i$ (often we simply use $G_i$ to denote the current local state of a process $i$).

\dfnn{Action}{An action is a function from one local state to another local state.}

Generally speaking, an action can be one of: a $send(m)$ action where $m$ is a message, a $receive(m)$ action, and an internal action. A message $m$ is a triple $\langle i,j,B \rangle$ where $i \in P$ is the sender of the message, $j \in P$ is the message recipient, and $B$ is the body of the message. Let $M$ denote the set of messages. In the Lachesis protocol, $B$ consists of the content of an event block $v$.\\

Semantics-wise, in Lachesis, there are  two actions that can change a process's local state: creating a new event and receiving an event from another process.\\

\dfnn{Event}{An event is a tuple $\langle  s,\alpha,s' \rangle$ consisting of a state, an action, and a state. Sometimes, the event can be represented by the end state $s'$.}

The $j$-th event in history $h_i$ of process $i$ is $\langle  s_{j-1}^i,\alpha,s_j^i \rangle$, denoted by $v_j^i$.

\dfnn{Local history}{A local history $h_i$ of process $i$ is a (possibly infinite) sequence of alternating local states  --- beginning with a distinguished initial state. A set $H_i$ of possible local histories for each process $i$ in $P$.}

The state of a process can be obtained from its initial state and the sequence of actions or events that have occurred up to the current state. In the Lachesis protocol, we use append-only semantics. The local history may be equivalently described as either of the following:
$$h_i = s_0^i,\alpha_1^i,\alpha_2^i, \alpha_3^i \dots $$
$$h_i = s_0^i, v_1^i,v_2^i, v_3^i \dots $$
$$h_i = s_0^i, s_1^i, s_2^i, s_3^i, \dots$$

In Lachesis, a local history is equivalently expressed as:
$$h_i = g_0^i, g_1^i, g_2^i, g_3^i, \dots$$
where $g_j^i$ is the $j$-th local DAG (local state) of the process $i$.

\dfnn{Run}{Each asynchronous run is a vector of local histories. Denoted by
	$\sigma = \langle h_1,h_2,h_3,...h_N \rangle$.}

Let $\Sigma$ denote the set of asynchronous runs. We can now use Lamport’s theory to talk about global states of an asynchronous system. A global state of run $\sigma$ is an $n$-vector of prefixes of local histories of $\sigma$, one prefix per process. The happens-before relation can be used to define a consistent global state, often termed a consistent cut, as follows.

\subsection{Consistent Cut} 

Consistent cuts represent the concept of scalar time in distributed computation, it is possible to distinguish between a ``before'' and an ``after''.

In the Lachesis protocol, an OPERA chain $G=(V,E)$ is a directed acyclic graph (DAG). $V$ is a set of vertices and $E$ is a set of edges. DAG is a directed graph with no cycle. There is no path that has source and destination at the same vertex.
A path is a sequence of vertices ($v_1$, $v_2$, ..., $v_{k-1}$, $v_k$) that uses no edge more than once. 

An asynchronous system consists of the following sets: a set $P$ of process identifiers; a set $C$ of channels; a set $H_i$ is the set of possible local histories for each process $i$; a set $A$ of asynchronous runs; a set $M$ of all messages.

Each process / node in Lachesis selects $k$ other nodes as peers. 
For certain gossip protocol, nodes may be constrained to gossip with its $k$ peers. In such a case, the set of channels $C$ can be modelled as follows.
If node $i$ selects node $j$ as a peer, then $(i,j) \in C$. In general, one can express the history of each node in DAG-based protocol in general or in Lachesis protocol in particular, in the same manner as in the CCK paper~\cite{cck92}.


\dfnn{Consistent cut}{A consistent cut of a run $\sigma$ is any global state such that if $v_x^i \rightarrow v_y^j$ and $v_y^j$ is in the global state, then $v_x^i$ is also in the global state. Denoted by $\vec{c}(\sigma)$.}

The concept of consistent cut formalizes such a global state of a run. A consistent cut consists of all consistent DAG chains. A received event block exists in the global state implies the existence of the original event block. Note that a consistent cut is simply a vector of local states; we will use the notation $\vec{c}(\sigma)[i]$ to indicate the local state of $i$ in cut $\vec{c}$ of run $\sigma$.

A message chain of an asynchronous run is a sequence of messages $m_1$, $m_2$, $m_3$, $\dots$, such that, for all $i$, $receive(m_i)$ $\rightarrow$  $send(m_{i+1})$. Consequently, $send(m_1)$ $\rightarrow$ $receive(m_1)$ $\rightarrow$ $send(m_2)$ $\rightarrow$ $receive(m_2)$ $\rightarrow$ $send(m_3)$ $\dots$.

The formal semantics of an asynchronous system is given via  the satisfaction relation $\vdash$. Intuitively $\vec{c}(\sigma) \vdash \phi$, ``$\vec{c}(\sigma)$ satisfies $\phi$,'' if fact $\phi$ is true in cut $\vec{c}$ of run $\sigma$.

We assume that we are given a function $\pi$ that assigns a truth value to each primitive proposition $p$. The truth of a primitive proposition $p$ in $\vec{c}(\sigma)$ is determined by $\pi$ and $\vec{c}$. This defines $\vec{c}(\sigma) \vdash p$.\\

\dfnn{Equivalent cuts}{Two cuts $\vec{c}(\sigma)$ and $\vec{c'}(\sigma')$ are equivalent  with respect to $i$ if: $$\vec{c}(\sigma) \sim_i \vec{c'}(\sigma') \Leftrightarrow \vec{c}(\sigma)[i] = \vec{c'}(\sigma')[i]$$}

\dfnn{$i$ knows $\phi$}{$K_i(\phi)$ represents the statement ``$\phi$ is true in all possible consistent global states that include $i$’s local state''. 
$$\vec{c}(\sigma) \vdash K_i(\phi) \Leftrightarrow \forall \vec{c'}(\sigma')   (\vec{c'}(\sigma') \sim_i \vec{c}(\sigma) \ \Rightarrow\ \vec{c'}(\sigma') \vdash \phi) $$}

\dfnn{$i$ partially knows $\phi$}{$P_i(\phi)$ represents the statement ``there is some consistent global state in this run that includes $i$’s local state, in which $\phi$ is true.''
$$\vec{c}(\sigma) \vdash P_i(\phi) \Leftrightarrow \exists \vec{c'}(\sigma) ( \vec{c'}(\sigma) \sim_i \vec{c}(\sigma) \ \wedge\ \vec{c'}(\sigma) \vdash \phi )$$}
		
\dfnn{Majority concurrently knows}{The next modal operator is written $M^C$ and stands for ``majority concurrently knows.''
The definition of $M^C(\phi)$ is as follows.

$$M^C(\phi) =_{def} \bigwedge_{i \in S} K_i P_i(\phi), $$ where $S \subseteq P$ and $|S| > 2n/3$.}

This is adapted from the ``everyone concurrently knows'' in CCK paper~\cite{cck92}.
In the presence of one-third of faulty nodes, the original operator ``everyone concurrently knows'' is sometimes not feasible.
Our modal operator $M^C(\phi)$ fits precisely the semantics for BFT systems, in which unreliable processes may exist.\\

\dfnn{Concurrent common knowledge}{The last modal operator is concurrent common knowledge (CCK), denoted by $C^C$. $C^C(\phi)$ is defined as a fixed point of $M^C(\phi \wedge X)$.}

CCK defines a state of process knowledge that implies that all processes are in that same state of knowledge, with respect to $\phi$, along some cut of the run. In other words, we want a state of knowledge $X$ satisfying: $X = M^C(\phi \wedge X)$.	
$C^C$ will be defined semantically as the weakest such fixed point, namely as the greatest fixed-point of $M^C(\phi \wedge X)$.It therefore satisfies:

$$C^C(\phi) \Leftrightarrow  M^C(\phi \wedge C^C(\phi))$$

Thus, $P_i(\phi)$ states that there is some cut in the same asynchronous run $\sigma$ including $i$’s local state, such that $\phi$ is true in that cut.\\

Note that $\phi$ implies $P_i(\phi)$. But it is not the case, in general, that $P_i(\phi)$ implies $\phi$ or even that $M^C(\phi)$ implies $\phi$. The truth of $M^C(\phi)$ is determined with respect to some cut $\vec{c}(\sigma)$. A process cannot distinguish which cut, of the perhaps many cuts that are in the run and consistent with its local state, satisfies $\phi$; it can only know the existence of such a cut.\\ 

\dfnn{Global fact}{Fact $\phi$ is valid in system $\Sigma$, denoted by $\Sigma \vdash \phi$, if $\phi$ is true in all cuts of all runs of $\Sigma$.
	$$\Sigma \vdash \phi 
	\Leftrightarrow (\forall \sigma \in \Sigma)(\forall\vec{c}) (\vec{c}(a) \vdash \phi)$$}
	
Fact $\phi$ is valid, denoted $\vdash \phi$, if $\phi$ is valid in all systems, i.e. 
	$(\forall \Sigma) (\Sigma \vdash \phi)$.\\

\dfnn{Local fact}{A fact $\phi$ is local to process $i$ in system $\Sigma$ if
	 $\Sigma \vdash (\phi \Rightarrow K_i \phi)$.}

\subsection{Dominator (graph theory)}

In a graph $G=(V, E, r)$ a dominator is the relation between two vertices. A vertex $v$ is dominated by another vertex $w$, if every path in the graph from the root $r$ to $v$ have to go through $w$. Furthermore, the immediate dominator for a vertex $v$ is the last of $v$’s dominators, which every path in the graph have to go through to reach $v$.

\dfnn{Pseudo top}{A pseudo vertex, called top, is the parent of all top event blocks. Denoted by $\top$.}

\dfnn{Pseudo bottom}{A pseudo vertex, called bottom, is the child of all leaf event blocks. Denoted by $\bot$.}

With the pseudo vertices, we have $\bot$ happened-before all event blocks. Also all event blocks happened-before $\top$. That is, for all event $v_i$, $\bot \hbefore v_i$ and $v_i \hbefore \top$.

\dfnn{Dom}{An event $v_d$ dominates an event $v_x$ if every path from $\top$ to $v_x$ must go through $v_d$. Denoted by $v_d \dom v_x$.}

\dfnn{Strict dom}{An event $v_d$ strictly dominates an event $v_x$ if $v_d \dom v_x$ and $v_d$ does not equal $v_x$. Denoted by $v_d \sdom v_x$.}

\dfnn{Domfront}{A vertex $v_d$ is said ``domfront'' a vertex $v_x$ if  $v_d$ dominates an immediate predecessor of $v_x$, but $v_d$ does not strictly dominate $v_x$. Denoted by $v_d \domf v_x$.}

\dfnn{Dominance frontier}{The dominance frontier of a vertex $v_d$ is the set of all nodes $v_x$ such that $v_d \domf v_x$. Denoted by $DF(v_d)$.}

From the above definitions of domfront and dominance frontier, the following holds. If $v_d \domf v_x$, then $v_x \in DF(v_d)$.
	 
\subsection{OPERA chain (DAG)}

The core idea of the Lachesis protocol is to use a DAG-based structure, called the OPERA chain for our consensus algorithm. 
In the Lachesis protocol, a (participant) node is a server (machine) of the distributed system.
Each node can create messages, send messages to, and receive messages from, other nodes. The communication between nodes is asynchronous. 

Let $n$ be the number of participant nodes.
For consensus, the algorithm examines whether an event block is \emph{dominated} by $2n/3$ nodes, where $n$ is the number of all nodes. The Happen-before relation of event blocks with $2n/3$ nodes means that more than two-thirds of all nodes in the OPERA chain know the event block. 

The OPERA chain (DAG) is the local view of the DAG held by each node, this local view is used to identify topological ordering, select Clotho, and create time consensus through Atropos selection.
OPERA chain is a DAG graph $G = (V, E)$ consisting of $V$ vertices and $E$ edges. Each vertex $v_i \in V$ is an event block. An edge $(v_i,v_j) \in E$ refers to a hashing reference from $v_i$ to $v_j$; that is, $v_i \erefz v_j$.

\dfnn{Leaf}{The first created event block of a node is called a leaf event block.}

\dfnn{Root}{The leaf event block of a node is a root. When an event block $v$ can reach more than $2n/3$ of the roots in the previous frames, $v$ becomes a root.}

\dfnn{Root set}{The set of all first event blocks (leaf events) of all nodes form the first root set $R_1$ ($|R_1|$ = $n$). The root set $R_k$ consists of all roots $r_i$ such that $r_i$ $\not \in $ $R_i$, $\forall$ $i$ = 1..($k$-1) and $r_i$ can reach more than 2n/3 other roots in the current frame, $i$ = 1..($k$-1).}

\dfnn{Frame}{Frame $f_i$ is a natural number that separates Root sets. The root set at frame $f_i$ is denoted by $R_i$.}

\dfnn{Consistent chains}{OPERA chains $G_1$ and $G_2$ are consistent if for any event $v$ contained in both chains, $G_1[v] = G_2[v]$. Denoted by $G_1 \sim G_2$.}

When two consistent chains contain the same event $v$, both chains contain the same set of ancestors for $v$, with the same reference and self-ref edges between those ancestors.

If two nodes have OPERA chains containing event $v$, then they have the same $k$ hashes contained within $v$. A node will not accept an event during a sync unless that node already has $k$ references for that event, so both OPERA chains must contain $k$ references for $v$. The cryptographic hashes are assumed to be secure, therefore the references must be the same. By induction, all ancestors of $v$ must be the same. Therefore, the two OPERA chains are consistent.\\

\dfnn{Creator}{If a node $n_x$ creates an event block $v$, then the creator of $v$, denoted by $cr(v)$, is $n_x$.}

\dfnn{Consistent chain}{A global consistent chain $G^C$ is a chain if $G^C \sim G_i$ for all $G_i$.}

We denote $G \sqsubseteq G'$ to stand for $G$ is a subgraph of $G'$. Some properties of $G^C$ are given as follows:

\begin{enumerate}[(1)]
	\item $\forall G_i$ ($G^C \sqsubseteq G_i$).
	\item
	$\forall v \in G^C$ $\forall G_i$ ($G^C[v] \sqsubseteq G_i[v]$).
	\item
	($\forall v_c \in G^C$) ($\forall v_p \in G_i$) (($v_p \hbefore v_c) \Rightarrow v_p \in G^C$).
\end{enumerate}

\dfnn{Consistent root}{Two chains $G_1$ and $G_2$ are root consistent, if for every $v$ contained in both chains, $v$ is a root of $j$-th frame in $G_1$, then $v$ is a root of $j$-th frame in $G_2$.}

By consistent chains, if $G_1 \sim G_2$ and $v$ belongs to both chains, then $G_1[v]$ = $G_2[v]$.
We can prove the proposition by induction. For $j$ = 0, the first root set is the same in both $G_1$ and $G_2$. Hence, it holds for $j$ = 0. Suppose that the proposition holds for every $j$ from 0 to $k$. We prove that it also holds for $j$= $k$ + 1.
Suppose that $v$ is a root of frame $f_{k+1}$ in $G_1$. 
Then there exists a set $S$ reaching 2/3 of members in $G_1$ of frame $f_k$ such that $\forall u \in S$ ($u\hbefore v$). As $G_1 \sim G_2$, and $v$ in $G_2$, then $\forall u \in S$ ($u \in G_2$). Since the proposition holds for $j$=$k$, 
As $u$ is a root of frame $f_{k}$ in $G_1$, $u$ is a root of frame $f_k$ in $G_2$. Hence, the set $S$ of 2/3 members $u$ happens before $v$ in $G_2$. So $v$ belongs to $f_{k+1}$ in $G_2$.
	
Thus, all nodes have the same consistent root sets, which are the root sets in $G^C$. Frame numbers are consistent for all nodes.\\

\dfnn{Flag table}{A flag table stores reachability from an event block to another root. The sum of all reachabilities, namely all values in flag table, indicates the number of reachabilities from an event block to other roots.}

\dfnn{Consistent flag table}{For any top event $v$ in both OPERA chains $G_1$ and $G_2$, and $G_1 \sim G_2$, then the flag tables of $v$ are consistent if they are the same in both chains.}

From the above, the root sets of $G_1$ and $G_2$ are consistent. If $v$ contained in $G_1$, and $v$ is a root of $j$-th frame in $G_1$, then $v$ is a root of $j$-th frame in $G_i$. Since $G_1 \sim G_2$, $G_1[v] = G_2[v]$. The reference event blocks of $v$ are the same in both chains. Thus the flag tables of $v$ of both chains are the same.\\

\dfnn{Clotho}{A root $r_k$ in the frame $f_{a+3}$ can nominate a root $r_a$ as Clotho if more than 2n/3 roots in the frame $f_{a+1}$ dominate $r_a$ and $r_k$ dominates the roots in the frame $f_{a+1}$.}

Each node nominates a root into Clotho via the flag table. If all nodes have an OPERA chain with same shape, the values in flag table will be equal to each other in OPERA chain. Thus, all nodes nominate the same root into Clotho since the OPERA chain of all nodes has same shape.

\dfnn{Atropos}{An Atropos is assigned consensus time through the Lachesis consensus algorithm and is utilized for determining the order between event blocks. Atropos blocks form a Main-chain, which allows time consensus ordering and responses to attacks.}

For any root set $R$ in the frame $f_{i}$, the time consensus algorithm checks whether more than 2n/3 roots in the frame $f_{i-1}$ selects the same value. However, each node selects one of the values collected from the root set in the previous frame by the time consensus algorithm and Reselection process. Based on the Reselection process, the time consensus algorithm can reach agreement. However, there is a possibility that consensus time candidate does not reach agreement~\cite{Fischer85}. To solve this problem, time consensus algorithm includes minimal selection frame per next $h$ frame. In minimal value selection algorithm, each root selects minimum value among values collected from previous root set. Thus, the consensus time reaches consensus by time consensus algorithm.

\dfnn{Main-chain (Blockchain)}{For faster consensus, \emph{Main-chain} is a special sub-graph of the OPERA chain (DAG).}

The Main chain --- a core subgraph of OPERA chain, plays the important role of ordering the event blocks. The Main chain stores shortcuts to connect between the Atropos. 
After the topological ordering is computed over all event blocks through the Lachesis protocol, Atropos blocks are determined and form the Main chain.  To improve path searching, we use a flag table --- a local hash table structure as a cache that is used to quickly determine the closest root to a event block.

In the OPERA chain, an event block is called a \emph{root} if the event block is linked to more than two-thirds of previous roots. A leaf vertex is also a root itself. With root event blocks, we can keep track of ``vital'' blocks that $2n/3$ of the network agree on.

\begin{figure} \centering  
\includegraphics[height=8cm, width=1.0\columnwidth]{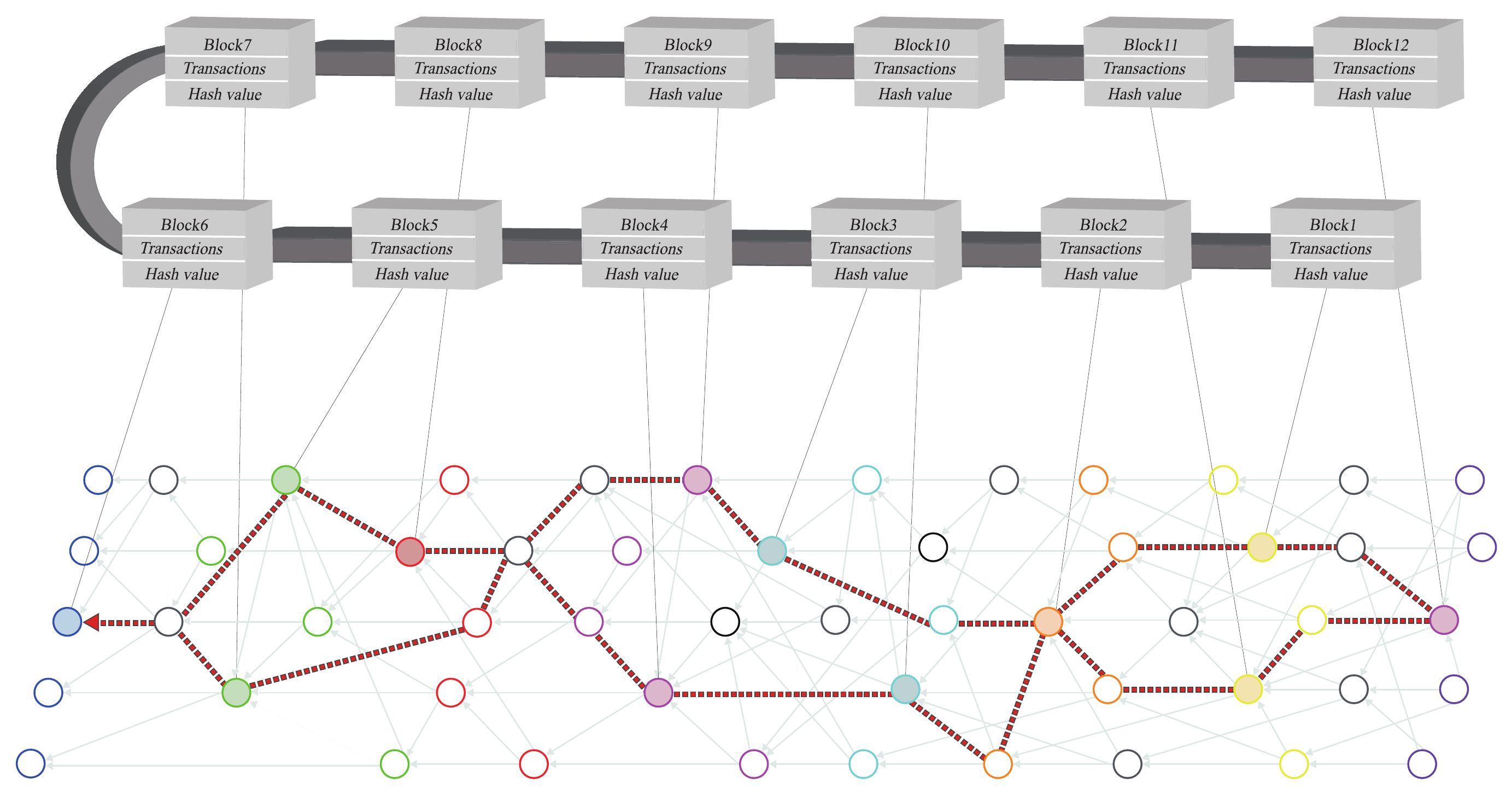}
\caption{An Example of Main-chain}
\label{fig:mainchain}
\end{figure}

Figure~\ref{fig:mainchain} shows an example of the Main chain composed of Atropos event blocks. 
In particular, the Main chain consists of Atropos blocks that are derived from root blocks and so are agreed by $2n/3$ of the network nodes. Thus, this guarantees that at least $2n/3$ of nodes have come to consensus on this Main chain. 

Each participant node has a copy of the Main chain and can search consensus position of its own event blocks.
Each event block can compute its own consensus position by checking the nearest Atropos event block. Assigning and searching consensus position are introduced in the consensus time selection section. 

The Main chain provides quick access to the previous transaction history to efficiently process new incoming event blocks. From the Main chain, information about unknown participants or attackers can be easily viewed.
The Main chain can be used efficiently in transaction information management by providing quick access to new event blocks that have been agreed on by the majority of nodes. In short, the Main-chain gives the following advantages:

-	All event blocks or nodes do not need to store all information. It is efficient for data management.

-	Access to previous information is efficient and fast.

Based on these advantages, OPERA chain can respond strongly to efficient transaction treatment and attacks through its Main-chain.

\newpage
\section{Lachesis Protocol}\label{se:lca}

\begin{algorithm}[H]
\caption{Main Procedure}\label{al:main}
\begin{algorithmic}[1]
	\Procedure{Main Procedure}{}
	\BState \emph{loop}:
	\State A, B = $k$-node Selection algorithm()
	\State Request sync to node A and B
	\State Sync all known events by Lachesis protocol
	\State Event block creation
	\State (optional) Broadcast out the message
	\State Root selection
	\State Clotho selection
	\State Atropos time consensus
	\BState \emph{loop}:
	\State Request sync from a node
	\State Sync all known events by Lachesis protocol
	\EndProcedure
\end{algorithmic}
\end{algorithm}

Algorithm~\ref{al:main} shows the pseudo algorithm for the Lachesis core procedure. The algorithm consists of two parts and runs them in parallel.

- In part one, each node requests synchronization and creates event blocks. In line 3, a node runs the Node Selection Algorithm. The Node Selection Algorithm returns the $k$ IDs of other nodes to communicate with. In line 4 and 5, the node synchronizes the OPERA chain (DAG) with the other nodes. Line 6 runs the Event block creation, at which step the node creates an event block and checks whether it is a root. The node then broadcasts the created event block to all other known nodes in line 7. The step in this line is optional. In line 8 and 9, Clotho selection and Atropos time consensus algorithms are invoked. The algorithms determines whether the specified root can be a Clotho, assign the consensus time, and then confirm the Atropos. 

- The second part is to respond to synchronization requests. In line 10 and 11, the node receives a synchronization request and then sends its response about the OPERA chain.

\subsection{Peer selection algorithm}
In order to create an event block, a node needs to select $k$ other nodes. Lachesis protocols does not depend on how peer nodes are selected. One simple approach can use a random selection from the pool of $n$ nodes. The other approach is to define some criteria or cost function to select other peers of a node. 

Within distributed system, a node can select other nodes with low communication costs, low network latency, high bandwidth, high successful transaction throughputs.

\subsection{Dynamic participants}
Our Lachesis protocol allows an arbitrary number of participants to dynamically join the system. The OPERA chain (DAG) can still operate with new participants. Computation on flag tables is set based and independent of which and how many participants have joined the system. Algorithms for selection of Roots, Clothos and Atroposes are flexible enough and not dependence on a fixed number of participants.

\newpage
\subsection{Peer synchronization}

We describe an algorithm that synchronizes events between the nodes.

\begin{algorithm}[H]
	\caption{EventSync}\label{al:syncevents}
	\begin{algorithmic}[1]
		\Procedure{sync-events()}{}		
		\State Node $n_1$ selects random peer to synchronize with
		\State $n_1$ gets local known events (map[int]int)
		\State $n_1$ sends RPC request Sync request to peer
		\State $n_2$ receives RPC requestSync request
		\State $n_2$ does an EventDiff check on the known map (map[int]int)
		
		\State $n_2$ returns unknown events, and map[int]int of known events to $n_1$		
		
		\EndProcedure
	\end{algorithmic}
\end{algorithm}

The algorithm assumes that a node always needs the events in topological ordering (specifically in reference to the lamport timestamps), an alternative would be to use an inverse bloom lookup table (IBLT) for completely potential randomized events. \\

Alternatively, one can simply use a fixed incrementing index to keep track of the top event for each node.\\

\subsection{Node Structure}
This section gives an overview of the node structure in Lachesis.

Each node has a height vector, in-degree vector, flag table, frames, clotho check list, max-min value, main-chain (blockchain), and their own local view of the OPERA chain (DAG). The height vector is the number of event blocks created by the $i$-th node. The in-degree vector refers to the number of edges from other event blocks created by other nodes to the top event block of this node. The top event block indicates the most recently created event block by this node. The flag table is a $n x k$ matrix, where n is the number of nodes and k is the number of roots that an event block can reach. If an event block $e$ created by $i$-th node can reach $j$-th root, then the flag table stores the hash value of the $j$-th root. Each node maintains the flag table of each top event block. 

Frames store the root set in each frame. Clotho check list has two types of check points; Clotho candidate ($CC$) and Clotho ($C$). If a root in a frame is a $CC$, a node check the $CC$ part and if a root becomes Clotho, a node check $C$ part. Max-min value is timestamp that addresses for Atropos selection. The Main-chain is a data structure storing hash values of the Atropos blocks. 

\begin{figure}[H] \centering  
\includegraphics[width=.5\textwidth]{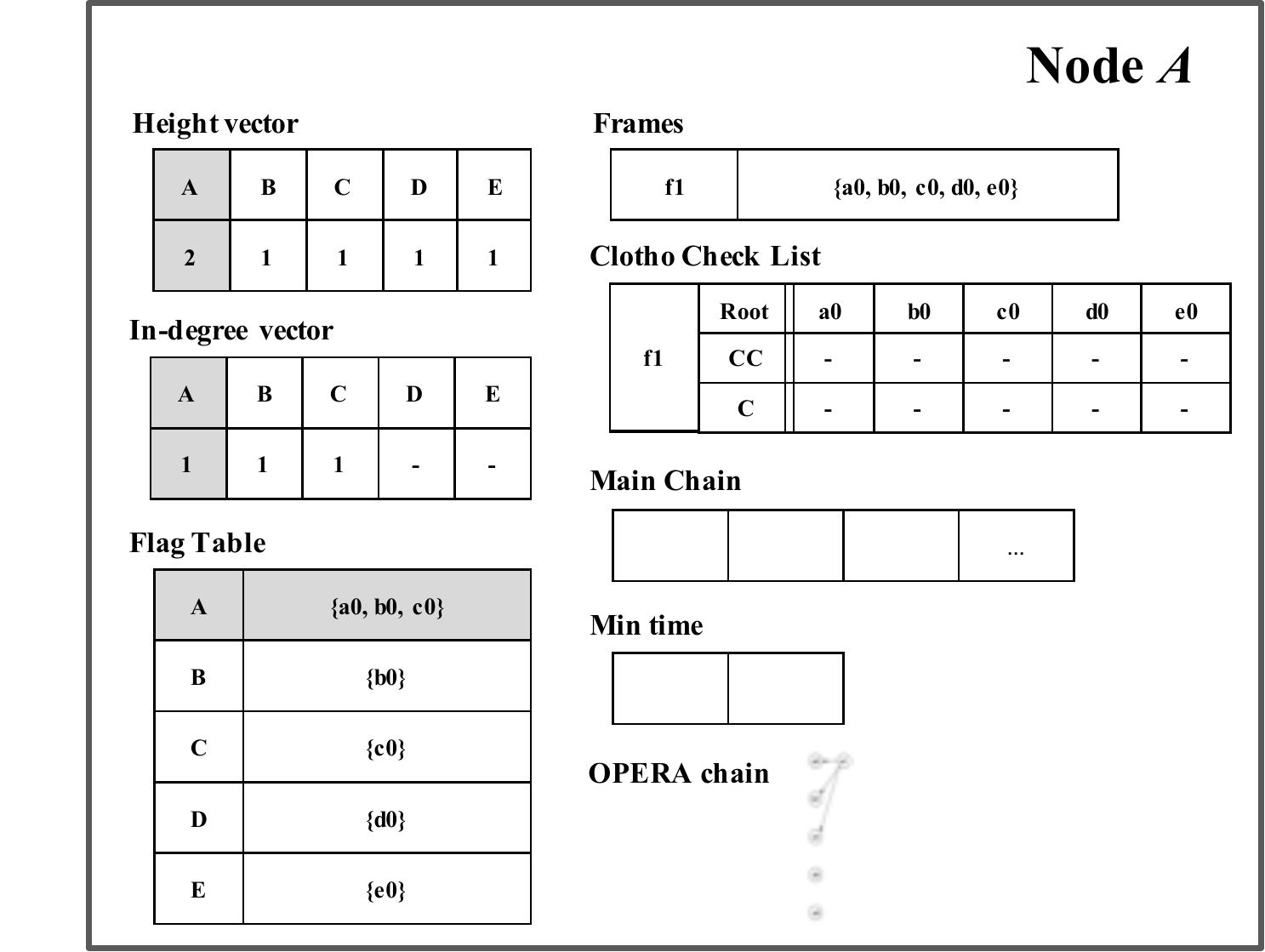}
\caption{An Example of Node Structure}
\label{fig:node}
\end{figure}

Figure~\ref{fig:node} shows an example of the node structure component of a node $A$. In the figure, each value excluding self height in the height vector is 1 since the initial state is shared to all nodes. In the in-degree vector, node $A$ stores the number of edges from other event blocks created by other nodes to the top event block. The in-degrees of node $A$, $B$, and $C$ are 1. 
In flag table, node $A$ knows other two root hashes since the top event block can reach those two roots. Node $A$ also knows that other nodes know their own roots. In the example situation there is no clotho candidate and Clotho, and thus clotho check list is empty. The main-chain and max-min value are empty for the same reason as clotho check list. 

\subsection{Peer selection algorithm via Cost function}
We define three versions of the Cost Function ($C_F$). Version one is focused around updated information share and is discussed below. The other two versions are focused on root creation and consensus facilitation, these will be discussed in a following paper.

We define a Cost Function ($C_F$) for preventing the creation of lazy nodes. A lazy node is a node that has a lower work portion in the OPERA chain (has created fewer event blocks). When a node creates an event block, the node selects other nodes with low value outputs from the cost function and refers to the top event blocks of the reference nodes. An equation~(\ref{eq1}) of $C_F$ is as follows,

\begin{equation}\label{eq1}
C_{F} =I/H
\end{equation}

where $I$ and $H$ denote values of in-degree vector and height vector respectively. If the number of nodes with the lowest $C_F$ is more than $k$, one of the nodes is selected at random. The reason for selecting high $H$ is that we can expect a high possibility to create a root because the high $H$ indicates that the communication frequency of the node had more opportunities than others with low $H$. Otherwise, the nodes that have high $C_F$ (the case of $I$ $>$ $H$) have generated fewer event blocks than the nodes that have low $C_F$.. 

\begin{figure}[H] \centering  
\includegraphics[width=.8\textwidth]{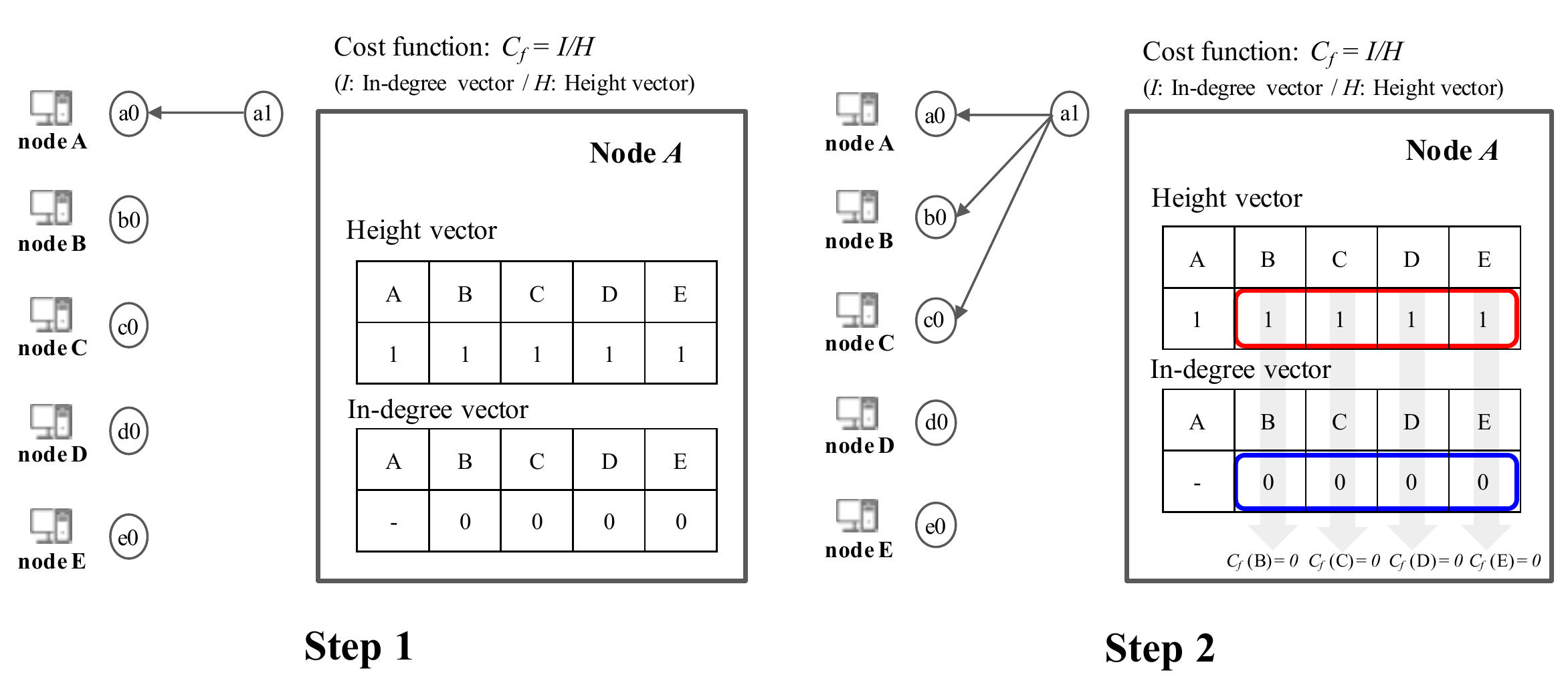}
\caption{An Example of Cost Function 1}
\label{fig:costfunction_1}
\end{figure}

Figure~\ref{fig:costfunction_1} shows an example of the node selection based on the cost function after the creation of leaf events by all nodes. In this example, there are five nodes and each node created leaf events. All nodes know other leaf events. Node $A$ creates an event block $v_1$ and $A$ calculates the cost functions. Step 2 in Figure~\ref{fig:costfunction_1} shows the results of cost functions based on the height and in-degree vectors of node $A$. In the initial step, each value in the vectors are same because all nodes have only leaf events. Node $A$ randomly selects $k$ nodes and connects $v_1$ to the leaf events of selected nodes. In this example, we set $k$=3 and assume that node $A$ selects node $B$ and $C$. 

\begin{figure}[H] \centering  
\includegraphics[width=.8\textwidth]{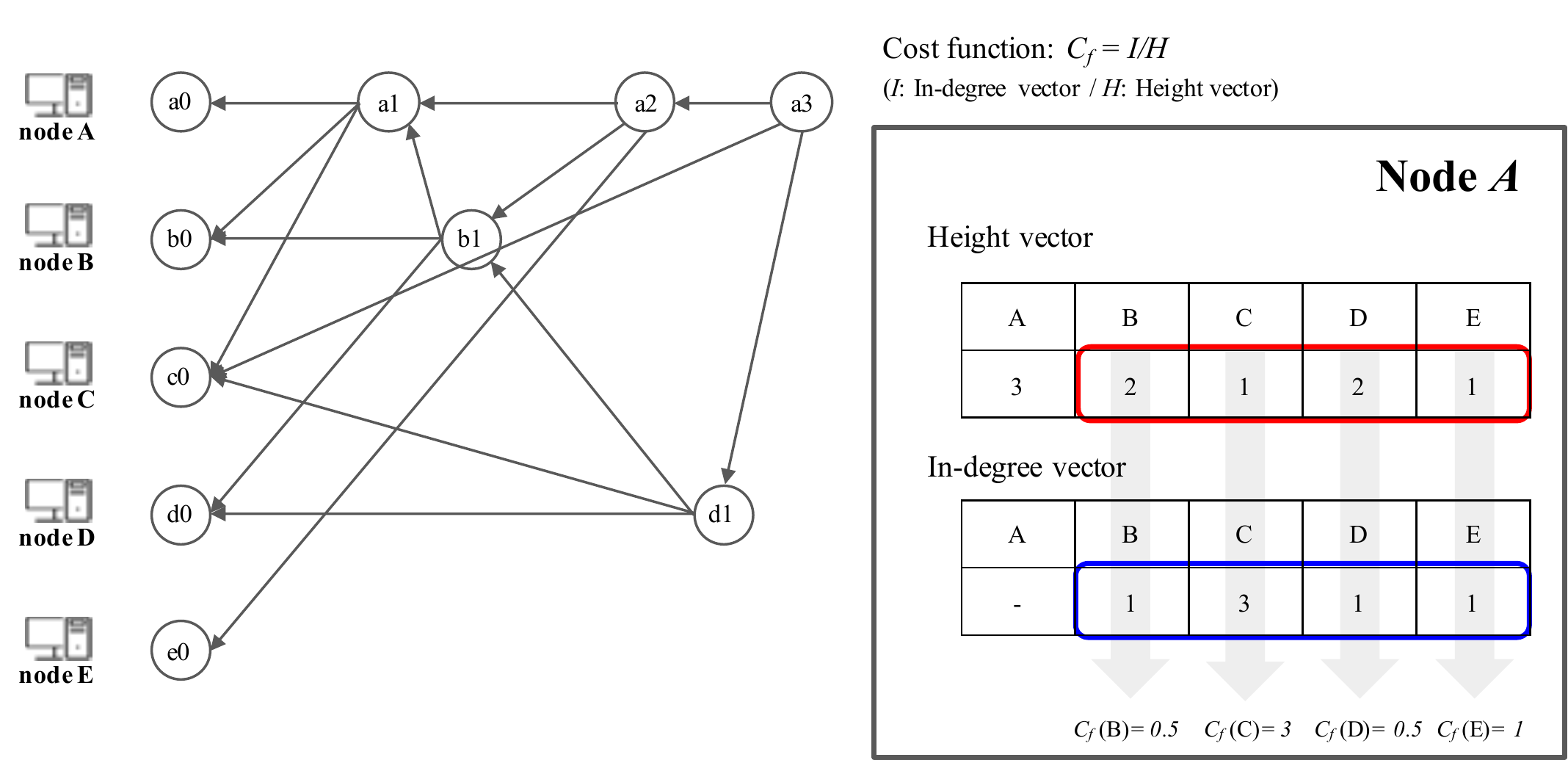}
\caption{An Example of Cost Function 2}
\label{fig:costfunction_2}
\end{figure}

Figure~\ref{fig:costfunction_2} shows an example of the node selection after a few steps of the simulation in Figure~\ref{fig:costfunction_1}. In Figure~\ref{fig:costfunction_2}, the recent event block is $v_5$ created by node $A$. Node $A$ calculates the cost function and selects the other two nodes that have the lowest results of the cost function. In this example, node $B$ has 0.5 as the result and other nodes have the same values. Because of this, node $A$ first selects node $B$ and randomly selects other nodes among nodes $C$, $D$, and $E$.

The height of node $D$ in the example is 2 (leaf event and event block $v_4$). On the other hand, the height of node $D$ in node structure of $A$ is 1. Node $A$ is still not aware of the presence of the event block $v_4$. It means that there is no path from the event blocks created by node $A$ to the event block $v_4$. Thus, node $A$ has 1 as the height of node $D$. 

\begin{algorithm}[H]
\caption{$k$-neighbor Node Selection}\label{al:ns}
\begin{algorithmic}[1]
	\Procedure{$k$-node Selection}{}
	\State \textbf{Input:} Height Vector $H$, In-degree Vector $I$
	\State \textbf{Output:} reference node $ref$
	\State min\_cost $\leftarrow$ $INF$ 
	\State $s_{ref}$ $\leftarrow$ None
	\For{$k \in Node\_Set$}
	\State $c_f$ $\leftarrow$ $\frac{I_k}{H_k}$ 
	\If {min\_cost $>$ $c_f$} 
	\State min\_cost $\leftarrow$ $c_f$
	\State $s_{ref}$ $\leftarrow$ {k}
	\ElsIf {min\_cost $equa$l $c_f$}
	\State $s_{ref}$ $\leftarrow$ $s_{ref}$ $\cup$ $k$
	\EndIf
	\EndFor
	\State $ref$ $\leftarrow$ random select in $s_{ref}$
	\EndProcedure
\end{algorithmic}
\end{algorithm}

Algorithm~\ref{al:ns} shows the selecting algorithm for selecting reference nodes. The algorithm operates for each node to select a communication partner from other nodes. Line 4 and 5 set min\_cost and $S_{ref}$ to initial state. Line 7 calculates the cost function $c_f$ for each node. In line 8, 9, and 10, we find the minimum value of the cost function and set min\_cost and $S_{ref}$ to $c_f$ and the ID of each node respectively. Line 11 and 12 append the ID of each node to $S_{ref}$ if min\_cost equals $c_f$. Finally, line 13 selects randomly $k$ node IDs from $S_{ref}$ as communication partners. The time complexity of Algorithm 2 is $O(n)$, where $n$ is the number of nodes. 

After the reference node is selected, each node communicates and shares information of all event blocks known by them. A node creates an event block by referring to the top event block of the reference node. The Lachesis protocol works and communicates asynchronously. This allows a node to create an event block asynchronously even when another node creates an event block. The communication between nodes does not allow simultaneous communication with the same node. 

\begin{figure}[H] \centering  
\includegraphics[height=7cm, width=1.0\columnwidth]{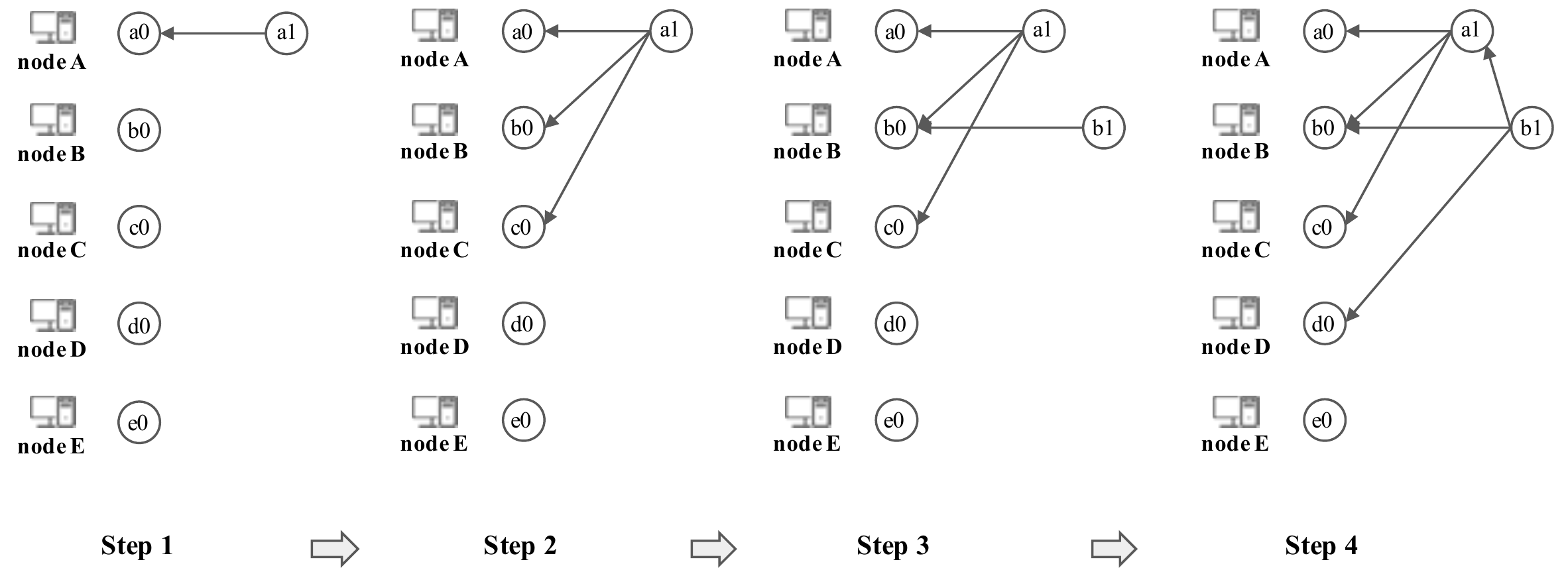}
\caption{An Example of Node Selection}
\label{fig:node_selection}
\end{figure}

Figure~\ref{fig:node_selection} shows an example of the node selection in Lachesis protocol. In this example, there are five nodes ($A, B, C, D,$ and $E$) and each node generates the first event blocks, called leaf events. All nodes share other leaf events with each other. In the first step, node $A$ generates new event block $a_1$. Then node $A$ calculates the cost function to connect other nodes. In this initial situation, all nodes have one event block called leaf event, thus the height vector and the in-degree vector in node $A$ has same values. In other words, the heights of each node are 1 and in-degrees are 0. Node $A$ randomly select the other two nodes and connects $a_1$ to the top two event blocks from the other two nodes. Step 2 shows the situation after connections. In this example, node $A$ select node $B$ and $C$ to connect $a_1$ and the event block $a_1$ is connected to the top event blocks of node $B$ and $C$. Node $A$ only knows the situation of the step 2. 

After that, in the example, node $B$ generates a new event block $b_1$ and also calculates the cost function. $B$ randomly select the other two nodes; $A$, and $D$, since $B$ only has information of the leaf events. Node $B$ requests to $A$ and $D$ to connect $b_1$, then nodes $A$ and $D$ send information for their top event blocks to node $B$ as response. The top event block of node $A$ is $a_1$ and node $D$ is the leaf event. The event block $b_1$ is connected to $a_1$ and leaf event from node $D$. Step 4 shows these connections. 

\subsection{Event block creation}
In the Lachesis protocol, every node can create an event block. Each event block refers to other $k$ event blocks using their hash values. In the Lachesis protocol, a new event block refers to $k$-neighbor event blocks under the following conditions:

\begin{enumerate}
\item Each of the $k$ reference event blocks is the top event blocks of its own node.
\item One reference should be made to a self-ref that references to an event block of the same node. 
\item The other $k$-1 reference refers to the other $k$-1 top event nodes on other nodes.
\end{enumerate}

\newpage

\begin{figure}[H] \centering  
\includegraphics[width=1.0\textwidth]{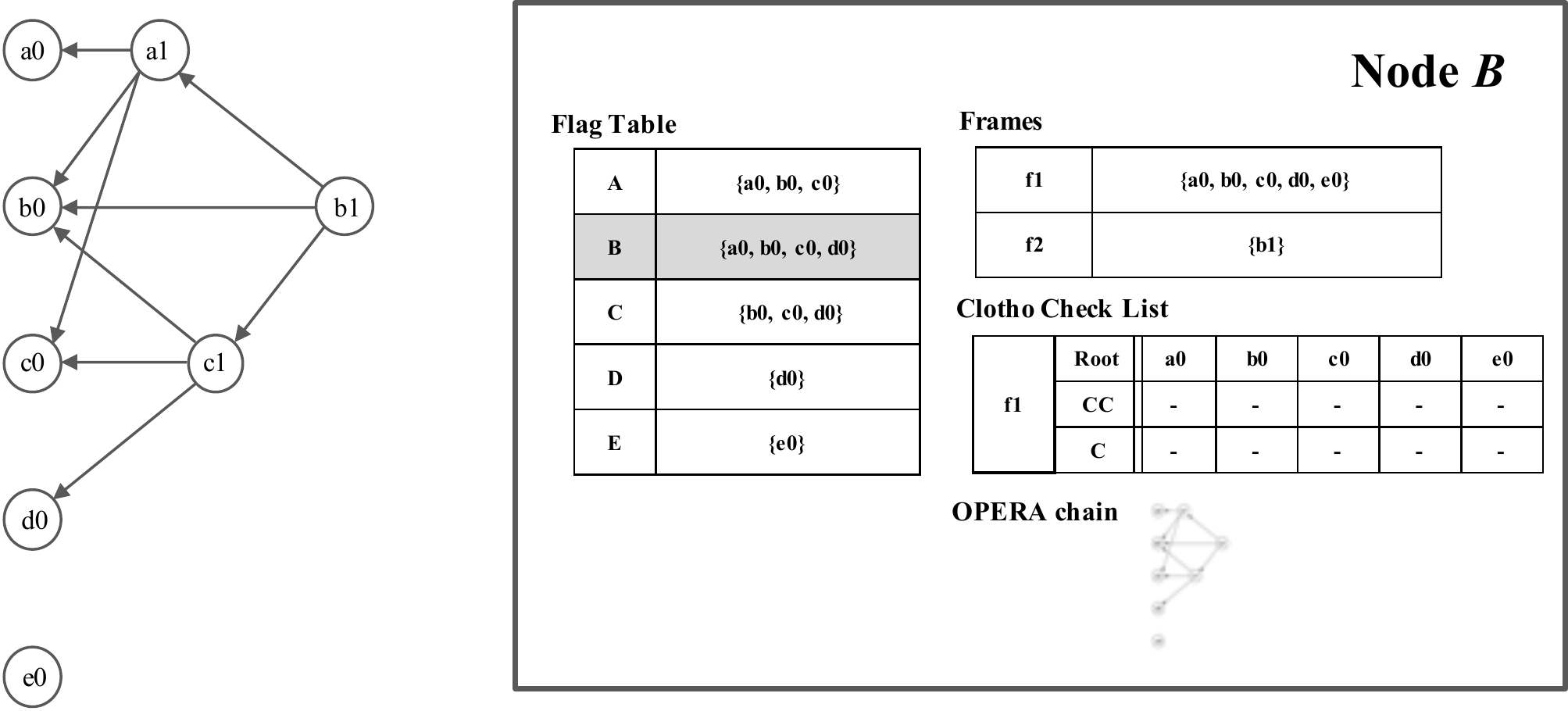}
\caption{An Example of Event Block Creation with Flag Table}
\label{fig:ex_ebc}
\end{figure}

Figure~\ref{fig:ex_ebc} shows the example of an event block creation with a flag table. In this example the recent created event block is $b_1$ by node $B$. The figure shows the node structure of node $B$. We omit the other information such as height and in-degree vectors since we only focus on the change of the flag table with the event block creation in this example. The flag table of $b_1$ in Figure~\ref{fig:ex_ebc} is updated with the information of the previous connected event blocks $a_1$, $b_0$, and $c_1$. Thus, the set of the flag table is the 
results of OR operation among the three root sets for $a1$ ($a_0$, $b_0$, and $c_0$), $b_0$ ($b_0$), and $c_1$ ($b_0$, $c_0$, and $d_0$). 

\begin{figure}[H] \centering  
\includegraphics[height=6cm, width=0.7\columnwidth]{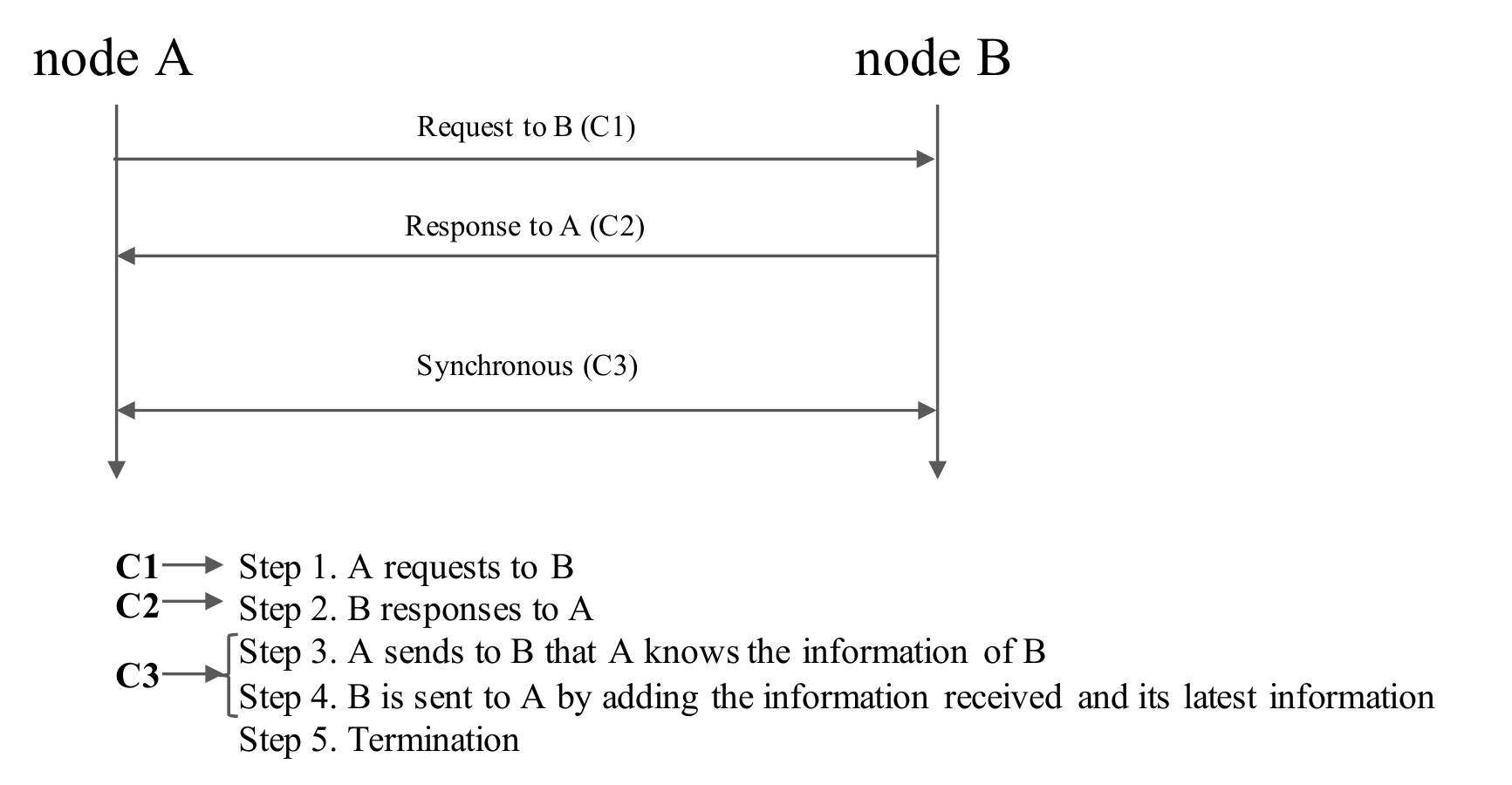}
\caption{An Example of Communication Process}
\label{fig:communication process}
\end{figure}

Figure~\ref{fig:communication process}, shows the communication process is divided into five steps for two nodes to create an event block. Simply, a node $A$ requests to $B$. then, $B$ responds to $A$ directly.

\subsection{Topological ordering of events using Lamport timestamps}
Every node has a physical clock and it needs physical time to create an event block. However, for consensus, Lachesis protocols relies on a logical clock for each node. For the purpose, we use \textit{"Lamport timestamps"} \cite{lamport1978time} to determine the time ordering between event blocks in a asynchronous distributed system.

\begin{figure}[H] \centering  
	\includegraphics[width=0.7\textwidth, width=1.0\columnwidth]{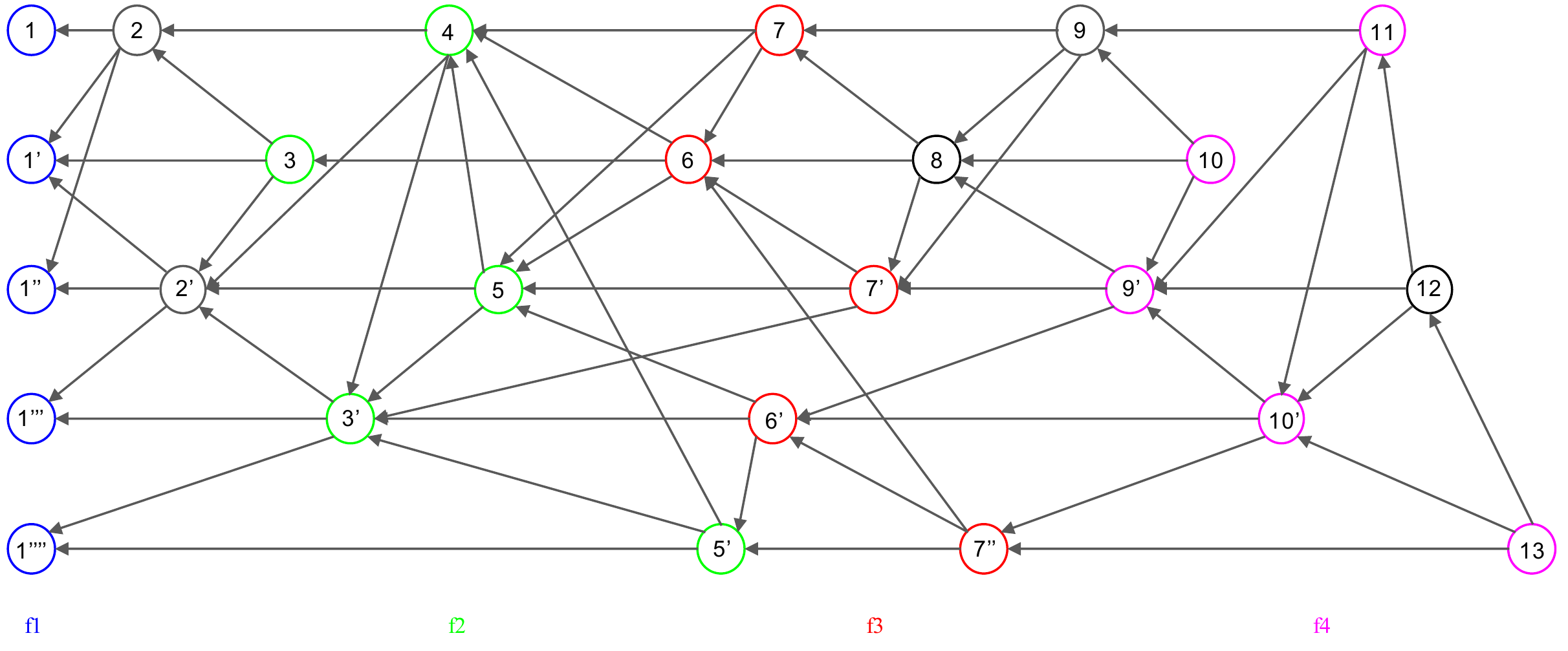}
	\caption{An example of Lamport timestamps}
	\label{fig:Lamport}
\end{figure}

The Lamport timestamps algorithm is as follows:

\begin{enumerate}
\item Each node increments its count value before creating an event block.
\item When sending a message include its count value, receiver should consider which sender’s message is received and increments its count value.
\item If current counter is less than or equal to the received count value from another node, then the count value of the recipient is updated.  
\item If current counter is greater than the received count value from another node, then the current count value is updated.
\end{enumerate}

We use the Lamport's algorithm to enforce a topological ordering of event blocks and use it in the Atropos selection algorithm. 

Since an event block is created based on logical time, the sequence between each event blocks is immediately determined. Because the Lamport timestamps algorithm gives a partial order of all events, the whole time ordering process can be used for Byzantine fault tolerance.

\subsection{Domination Relation}

Here, we introduce a new idea that extends the concept of domination.\\

For a vertex $v$ in a DAG $G$, let $G[v] = (V_v,E_v)$ denote an induced-subgraph of $G$ such that $V_v$ consists of all ancestors of $v$ including $v$, and $E_v$ is the induced edges of $V_v$ in $G$.\\

For a set $S$ of vertices, an event $v_d$  $\frac{2}{3}$-dominates $S$ if there are more than 2/3 of vertices $v_x$ in $S$ such that $v_d$ dominates $v_x$. 	
Recall that $R_1$ is the set of all leaf vertices in $G$. The $\frac{2}{3}$-dom set $D_0$ is the same as the set $R_1$.The $\frac{2}{3}$-dom set $D_i$ is defined as follows:	\\

A vertex $v_d$ belongs to a $\frac{2}{3}$-dom set within the graph $G[v_d]$, if $v_d$ $\frac{2}{3}$-dominates $R_1$.
	The $\frac{2}{3}$-dom set $D_k$ consists of all roots $d_i$ such that  $d_i$ $\not \in $ $D_i$, $\forall$ $i$ = 1..($k$-1), and $d_i$ $\frac{2}{3}$-dominates $D_{i-1}$.\\
	
The $\frac{2}{3}$-dom set $D_i$ is the same with the root set $R_i$, for all nodes.\\

\subsection{Examples of domination relation in DAGs}

This section gives several examples of DAGs and the domination relation between their event blocks.
\begin{figure}[h]
	\centering
	(a)\includegraphics[width=0.95\linewidth]{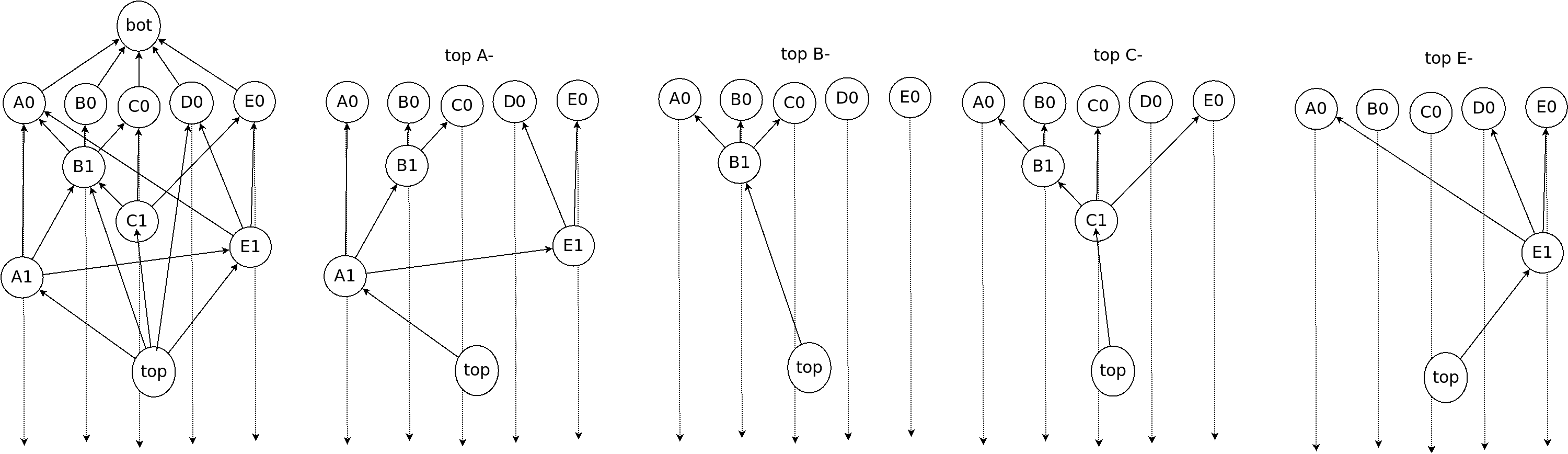}\\
	(b)\includegraphics[width=0.95\linewidth]{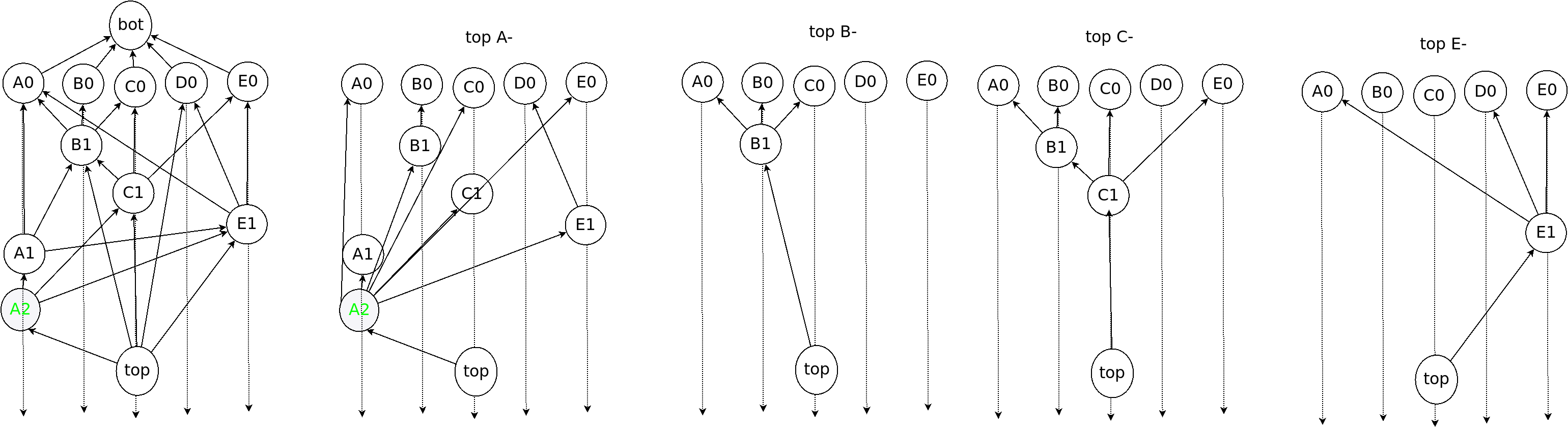}
	\caption{Examples of OPERA chain and dominator tree}
	\label{fig:domtrees1}
\end{figure}

Figure~\ref{fig:domtrees1} shows an examples of a DAG and dominator trees. 

\begin{figure}[H]
	\centering
	\includegraphics[width=0.5\linewidth]{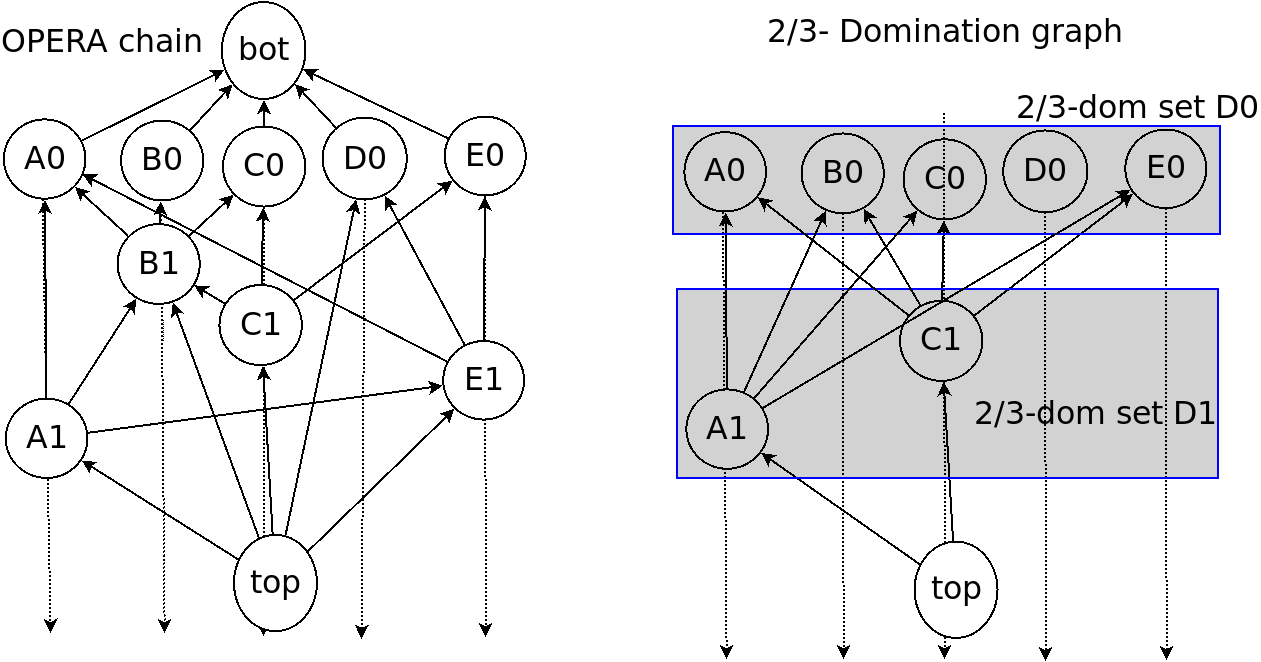}
	\caption{An example of OPERA chain and its 2/3 domination graph. The $\frac{2}{3}$-dom sets are shown in grey.}
	\label{fig:domset1}
\end{figure}

Figure~\ref{fig:domset1} depicts an example of a DAG and 2/3 dom sets.

\begin{figure}[H]
	\centering
	(a) \includegraphics[width=0.9\linewidth]{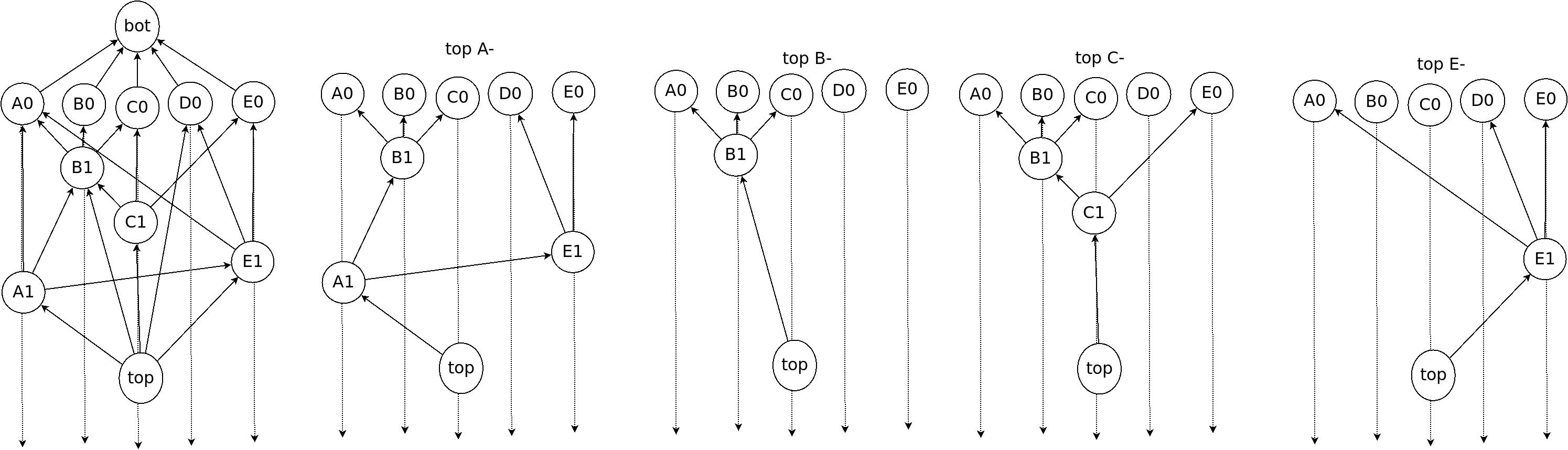}  \\
	(b) \includegraphics[width=0.9\linewidth]{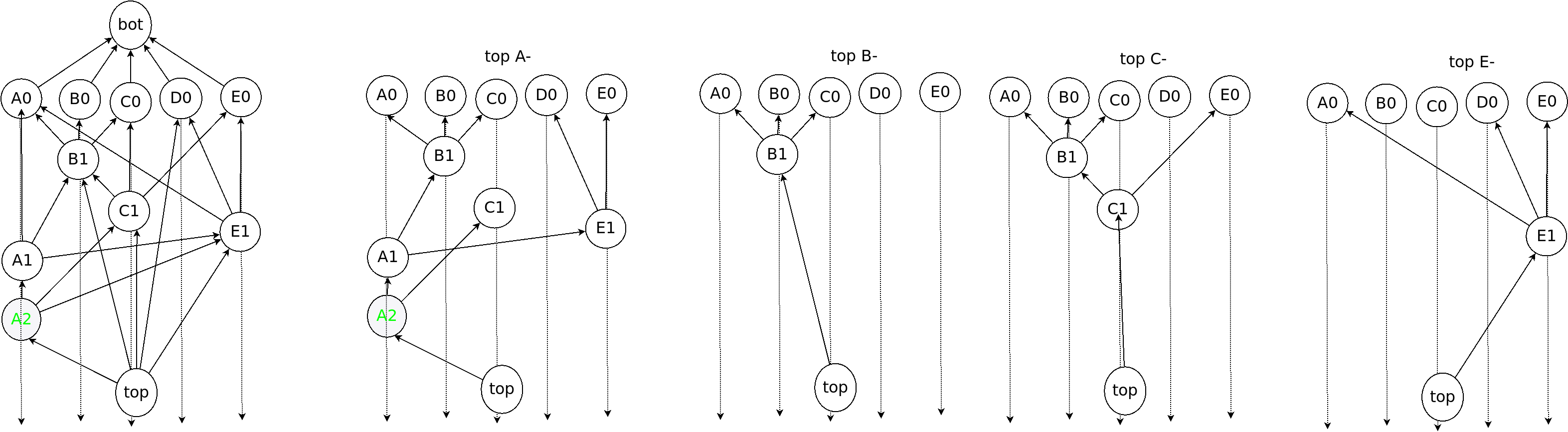} \\
	(c) \includegraphics[width=0.9\linewidth]{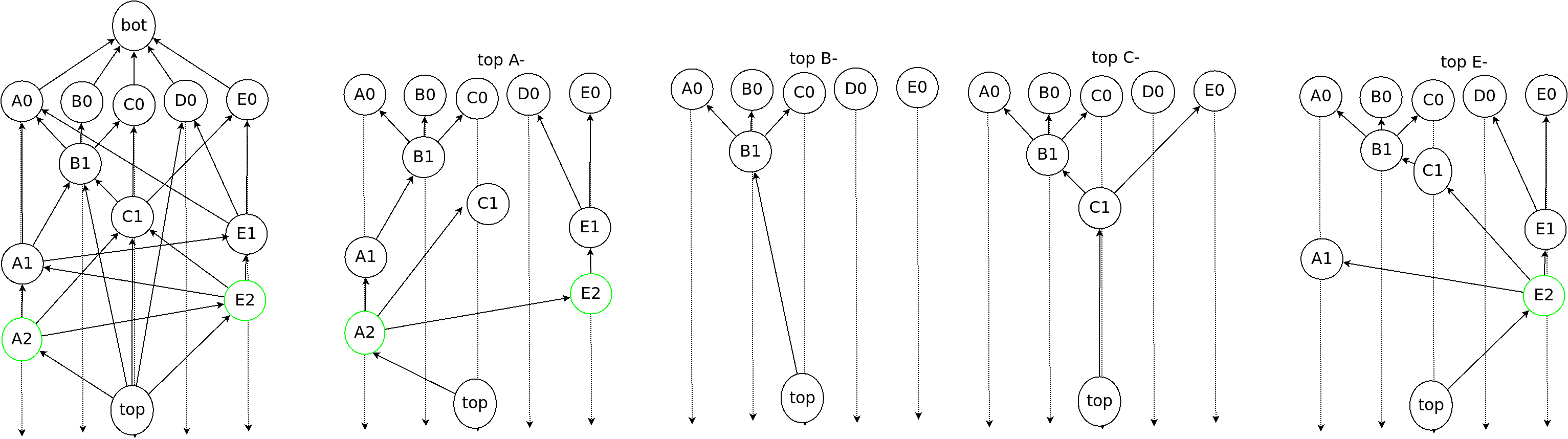}
	\caption{An example of dependency graphs on individual nodes. From (a)-(c) there is one new event block appended. There is no fork, the simplified dependency graphs become trees.}
	\label{fig:deptreesmod1}
\end{figure}

Figure~\ref{fig:deptreesmod1} shows an example an dependency graphs. On each row, the left most figure shows the latest OPERA chain. The left figures on each row depict the dependency graphs of each node, which are in their compact form. When no fork presents, each of the compact dependency graphs is a tree.

\newpage

\begin{figure}[H]	\centering
	(a)\includegraphics[width=0.9\linewidth]{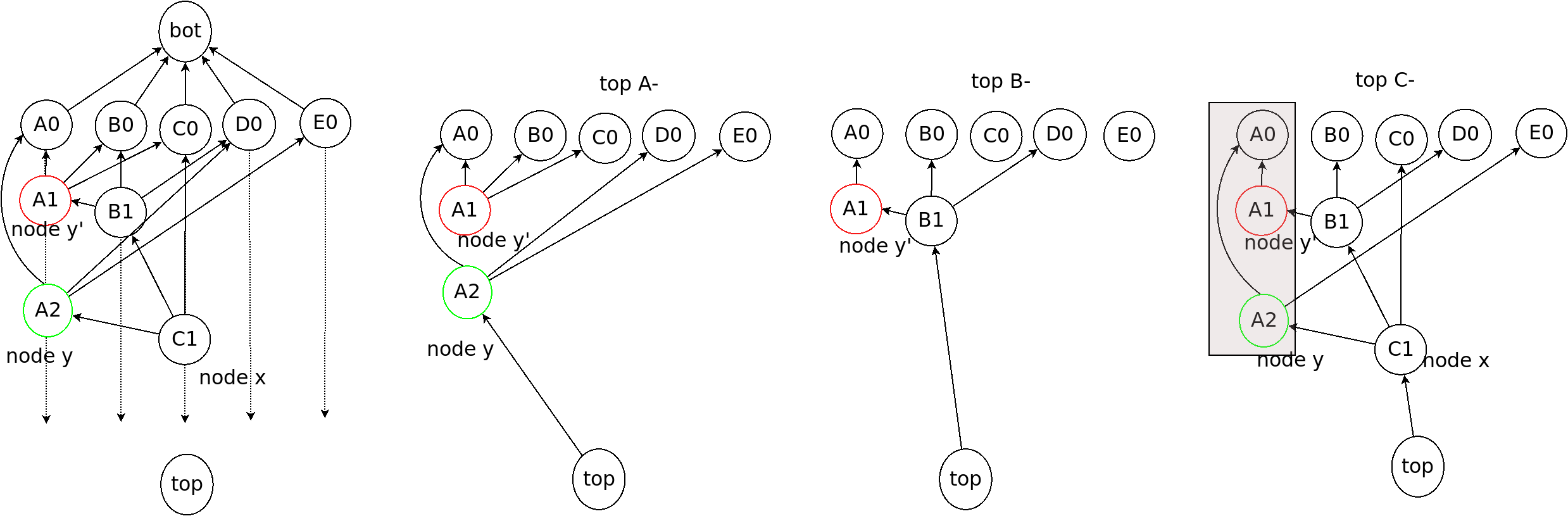}\\	(b)\includegraphics[width=0.9\linewidth]{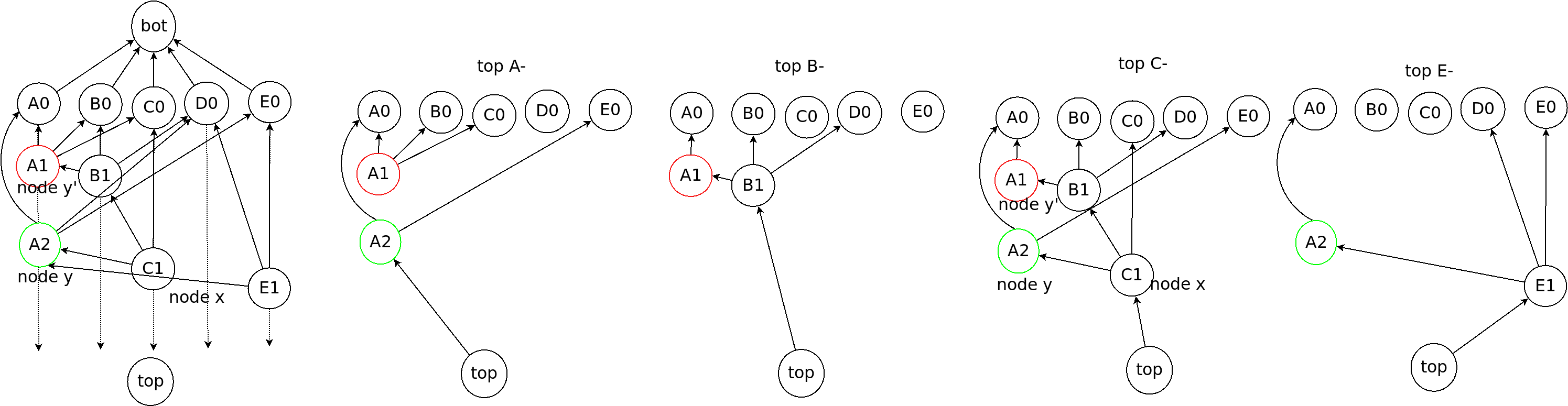}\\
	(c)\includegraphics[width=0.9\linewidth]{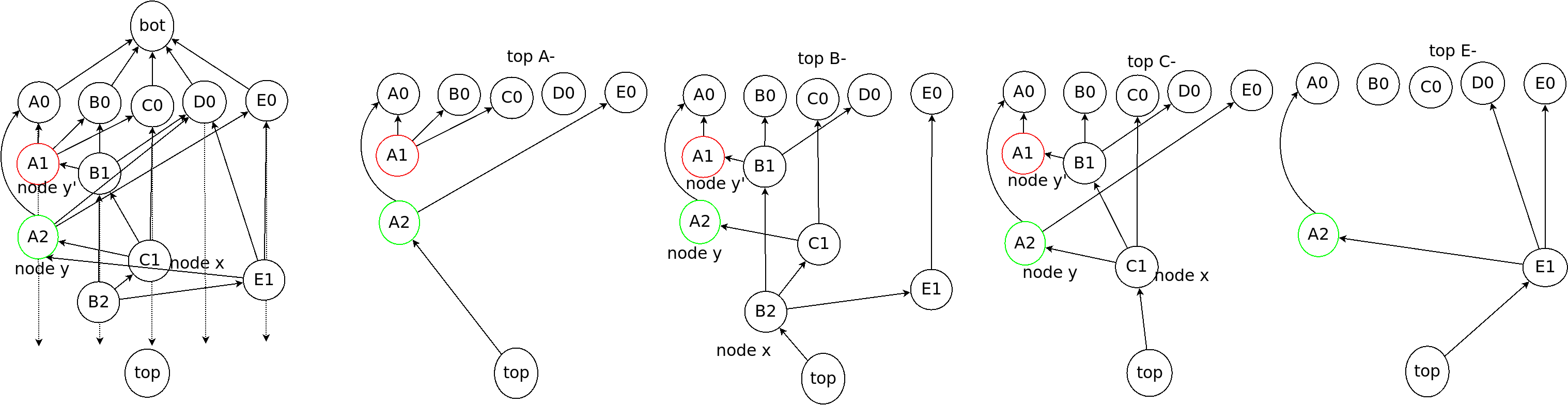}
	\caption{An example of a pair of fork events in an OPERA chain. The fork events are shown in red and green. The OPERA chains from (a) to (d) are different by adding one single event at a time.}
	\label{fig:deptreesfork1}
\end{figure}

Figure~\ref{fig:deptreesfork1} shows an example of a pair of fork events. Each row shows an OPERA chain (left most) and the compact dependency graphs on each node (right). The fork events are shown in red and green vertices

\subsection{Root Selection}
All nodes can create event blocks and an event block can be a root when satisfying specific conditions. Not all event blocks can be roots. First, the first created event blocks are themselves roots. These leaf event blocks form the first root set $R_{S_1}$ of the first frame $f_1$. If there are total $n$ nodes and these nodes create the event blocks, then the cardinality of the first root set $|R_{S_1}|$ is $n$. Second, if an event block $e$ can reach at least 2n/3 roots, then $e$ is called a root. This event $e$ does not belong to $R_{S1}$, but the next root set $R_{S_2}$ of the next frame $f_2$. Thus, excluding the first root set, the range of cardinality of root set $R_{S_k}$ is $2n/3 < |R_{S_k}| \leq n$. The event blocks including $R_{S_k}$ before $R_{S_{k+1}}$ is in the frame $f_k$. The roots in $R_{S_{k+1}}$ does not belong to the frame $f_k$. Those are included in the frame $f_{k+1}$ when a root belonging to $R_{S_{k+2}}$ occurs.  

We introduce the use of a flag table to quickly determine whether a new event block becomes a root. Each node maintains a flag table of the top event block.
Every event block that is newly created is assigned $k$ hashes for its $k$ referenced event blocks. We apply an $OR$ operation on each set in the flag table of the referenced event blocks.

\begin{figure} [H] \centering  
\includegraphics[height=6cm, width=1.0\columnwidth]{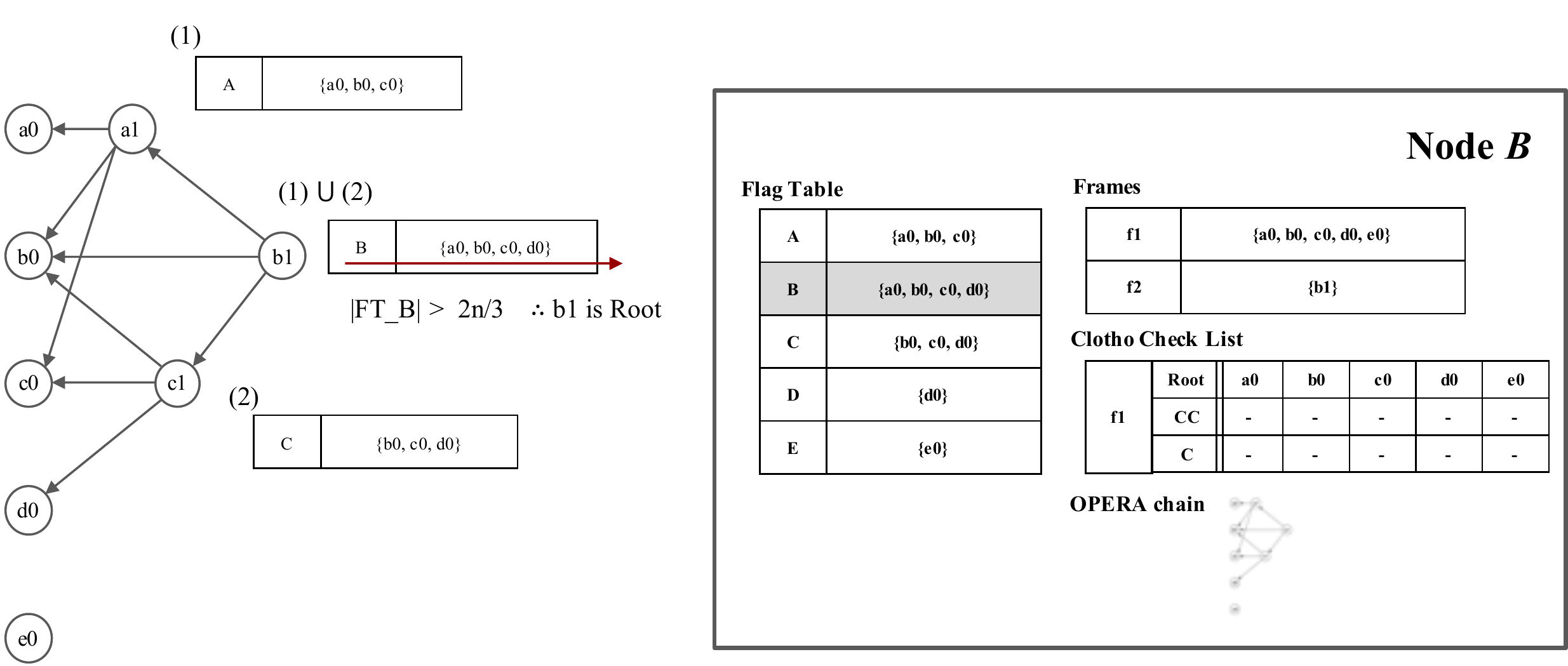}
\caption{An Example of Root selection}
\label{fig:ex_ft}
\end{figure}

Figure~\ref{fig:ex_ft} shows an example of how to use flag tables to determine a root. In this example, $b_1$ is the most recently created event block. We apply an $OR$ operation on each set of the flag tables for $b_1$'s $k$ referenced event blocks. The result is the flag table of $b_1$.  If the cardinality of the root set in $b_1$'s flag table is more than $2n/3$, $b_1$ is a root. In this example, the cardinality of the root set in $b_1$ is 4, which is greater than $2n/3$ ($n$=5). Thus, $b_1$ becomes root. In this example, $b_1$ is added to frame $f_2$ since $b_1$ becomes new root. 

The root selection algorithm is as follows:

\begin{enumerate}
\item The first event blocks are considered as roots. 
\item When a new event block is added in the OPERA chain (DAG), we check whether the event block is a root by applying an $OR$ operation on each set of the flag tables connected to the new event block. If the cardinality of the root set in the flag table for the new event block is more than 2n/3, the new event block becomes a root. 
\item When a new root appears on the OPERA chain, nodes update their frames. If one of the new event blocks becomes a root, all nodes that share the new event block add the hash value of the event block to their frames.  
\item The new root set is created if the cardinality of the previous root set $R_{S_p}$ is more than 2n/3 and the new event block can reach $2n/3$ roots in $R_{S_p}$.
\item When the new root set $R_{S_{k+1}}$ is created, the event blocks from the previous root set $R_{S_k}$ to before $R_{S_{k+1}}$ belong to the frame $f_k$.
\end{enumerate}

\subsection{Clotho Selection}

A Clotho is a root that satisfies the Clotho creation conditions. 
Clotho creation conditions are that more than 2n/3 nodes know the root and a root knows this information.

In order for a root $r$ in frame $f_i$ to become a Clotho, $r$ must be reached by more than n/3 roots in the frame $f_{i+1}$. Based on the definition of the root, each root reaches more than 2n/3 roots in previous frames. If more than n/3 roots in the frame $f_{i+1}$ can reach $r$, then $r$ is spread to all roots in the frame $f_{i+2}$. It means that all nodes know the existence of $r$. If we have any root in the frame $f_{i+3}$, a root knows that $r$ is spread to more than 2n/3 nodes. It satisfies Clotho creation conditions. 

\begin{figure} [H] \centering  
	\includegraphics[width=0.5\textwidth]{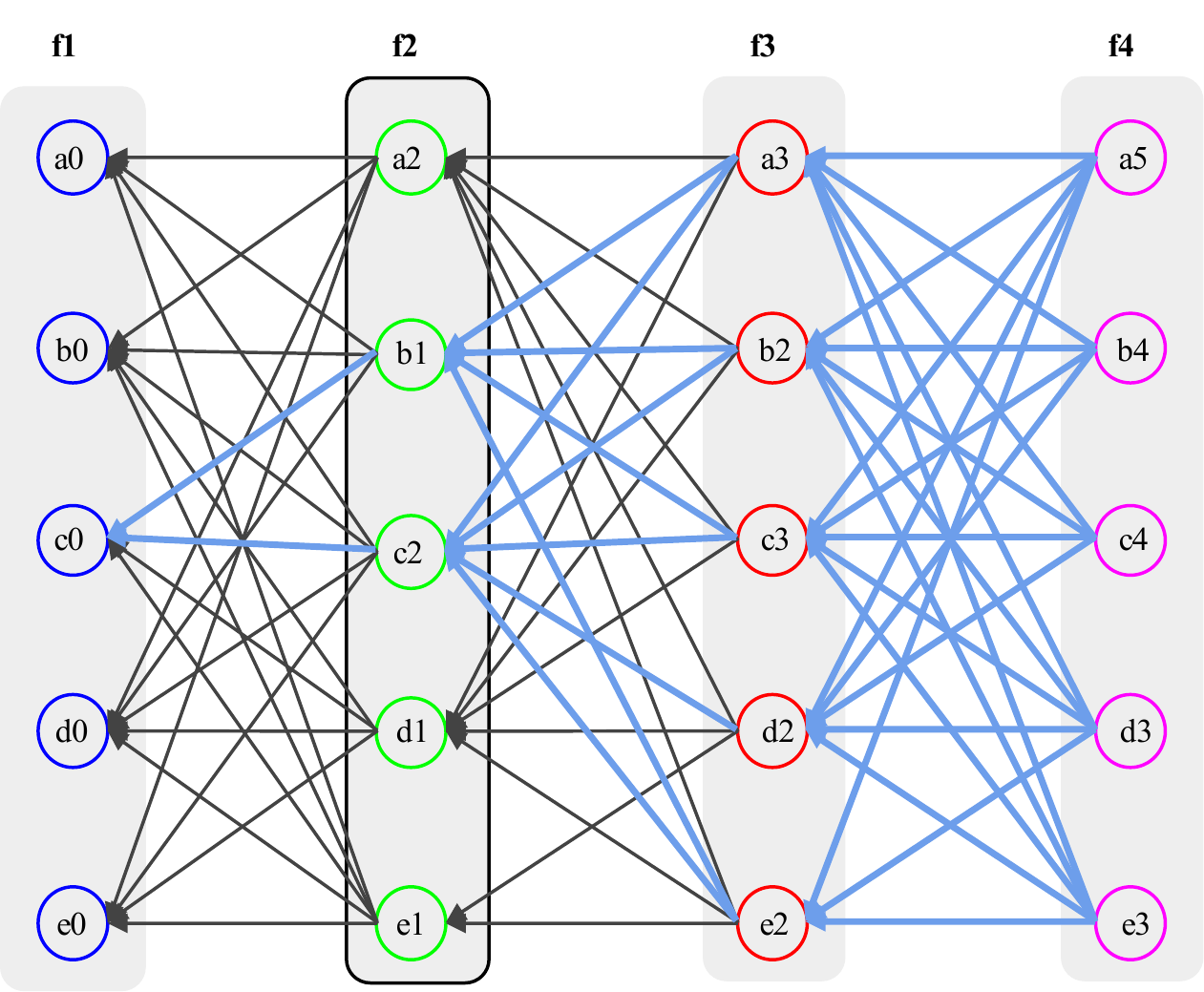}
	\caption{A verification of more than 2n/3 nodes}
	\label{fig:frame4}
\end{figure}

In the example in Figure~\ref{fig:frame4}, n is 5 and each circle indicates a root in a frame. Each arrow means one root can reach (happened-before) to the previous root. Each root has 4 or 5 arrows (out-degree) since n is 5 (more than 2n/3 $\geq$ 4). $b_1$ and $c_2$ in frame $f_2$ are roots that can reach $c_0$ in frame $f_1$. $d_1$ and $e_1$ also can reach $c_0$, but we only marked $b_1$ and $c_2$ (when n is 5, more than n/3 $\geq$ 2) since we show at least more than n/3 conditions in this example. And it was marked with a blue bold arrow (Namely, the roots that can reach root $c_0$ have the blue bold arrow). In this situation, an event block must be able to reach $b_1$ or $c_2$ in order to become a root in frame $f_3$ (In our example, n=5, more than n/3 $\geq$ 2, and more than 2n/3 $\geq$ 4. Thus, to be a root, either must be reached). All roots in frame $f_3$ reach $c_0$ in frame $f_1$.  

To be a root in frame $f_4$, an event block must reach more than 2n/3 roots in frame $f_3$ that can reach $c_0$. Therefore, if any of the root in frame $f_4$ exists, the root must have happened-before more than 2n/3 roots in frame $f_3$. Thus, the root of $f_4$ knows that $c_0$ is spread over more than 2n/3 of the entire nodes. Thus, we can select $c_0$ as Clotho.

\begin{figure}[H]\centering  
	\includegraphics[width=0.8\textwidth]{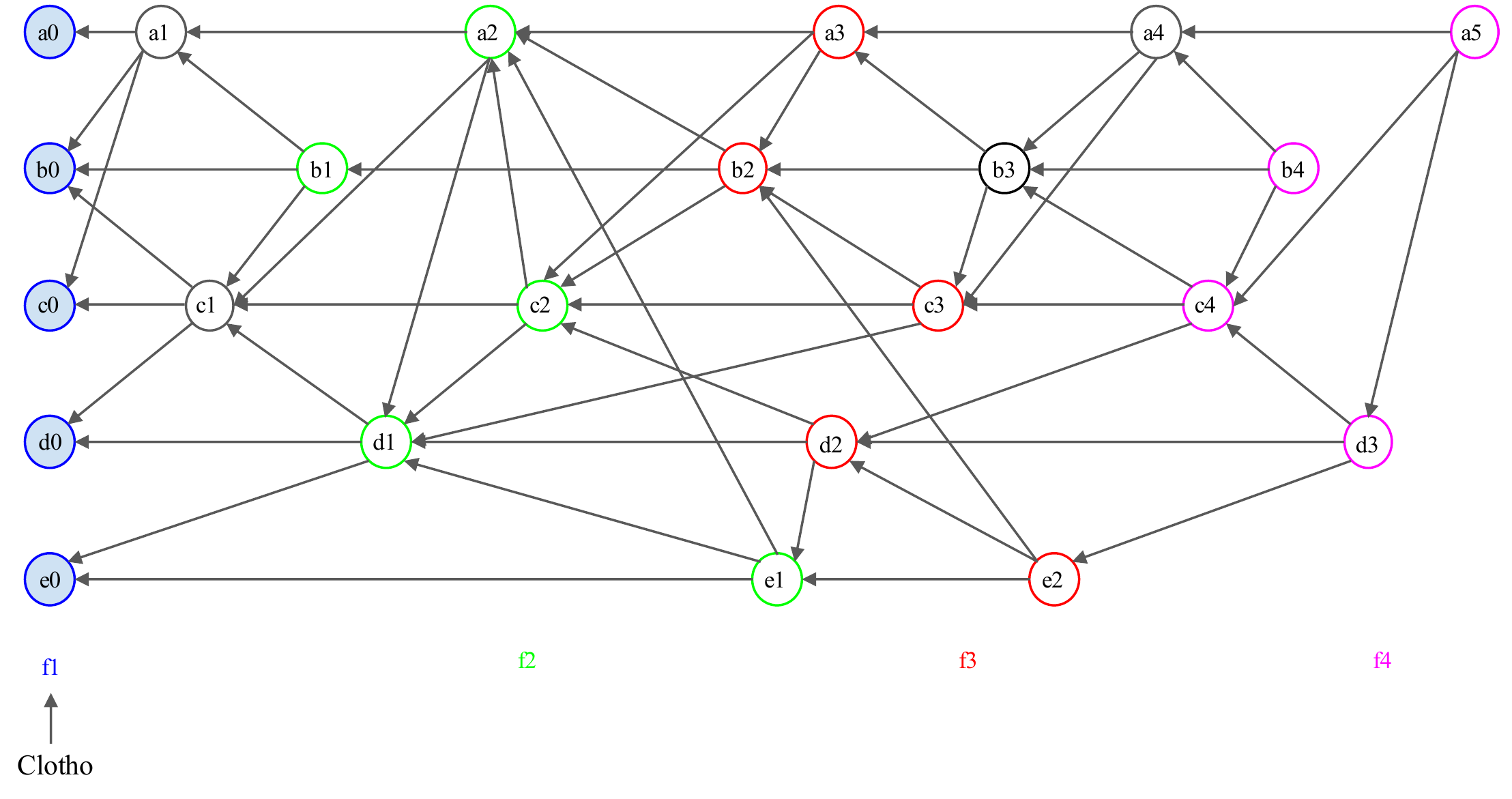}
	\caption{An Example of Clotho}
	\label{fig:Clotho}
\end{figure}

Figure~\ref{fig:Clotho} shows an example of a Clotho. In this example, all roots in the frame $f_1$ have happened-before more than n/3 roots in the frame $f_2$. We can select all roots in the frame $f_1$ as Clotho since the recent frame is $f_4$. 

\begin{algorithm} [H]
\caption{Clotho Selection}\label{al:acs}
\begin{algorithmic}[1]
	\Procedure{Clotho Selection}{}
	\State \textbf{Input}: a root $r$
	\For{$c$ $\in$ $frame(i-3, r)$} 
	\State$c.is\_clotho$ $\leftarrow$ $nil$ 
	\State$c.yes$ $\leftarrow$ 0
	\For{$c'$ $\in$ $frame(i-2, r)$} 
	\If{$c'$ has happened-before $c$} 
	\State c.yes $\leftarrow$ c.yes + 1
	\EndIf
	\EndFor
	\If{$c.yes > 2n/3$}
	\State $c.is\_clotho$ $\leftarrow$ $yes$
	\EndIf
	\EndFor
	\EndProcedure
\end{algorithmic}
\end{algorithm}

Algorithm~\ref{al:acs} shows the pseudo code for Clotho selection. The algorithm takes a root $r$ as input. 
Line 4 and 5 set $c.is\_clotho$ and $c.yes$ to $nil$ and 0 respectively. Line 6-8 checks whether any root $c'$ in $frame(i-3,r)$ has happened-before with the 2n/3 condition $c$ where $i$ is the current frame. In line 9-10, if the number of roots in $frame(i-2,r)$ which happened-before $c$ is more than $2n/3$, the root $c$ is set as a Clotho. The time complexity of Algorithm 3 is $O(n^{2})$, where $n$ is the number of nodes. 

\begin{figure}[H] \centering  
	\includegraphics[width=1.0\columnwidth]{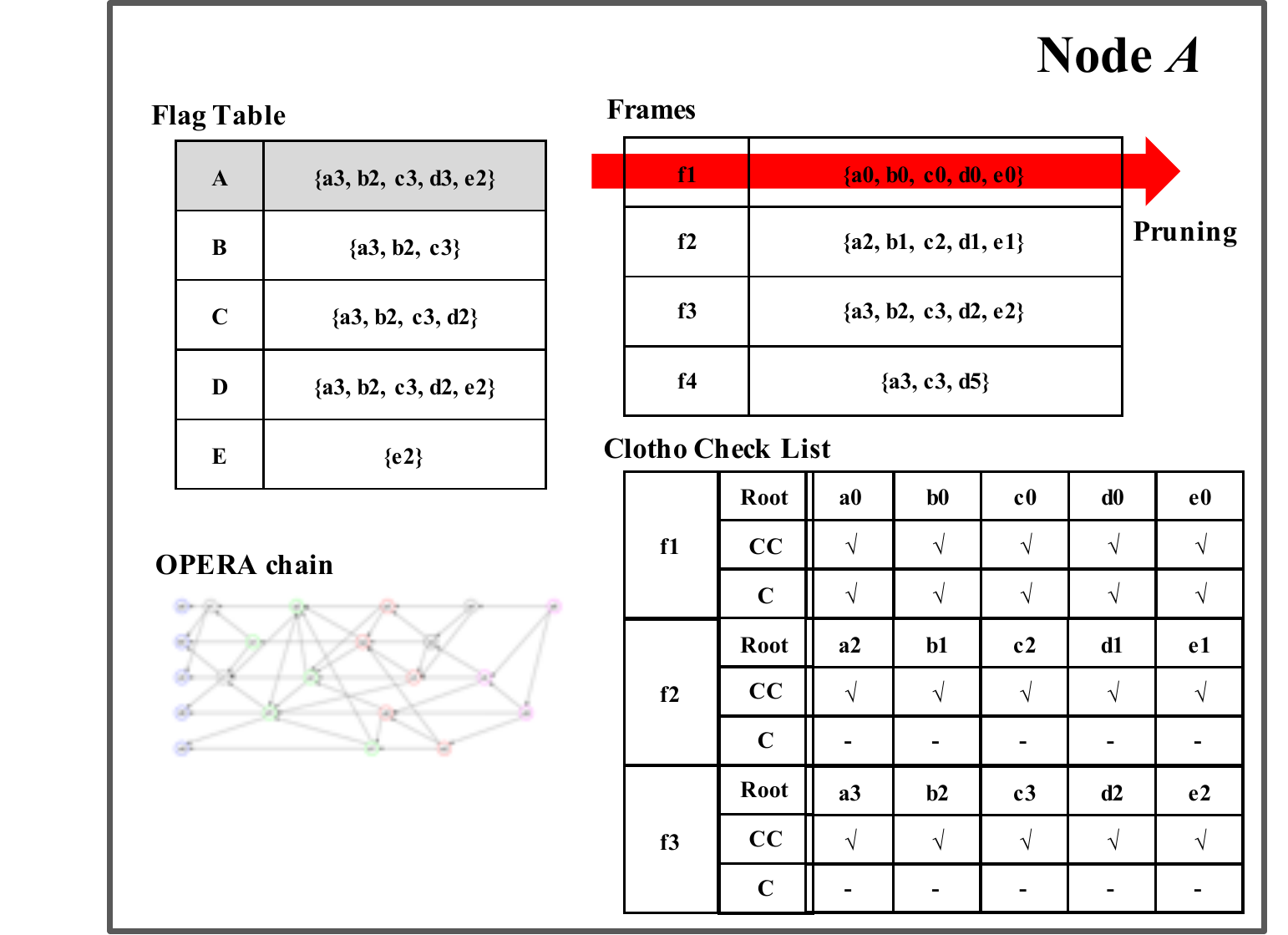}
	\caption{The node of \textit{A} when Clotho is selected}
	\label{fig:ClothoNodeA}	
\end{figure}

Figure~\ref{fig:ClothoNodeA} shows the state of node \textit{A} when a Clotho is selected. In this example, node \textit{A} knows all roots in the frame $f_1$ become Clotho's. Node \textit{A} prunes unnecessary information on its own structure. In this case, node \textit{A} prunes the root set in the frame $f_1$ since all roots in the frame $f_1$ become Clotho and the Clotho Check list stores the Clotho information. 

\subsection{Atropos Selection}

Atropos selection algorithm is the process in which the candidate time generated from Clotho selection is shared with other nodes, and each root re-selects candidate time repeatedly until all nodes have same candidate time for a Clotho. 

After a Clotho is nominated, each node then computes a candidate time of the Clotho. If there are more than two-thirds of the nodes that compute the same value for candidate time, that time value is recorded. Otherwise, each node reselects candidate time. By the reselection process, each node reaches time consensus for candidate time of Clotho as the OPERA chain (DAG) grows. The candidate time reaching the consensus is called Atropos consensus time. After Atropos consensus time is computed, the Clotho is nominated to Atropos and each node stores the hash value of Atropos and Atropos consensus time in Main-Chain (blockchain). The Main-chain is used for time order between event blocks. The proof of Atropos consensus time selection is shown in the section~\ref{se:proof}. 

\begin{figure}[H] \centering  
	\includegraphics[height=8cm, width=1.0\columnwidth]{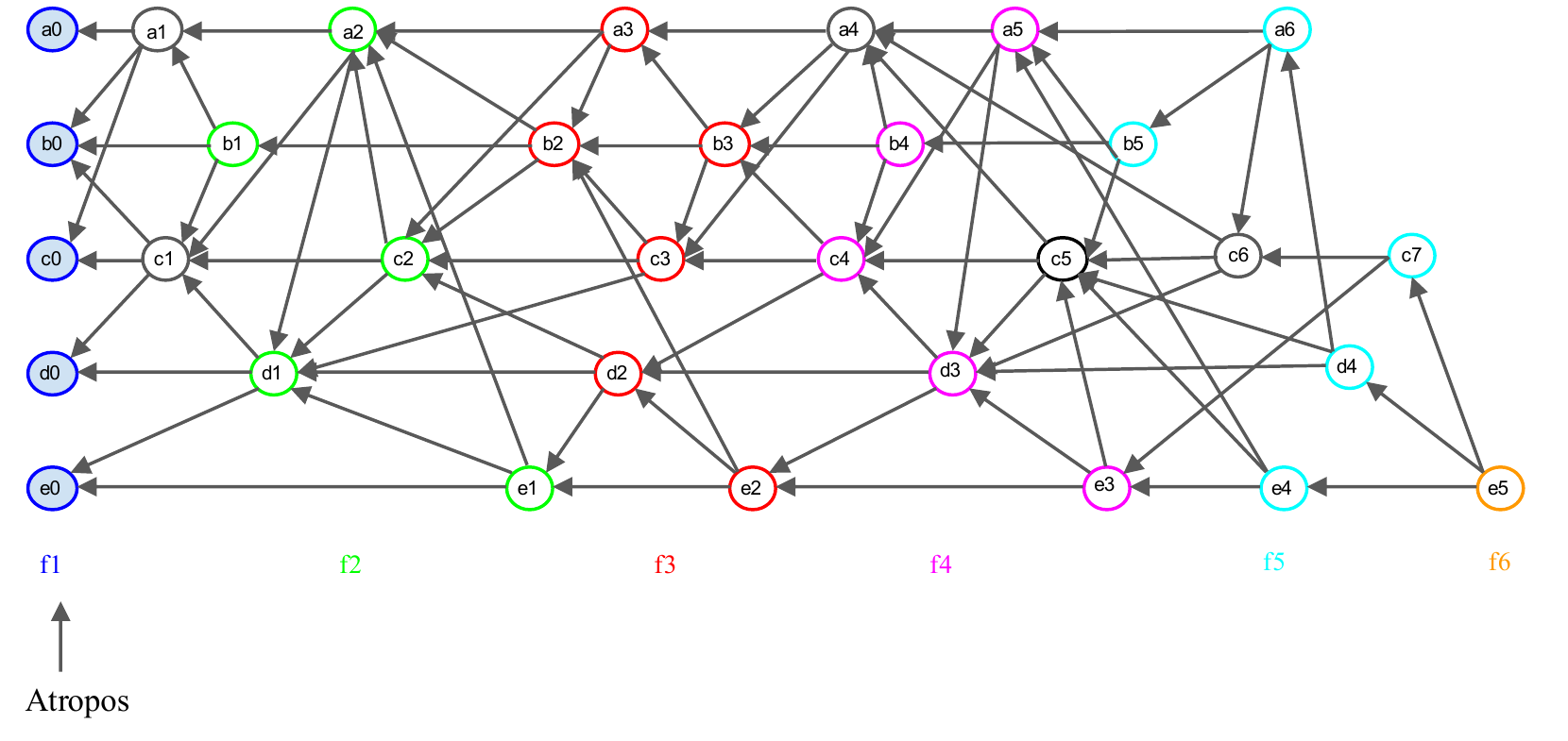}
	\caption{An Example of Atropos}
	\label{fig:Atropos}
\end{figure}

Figure~\ref{fig:Atropos} shows the example of Atropos selection. In  Figure~\ref{fig:Clotho}, all roots in the frame $f_1$ are selected as Clotho through the existence of roots in the frame $f_4$. Each root in the frame $f_5$ computes candidate time using timestamps of reachable roots in the frame $f_4$. Each root in the frame $f_5$ stores the candidate time to min-max value space. The root $r_6$ in the frame $f_6$ can reach more than 2n/3 roots in $f_5$ and $r_6$ can know the candidate time of the reachable roots that $f_5$ takes. If $r_6$ knows  the same candidate time than more than 2n/3, we select the candidate time as Atropos consensus time. Then all Clotho in the frame $f_1$ become Atropos. 

\newpage

\begin{figure}[H] \centering  
	\includegraphics[width=0.8\columnwidth]{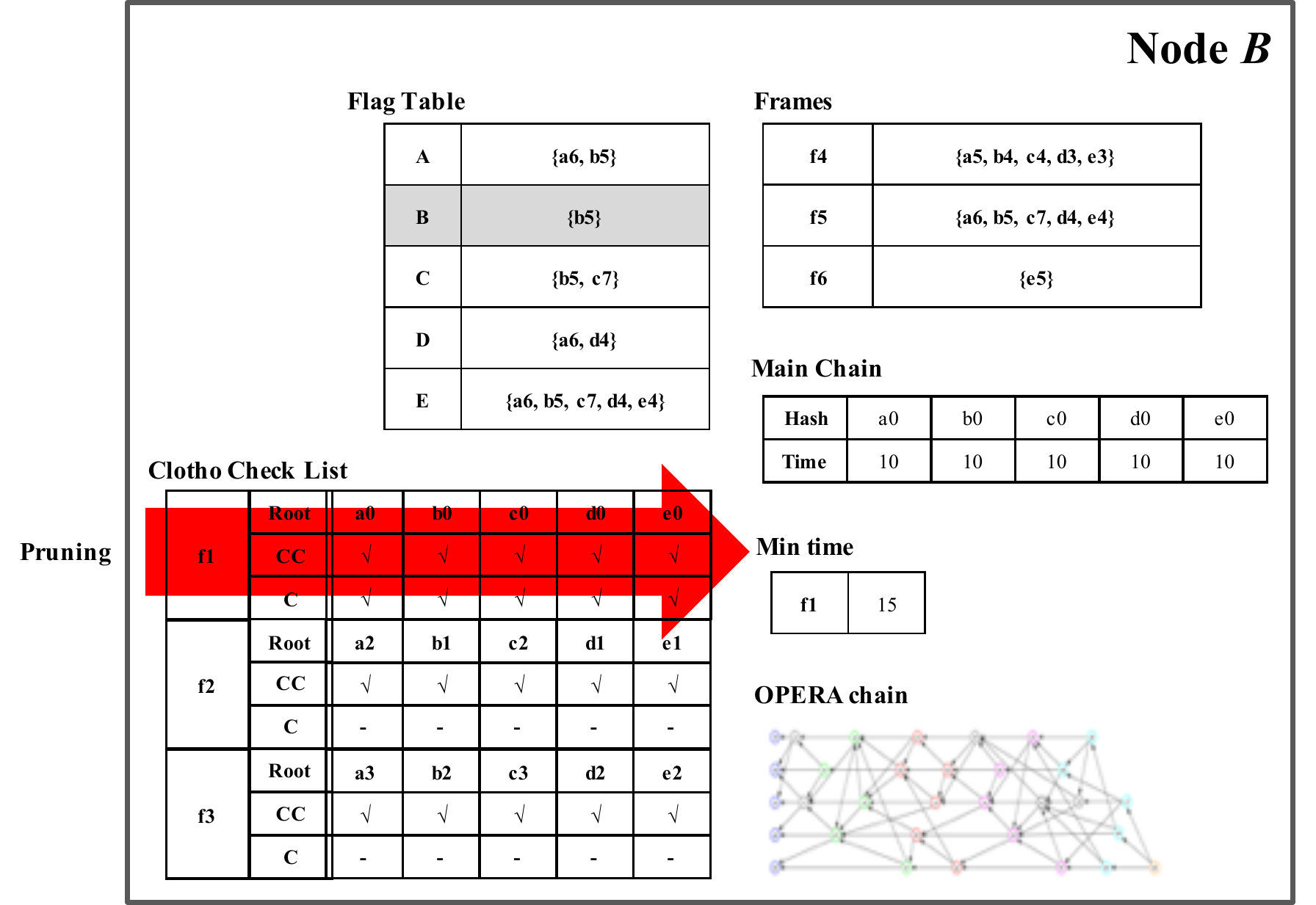}
	\caption{The node of \textit{B} when Atropos is selected}
	\label{fig:Atropos_Node}
\end{figure}

Figure~\ref{fig:Atropos_Node} shows the state of node \textit{B} when Atropos is selected. In this example, node \textit{B} knows all roots in the frame $f_1$ become Atropos. Then node \textit{B} prunes 
information of the frame $f_1$ in clotho check list since all roots in the frame $f_1$ become Atropos and main chain stores Atropos information. 

\begin{algorithm} [H]
\caption{Atropos Consensus Time Selection}\label{al:atc}
\begin{algorithmic}[1]
	\Procedure{Atropos Consensus Time Selection}{}
	\State \textbf{Input}: $c.Clotho$ in frame $f_i$
	\State$c.consensus\_time$ $\leftarrow$ $nil$
	\State$m$ $\leftarrow$ the index of the last frame $f_m$
	\For{d from 3 to (m-i)}
	\State $R$ $\leftarrow$ be the Root set $R_{S_{i+d}}$ in frame $f_{i+d}$
	\For{$r$ $\in$ $R$} 
	\If{d is 3}
	\If{$r$ confirms $c$ as Clotho}
	\State $r.time(c)$ $\leftarrow$ $r.lamport\_time$
	\EndIf
	\ElsIf{d $>$ 3}
	\State s $\leftarrow$ the set of Root in $f_{j-1}$ that $r$ can be  happened-before with 2n/3 condition
	\State t $\leftarrow$ RESELECTION(s, $c$)
	\State k $\leftarrow$ the number of root having $t$ in $s$
	\If{d mod $h$ $>$ 0}
	\If{$k$ $>$ 2n/3}
	\State $c.consensus\_time$ $\leftarrow$ $t$
	\State $r.time(c)$ $\leftarrow$ $t$
	\Else
	\State $r.time(c)$ $\leftarrow$ $t$
	\EndIf
	\Else
	\State $r.time(c)$ $\leftarrow$ the minimum value in $s$
	\EndIf
	\EndIf
	\EndFor
	\EndFor
	\EndProcedure
\end{algorithmic}
\end{algorithm}

\begin{algorithm} [H]
\caption{Consensus Time Reselection}\label{al:resel}
\begin{algorithmic}[1]
	\Function{Reselection}{} \funclabel{alg:a}
	\State \textbf{Input}: Root set $R$, and Clotho $c$
	\State \textbf{Output}: candidate time $t$
	\State $\tau$ $\leftarrow$ set of all $t_i = r.time(c)$ for all $r$ in $R$
	\State $D$  $\leftarrow$ set of tuples $(t_i, c_i)$ computed from $\tau$, where $c_i= count(t_i)$
	\State $max\_count$ $\leftarrow$ $max(c_i)$
	\State $t$ $\leftarrow$ $infinite$
	\For{tuple $(t_i, c_i)$ $\in$ $D$}
	\If{$max\_count$ $==$ $c_i$ $\&\&$ $t_i$ $<$ $t$}
	\State $t$ $\leftarrow$ $t_i$
	\EndIf
	\EndFor
	\State \textbf{return} $t$
	\EndFunction
\end{algorithmic}
\end{algorithm}

Algorithm~\ref{al:atc} and~\ref{al:resel} show pseudo code of Atropos consensus time selection and Consensus time reselection. In Algorithm~\ref{al:atc}, at line 6, $d$ saves the deference of relationship between root set of $c$ and $w$. Thus, line 8 means that $w$ is one of the elements in root set of the frame $f_{i+3}$, where the frame $f_i$ includes $c$. Line 10, each root in the frame $f_j$ selects own Lamport timestamp as candidate time of $c$ when they confirm root $c$ as Cltoho. In line 12, 13, and 14, $s$, $t$, and $k$ save the set of root that $w$ can be happened-before with 2n/3 condition $c$, the result of $RESELECTION$ function, and the number of root in $s$ having $t$. Line 15 is checking whether there is a difference as much as $h$ between $i$ and $j$ where $h$ is a constant value for minimum selection frame. Line 16-20 is checking whether more than two-thirds of root in the frame $f_{j-1}$ nominate the same candidate time. If two-thirds of root in the frame $f_{j-1}$ nominate the same candidate time, the root $c$ is assigned consensus time as $t$. Line 22 is minimum selection frame. In minimum selection frame, minimum value of candidate time is selected to reach byzantine agreement. Algorithm~\ref{al:resel} operates in the middle of Algorithm~\ref{al:atc}. In Algorithm~\ref{al:resel}, input is a root set $W$ and output is a reselected candidate time. Line 4-5 computes the frequencies of each candidate time from all the roots in $W$. In line 6-11, a candidate time which is smallest time that is the most nomitated. The time complexity of Algorithm~\ref{al:resel} is $O(n)$ where $n$ is the number of nodes. Since Algorithm~\ref{al:atc} includes Algorithm~\ref{al:resel}, the time complexity of Algorithm~\ref{al:atc} is $O(n^2)$ where $n$ is the number of nodes.

In the Atropos Consensus Time Selection algorithm, nodes reach consensus agreement about candidate time of a Clotho without additional communication (i.e., exchanging candidate time) with each other. Each node communicates with each other through the Lachesis protocol, the OPERA chain of all nodes grows up into same shape. This allows each node to know the candidate time of other nodes based on its OPERA chain and reach a consensus agreement. The proof that the agreement based on OPERA chain become agreement in action is shown in the section~\ref{se:proof}.

Atropos can be determined by the consensus time of each Clotho. It is an event block that is determined by finality and is non-modifiable. Furthermore, all event blocks can be reached from Atropos guarantee finality.

\newpage
\subsection{Lachesis Consensus}

\begin{figure}[H] \centering
\includegraphics[height=7cm, width=1.0\columnwidth]{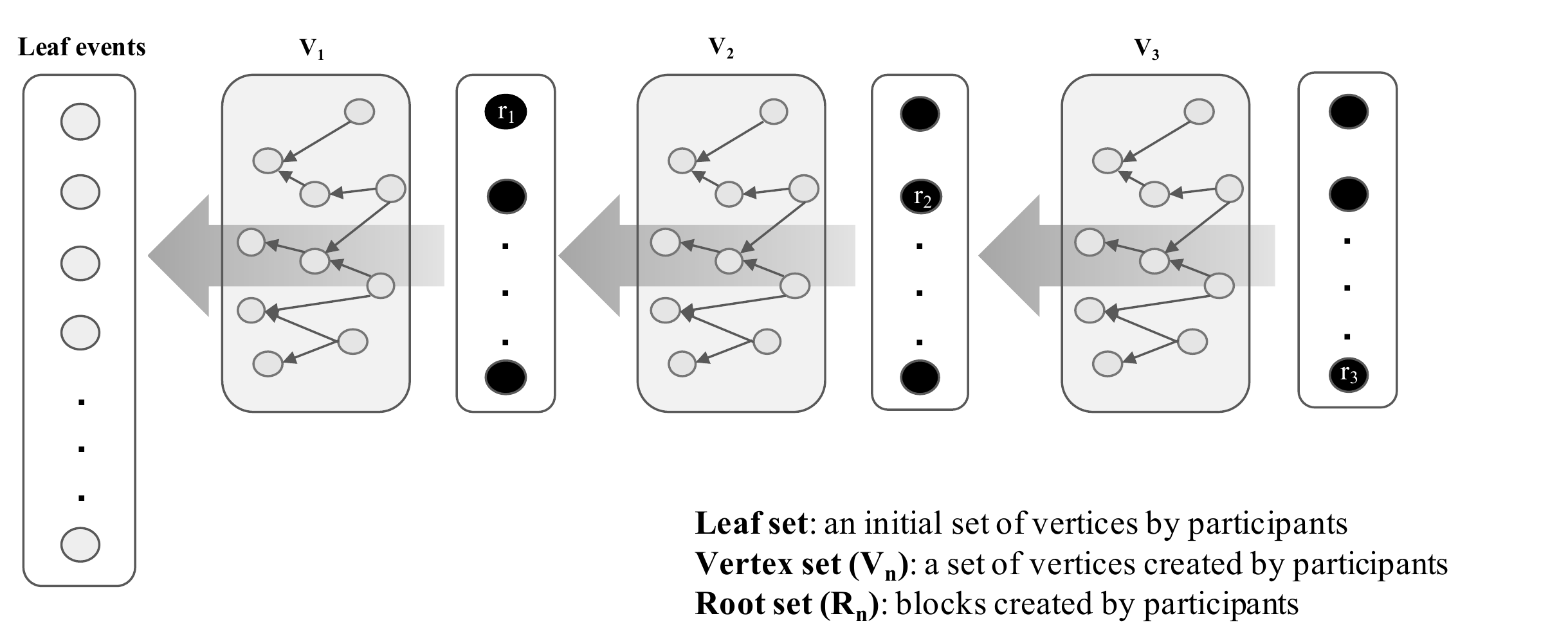}
\caption{Consensus Method in a DAG (combines chain with consensus process of pBFT)}
\label{fig:pBFTtoPath}
\end{figure}
Figure~\ref{fig:pBFTtoPath} illustrates how consensus is reached through the domination relation in the OPERA chain. In the figure, leaf set, denoted by $R_{s0}$, consists of the first event blocks created by individual participant nodes. $V$ is the set of event blocks that do not belong neither in $R_{s0}$ nor in any root set $R_{si}$.
Given a vertex $v$ in $V \cup R_{si}$, there exists a path from $v$ that can reach a leaf vertex $u$ in $R_{s0}$. 
Let $r_1$ and $r_2$ be root event blocks in root set $R_{s1}$ and $R_{s2}$, respectively.
$r_1$ is the block where a quorum or more blocks exist on a path that reaches a leaf event block. 
Every path from $r_1$ to a leaf vertex will contain a vertex in $V_1$. Thus, if there exists a vertex $r$ in $V_1$ such that $r$ is created by more than a quorum of participants, then $r$ is already included in $R_{s1}$. Likewise, $r_2$ is a block that can be reached for $R_{s1}$ including $r_1$ through blocks made by a quorum of participants.
For all leaf event blocks that could be reached by $r_1$, they are connected with more than quorum participants through the presence of $r_1$. The existence of the root $r_2$ shows that information of $r_1$ is connected with more than a quorum. 
This kind of a path search allows the chain to reach consensus in a similar manner as the pBFT consensus processes. It is essential to keep track of the blocks satisfying the pBFT consensus process for quicker path search; our OPERA chain and Main-chain keep track of these blocks.

The sequential order of each event block is an important aspect for Byzantine fault tolerance. In order to determine the pre-and-post sequence between all event blocks, we use Atropos consensus time, Lamport timestamp algorithm and the hash value of the event block.

\newpage

\begin{figure}[H] \centering  
	\includegraphics[width=1.0\textwidth]{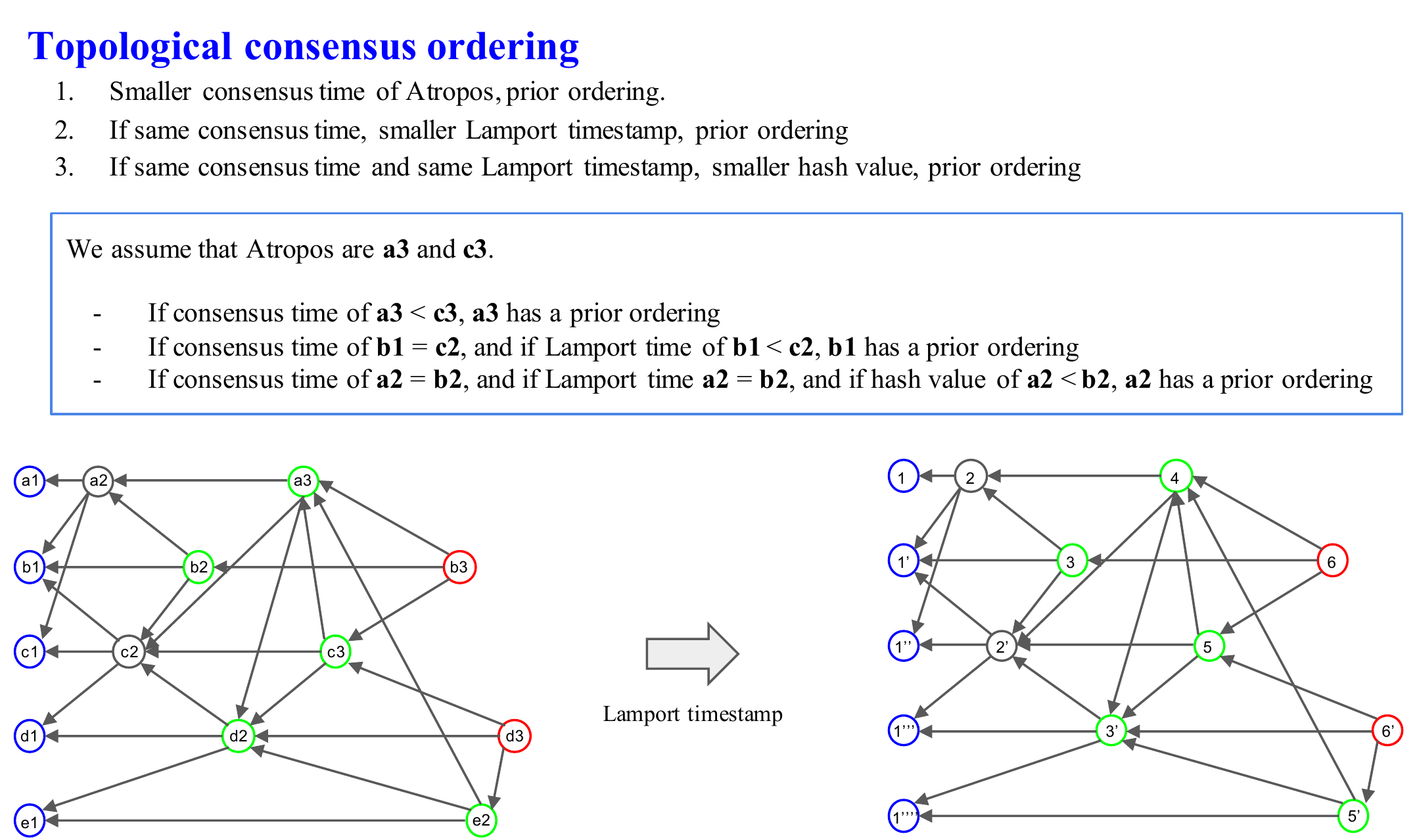}
	\caption{An example of topological consensus ordering}
	\label{fig:topological consensus ordering}
\end{figure}

First, when each node creates event blocks, they have a logical timestamp based on Lamport timestamp. This means that they have a partial ordering between the relevant event blocks. 
Each Clotho has consensus time to the Atropos. This consensus time is computed based on the logical time nominated from other nodes at the time of the 2n/3 agreement.

Each event block is based on the following three rules to reach an agreement:

\begin{enumerate}
\item If there are more than one Atropos with different times on the same frame, the event block with smaller consensus time has higher priority.
\item If there are more than one Atropos having any of the same consensus time on the same frame, determine the order based on the own logical time from Lamport timestamp.
\item When there are more than one Atropos having the same consensus time, if the local logical time is same, a smaller hash value is given priority through hash function.
\end{enumerate}

\newpage

\begin{figure}[H] \centering  
	\includegraphics[width=0.9\textwidth]{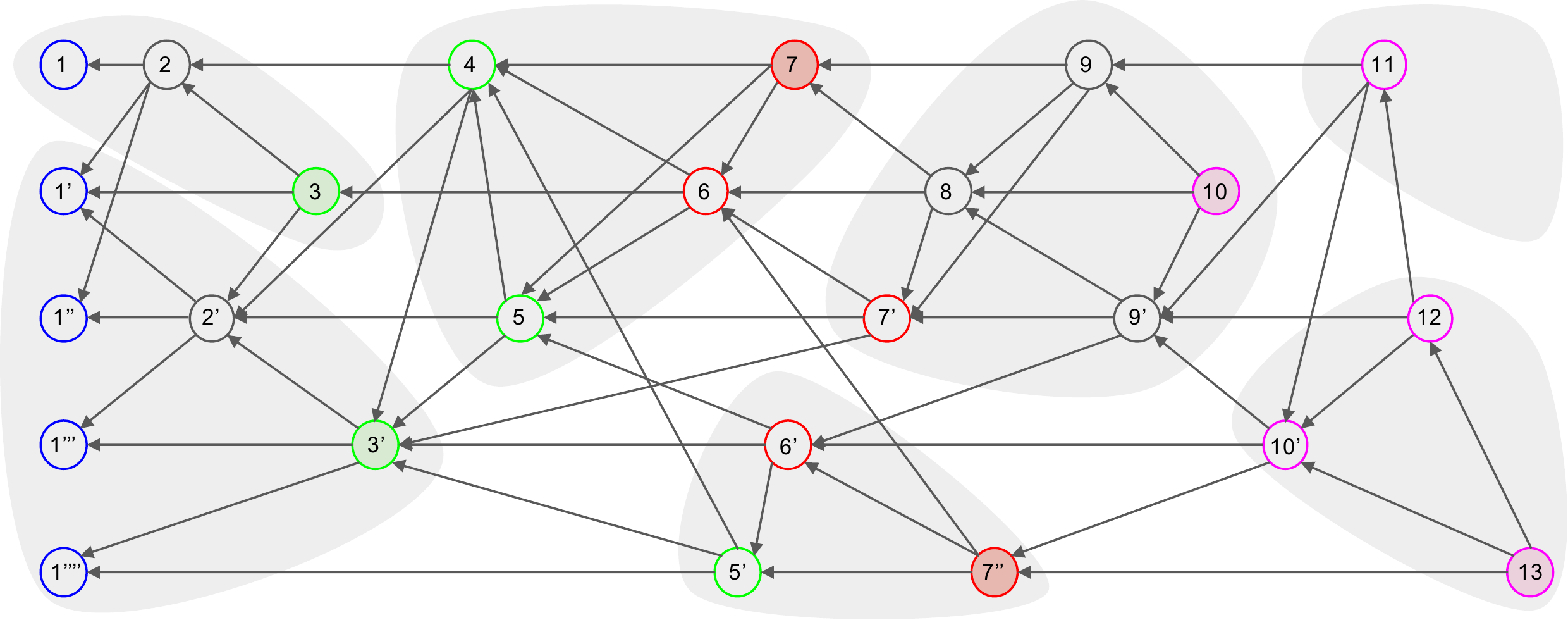}
	\caption{An Example of time ordering of event blocks in OPERA chain}
	\label{fig:sequence of operachain}
\end{figure}

Figure~\ref{fig:sequence of operachain} shows the part of OPERA chain in which the final consensus order is determined based on these 3 rules. The number represented by each event block is a logical time based on Lamport timestamp. Final topological consensus order containing the event blocks are based on agreement from the apropos. Based on each Atropos, they will have different colors depending on their range.

\subsection{Detecting Forks}

\dfnn{Fork}{A pair of events ($v_x$, $v_y$) is a fork if $v_x$ and $v_y$ have the same creator, but neither is a self-ancestor of the other. Denoted by $v_x \efork v_y$.}

For example, let $v_z$ be an event in node $n_1$ and two child events $v_x$ and $v_y$ of $v_z$. if $v_x \eself v_z$, $v_y \eself v_z$, $v_x \not \eself v_y$, $v_y \not \eself v_z$, then ($v_x$, $v_y$) is a fork.
The fork relation is symmetric; that is $v_x \efork v_y$ iff $v_y \efork v_x$.

By definition, ($v_x$, $v_y$) is a fork if $cr(v_x)=cr(v_y)$, $v_x \not \eancestor v_y$ and $v_y \not \eancestor v_x$. Using Happened-Before, the second part means $v_x \not \rightarrow v_y$ and $v_y \not \rightarrow v_x$. By definition of concurrent, we get $v_x \concur v_y$.

\begin{lem}
If there is a fork $v_x \efork  v_y$, then $v_x$ and $v_y$ cannot both be roots on honest nodes.
\end{lem}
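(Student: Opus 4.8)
The plan is to argue by contradiction: assume $v_x$ and $v_y$ are both roots occurring in the OPERA chains of honest nodes. Since $v_x \efork v_y$, they share a creator $n_i$ and, as observed just above, $v_x \concur v_y$; in particular $v_x \ne v_y$. I would first record a small structural fact: the leaf (first) event of $n_i$ is a self-ancestor of every event of $n_i$, so a fork pair can never contain a leaf event. Because the only roots of frame $f_1$ are leaf events, it follows that $v_x$ and $v_y$ must be roots of frames $f_a$ and $f_b$ with $a,b \ge 2$. Using the earlier facts that root sets and frame numbers are consistent across all honest nodes (the ``Consistent root'' paragraph), it then suffices to derive a contradiction from the existence of one honest node $n_h$ whose local DAG certifies both $v_x$ and $v_y$ as roots.

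The core of the argument would be a quorum-intersection step built on the root condition. Being a root of frame $f_a$ with $a \ge 2$, $v_x$ is happened-before by more than $2n/3$ of the roots of frame $f_{a-1}$, and symmetrically $v_y$ is happened-before by more than $2n/3$ of the roots of $f_{b-1}$. Invoking an inductive hypothesis on the frame index --- essentially that no fork pair is contained among the roots of frames $f_j$ with $j$ smaller than the one under consideration --- one wants to conclude that each of these supermajorities spans more than $2n/3$ distinct creators. Intersecting two such sets inside a population of $n$ nodes of which fewer than $n/3$ are Byzantine then isolates at least one honest node that has both branches of $n_i$'s fork in the causal history it relies on. One would then appeal to honest behaviour --- an honest node references only the single top event of each peer when it creates a block, keeps a locally consistent DAG, and does not extend its chain through both sides of a fork it has witnessed --- to conclude that this honest node cannot in fact have certified both $v_x$ and $v_y$, contradicting its choice.

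I expect two points to be the real obstacles. The first is soundness of the induction: the claim ``the roots of the earlier frames contain no fork pair (and hence carry enough distinct creators)'' is in effect this very lemma, so the induction on frame index must be set up so that proving the statement for a frame $f_k$ uses it only for frames $f_j$ with $j<k$, and the mixed case $a \ne b$ has to be folded into this scheme rather than handled separately; pinning down the exact strengthening that makes the quorum counting go through (roots per node per frame, double counting, etc.) is delicate. The second, and harder, obstacle is making the appeal to honest behaviour rigorous from the primitives the paper actually provides --- top-event referencing, the rule that a node accepts an event only once it holds all $k$ of the event's references, and consistency of local DAGs --- i.e. spelling out precisely why an honest node cannot let both members of a detected fork accrue root status. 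The arithmetic of the quorum count when $n$ is not a multiple of $3$ is routine and I would dispatch it with the usual floor/ceiling estimates.
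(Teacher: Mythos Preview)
Your core idea---quorum intersection followed by an appeal to honest behaviour---is exactly what the paper does, but you have wrapped it in machinery the paper never invokes. The paper's argument is a one-paragraph direct contradiction: it asserts that a root reaches events \emph{created by} more than $2n/3$ distinct nodes (not ``more than $2n/3$ roots of the previous frame''), intersects the two creator sets to find more than $n/3$ common nodes, picks an honest one among them, and says that honest node would not permit the fork. There is no induction on frame index, no base case about leaves, no reduction to a single honest observer via root-consistency, and no worry about whether the previous root set has distinct creators---by phrasing the reach in terms of creators rather than previous-frame roots, the paper simply sidesteps your ``first obstacle'' entirely. Your ``second obstacle'' (making the honest-behaviour appeal rigorous from the stated primitives) is not addressed any more carefully in the paper than in your sketch: the paper just asserts ``$n_m$ is honest, $n_m$ does not allow the fork.'' So your proposal is not wrong, but it is considerably more elaborate than what the paper actually does, and the induction you are anxious about is unnecessary once you count creators directly.
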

Here, we show a proof by contradiction. Any honest node cannot accept a fork so $v_x$ and $v_y$ cannot be roots on the same honest node. Now we prove a more general case. Suppose that both $v_x$ is a root of $n_x$ and $v_y$ is root of $n_y$, where $n_x$ and $n_y$ are honest nodes. Since $v_x$ is a root, it reached events created by more than 2/3 of member nodes. Similarly, $v_y$ is a root, it reached events created by more than 2/3 of member nodes. Thus, there must be an overlap of more than $n$/3 members of those events in both sets. Since we assume less than $n$/3 members are not honest, so there must be at least one honest member in the overlap set. Let $n_m$ be such an honest member. Because $n_m$ is honest, $n_m$ does not allow the fork.

\section{Conclusion}\label{se:con}
We further optimize the OPERA chain and Main-chain for faster consensus. By using Lamport timestamps and domination relation, the topological ordering of event blocks in OPERA chain and Main chain is more intuitive and reliable in distributed system. 

We have presented a formal semantics for Lachesis protocol in Section~\ref{se:lca}.
Our formal proof of pBFT for our Lachesis protocol is given in Section~\ref{se:proof}.  Our work is the first that studies such concurrent common knowledge sematics~\cite{cck92} and dominator relationships in DAG-based protocols.

\newpage
\section{Appendix}\label{se:appendix}

\subsection{Preliminaries}
The history of a Lachesis protocol can be represented by a directed acyclic graph $G=(V, E)$, where $V$ is a set of vertices and $E$ is a set of edges. Each vertex in a row (node) represents an event. Time flows left-to-right of the graph, so left vertices represent earlier events in history.
A path $p$ in $G$ is a sequence  of vertices ($v_1$, $v_2$, $\dots$, $v_k$) by following the edges in $E$.
Let $v_c$ be a vertex in $G$.
A vertex $v_p$ is the \emph{parent} of $v_c$ if there is an edge from $v_p$ to $v_c$.
A vertex $v_a$ is an \emph{ancestor} of $v_c$ if there is a path from $v_a$ to $v_c$.

\begin{defn}[node]
	Each machine that participates in the Lachesis protocol is called a node. \end{defn}

Let $n$ denote the total number of nodes.

\begin{defn}[event block]
	Each node can create event blocks, send (receive) messages to (from) other nodes.
\end{defn}

\begin{defn}[vertex]
	An event block is a vertex of the OPERA chain.
\end{defn}

Suppose a node $n_i$ creates an event $v_c$ after an event $v_s$ in $n_i$.  Each event block has exactly $k$ references. One of the references is self-reference, and the other $k$-1 references point to the top events of $n_i$'s $k$-1 peer nodes.

\begin{defn}[peer node]
	A node $n_i$ has $k$ peer nodes.
\end{defn}

\begin{defn}[top event]
	An event $v$ is a top event of a node $n_i$ if there is no other event in $n_i$ referencing $v$.
\end{defn}

\begin{defn}[self-ref]
	An event $v_s$ is called ``self-ref" of event $v_c$, if the self-ref hash of $v_c$ points to the event $v_s$. Denoted by $v_c \eself v_s$.
\end{defn}

\begin{defn}[ref]
	An event $v_r$ is called ``ref" of event $v_c$ if the reference hash of $v_c$ points to the event $v_r$. Denoted by $v_c \eref v_r$.
\end{defn}

For simplicity, we can use $\erefz$ to denote a reference relationship (either $\eref$ or $\eself$).

\begin{defn}[self-ancestor]
	An event block $v_a$ is self-ancestor of an event block $v_c$ if there is a sequence of events such that $v_c \eself v_1 \eself \dots \eself v_m \eself v_a $. Denoted by $v_c \eselfancestor v_a$.
\end{defn}

\begin{defn}[ancestor]
	An event block $v_a$ is an ancestor of an event block $v_c$ if there is a sequence of events such that $v_c \erefz v_1 \erefz \dots \erefz v_m \erefz v_a $. Denoted by $v_c \eancestor v_a$.
\end{defn}

For simplicity, we simply use $v_c \eancestor v_s$ to refer both ancestor and self-ancestor relationship, unless we need to distinguish the two cases.

\begin{defn}[OPERA chain]
	OPERA chain is a DAG graph $G = (V, E)$ consisting of $V$ vertices and $E$ edges. Each vertex $v_i \in V$ is an event block. An edge $(v_i,v_j) \in E$ refers to a hashing reference from $v_i$ to $v_j$; that is, $v_i \erefz v_j$.
\end{defn}

\subsubsection{Domination relation}
Then we define the domination relation for event blocks. To begin with, we first introduce pseudo vertices, \emph{top} and \emph{bot}, of the DAG OPERA chain $G$.
\begin{defn}[pseudo top]
	A pseudo vertex, called top, is the parent of all top event blocks. Denoted by $\top$.
\end{defn}
\begin{defn}[pseudo bottom]
	A pseudo vertex, called bottom, is the child of all leaf event blocks. Denoted by $\bot$.
\end{defn}

With the pseudo vertices, we have $\bot$ happened before all event blocks. Also all event blocks happened before $\top$. That is, for all event $v_i$, $\bot \hbefore v_i$ and $v_i \hbefore \top$.

\begin{defn}[dom]
	An event $v_d$ dominates an event $v_x$ if every path from $\top$ to $v_x$ must go through $v_d$. Denoted by $v_d \dom v_x$.
\end{defn}

\begin{defn}[strict dom]
	An event $v_d$ strictly dominates an event $v_x$ if $v_d \dom v_x$ and $v_d$ does not equal $v_x$. Denoted by $v_d \sdom v_x$.
\end{defn}

\begin{defn}[domfront]
	A vertex $v_d$ is said ``domfront'' a vertex $v_x$ if  $v_d$ dominates an immediate predecessor of $v_x$, but $v_d$ does not strictly dominate $v_x$. Denoted by $v_d \domf v_x$.
\end{defn}

\begin{defn}[dominance frontier]
	The dominance frontier of a vertex $v_d$ is the set of all nodes $v_x$ such that $v_d \domf v_x$. Denoted by $DF(v_d)$.
\end{defn}

From the above definitions of domfront and dominance frontier, the following holds. If $v_d \domf v_x$, then $v_x \in DF(v_d)$.

Here, we introduce a new idea that extends the concept domination.

\begin{defn}[subgraph] 
	For a vertex $v$ in a DAG $G$, let $G[v] = (V_v,E_v)$ denote an induced-subgraph of $G$ such that $V_v$ consists of all ancestors of $v$ including $v$, and $E_v$ is the induced edges of $V_v$ in $G$.
\end{defn}

For a set $S$ of vertices, an event $v_d$  $\frac{2}{3}$-dominates $S$ if there are more than 2/3 of vertices $v_x$ in $S$ such that $v_d$ dominates $v_x$. 	
Recall that $R_1$ is the set of all leaf vertices in $G$. The $\frac{2}{3}$-dom set $D_0$ is the same as the set $R_1$.The $\frac{2}{3}$-dom set $D_i$ is defined as follows:	
\begin{defn}[$\frac{2}{3}$-dom set]	
	A vertex $v_d$ belongs to a $\frac{2}{3}$-dom set  within the graph $G[v_d]$, if $v_d$ $\frac{2}{3}$-dominates $R_1$.
	The $\frac{2}{3}$-dom set $D_k$ consists of all roots $d_i$ such that  $d_i$ $\not \in $ $D_i$, $\forall$ $i$ = 1..($k$-1), and $d_i$ $\frac{2}{3}$-dominates $D_{i-1}$.
\end{defn}

\begin{lem}
	The $\frac{2}{3}$-dom set $D_i$ is the same with the root set $R_i$, for all nodes.
\end{lem}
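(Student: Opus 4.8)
The plan is to prove the identity by induction on the frame index $i$, reducing everything to one structural fact: for any event block $v_d$ and any vertex set $S$, the statement ``$v_d$ $\frac{2}{3}$-dominates $S$ within $G[v_d]$'' coincides with ``$v_d$ reaches (happened-after) more than two-thirds of the members of $S$''. Once this is established, the recursive definition of $D_i$ becomes syntactically the recursive definition of $R_i$, the induction closes at once, and the only additional ingredient needed is the already-established fact that all nodes agree on the root sets $R_i$ and on frame numbers, from which the corresponding agreement on $D_i$ follows.

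First I would prove the structural fact as a sub-lemma. By the definition of $G[v_d]$, its vertex set $V_{v_d}$ consists of $v_d$ together with all of its ancestors, and every vertex of $V_{v_d}$ other than $v_d$ is the (self-)ref target of some vertex in $V_{v_d}$ — take the penultimate vertex of a reference path from $v_d$ down to it. Hence $v_d$ is the unique top event of $G[v_d]$, so when the pseudo-vertex $\top$ is attached the only edge leaving $\top$ inside $G[v_d]$ is $\top \erefz v_d$. Therefore every path from $\top$ to a vertex $v_x \in V_{v_d}$ passes through $v_d$, while no path reaches any vertex outside $V_{v_d}$; that is, inside $G[v_d]$ the vertex $v_d$ dominates exactly the members of $V_{v_d}$. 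Consequently $v_d$ $\frac{2}{3}$-dominates $S$ within $G[v_d]$ iff more than two-thirds of $S$ lies in $V_{v_d}$, iff $v_d \eancestor s$ for more than two-thirds of $s \in S$, which is precisely the ``can reach'' condition used to define roots. I would also note that this argument is insensitive to forks, since $v_d$ is by construction the only vertex of $V_{v_d}$ not referenced from within $V_{v_d}$, regardless of forks among its ancestors.

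With the sub-lemma in hand the induction is routine. For the base case, $D_1$ and $R_1$ are both the set of leaf events (after aligning the indexing of the $D$-sequence so that $D_1 = R_1$). For the step, assume $D_j = R_j$ for all $j < i$. A vertex $r$ lies in $R_i$ iff $r$ is a root, $r \notin R_1 \cup \dots \cup R_{i-1}$, and $r$ reaches more than two-thirds of $R_{i-1}$; by the induction hypothesis the exclusion is the same as $r \notin D_1 \cup \dots \cup D_{i-1}$, and by the sub-lemma the reachability condition is the same as $r$ $\frac{2}{3}$-dominating $D_{i-1} = R_{i-1}$ within $G[r]$, i.e.\ exactly membership in $D_i$. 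Both inclusions follow, so $D_i = R_i$. Since the earlier consistency argument shows the root sets and frame numbers are identical in every node's local OPERA chain, the equality $D_i = R_i$ transfers that agreement to the $\frac{2}{3}$-dom sets, giving the ``for all nodes'' part.

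The main obstacle I anticipate is not the induction but normalizing the definitions so that the two recursions genuinely line up. One must reconcile the quorum phrasings — the root condition is stated as ``reach more than $2n/3$ roots'' while $\frac{2}{3}$-domination is stated as ``more than $2/3$ of the vertices of $S$'' — and fix the off-by-one in the $D_i$ indexing (the text sets $D_0 = R_1$ yet asserts $D_i = R_i$); I would handle this by reading both quorum conditions uniformly as ``more than two-thirds of the previous root set $R_{i-1}$'' (which agrees with the $|R_1| = n$ case and is the quorum relevant for BFT) and by shifting the $D$-index by one, stated in a short remark preceding the proof. A related point to settle is whether the clause ``$v_d$ belongs to a $\frac{2}{3}$-dom set if $v_d$ $\frac{2}{3}$-dominates $R_1$'' is an independent requirement or merely the $i=1$ instance of the recursion; for the lemma to hold it must be the latter (or be implied by the recursion), and I would state it as such.
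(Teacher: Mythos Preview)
The paper states this lemma twice (once in Section~\ref{se:lca} and again in the Appendix) but gives no proof in either place; there is nothing to compare your argument against. Your proposal therefore supplies strictly more than the paper does, and the core of it --- the observation that in $G[v_d]$ the pseudo-top $\top$ attaches only to $v_d$, so $v_d$ dominates exactly its own ancestor set, whence $\frac{2}{3}$-domination of $S$ within $G[v_d]$ reduces to reaching more than two-thirds of $S$ --- is correct and is exactly the bridge needed to identify the two recursions.

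Your closing paragraph is also on point: the paper's definitions really are misaligned. It sets $D_0 = R_1$ yet asserts $D_i = R_i$; it phrases the root quorum as ``more than $2n/3$ roots'' but $\frac{2}{3}$-domination as ``more than $2/3$ of the vertices of $S$'' (and $|R_k|$ need not equal $n$ for $k>1$); and the clause ``$v_d$ belongs to a $\frac{2}{3}$-dom set if $v_d$ $\frac{2}{3}$-dominates $R_1$'' is ambiguous between a base case and a standing requirement. Your decision to normalize all of these before running the induction is the right move, and you are explicit about which reading you adopt. That is as much as can be done given the source text; the lemma as literally stated is not provable without exactly the reconciliations you propose.
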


\subsection{Proof of Lachesis Consensus Algorithm}\label{se:proof}
This section presents a proof of liveness and safety  of our Lachesis protocols. We aim to show that our consensus is Byzantine fault tolerant with a presumption that more than two-thirds of participants are reliable nodes. We first provide some definitions, lemmas and theorems. Then we validate the Byzantine fault tolerance. 

\subsubsection{Proof of Byzantine Fault Tolerance for Lachesis Consensus Algorithm}

\begin{defn}[Happened-Immediate-Before]
An event block $v_x$ is said Happened-Immediate-Before an event block $v_y$ if $v_x$ is a (self-) ref of $v_y$. Denoted by $v_x \hibefore v_y$.
\end{defn}

\begin{defn}[Happened-Before]	
	 An event block $v_x$ is said Happened-Before an event block $v_y$ if $v_x$ is a (self-) ancestor of $v_y$. Denoted by $v_x \hbefore v_y$.
\end{defn}

The happens-before relation is the transitive closure of happens-immediately-before.
Thus, an event $v_x$ happened before an event $v_y$ if one of the followings happens: (a) $v_y \eself v_x$, (b) $v_y \eref v_x$,  or (c) $v_y \eancestor v_x$.
 We come up with the following proposition:
\begin{prop}[Happened-Immediate-Before OPERA]
$v_x \hibefore v_y$ iff $v_y \erefz v_x$ iff edge $(v_y, v_x)$ $\in E$ of OPERA chain.
\end{prop}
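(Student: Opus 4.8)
The plan is to prove the chain of equivalences purely by unfolding the relevant definitions, so the argument splits into two halves, each of which is a direct translation of notation. There is no induction or combinatorics needed; the content is entirely bookkeeping about reference hashes and edge orientation.

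First I would establish $v_x \hibefore v_y \Leftrightarrow v_y \erefz v_x$. By the definition of Happened-Immediate-Before, $v_x \hibefore v_y$ holds exactly when $v_x$ is a (self-)ref of $v_y$. Expanding the two sub-cases: ``$v_x$ is a ref of $v_y$'' is, by the definition of \emph{ref}, the statement $v_y \eref v_x$, and ``$v_x$ is a self-ref of $v_y$'' is, by the definition of \emph{self-ref}, the statement $v_y \eself v_x$. Since $\erefz$ is by convention the union of $\eref$ and $\eself$, the disjunction of these two cases is precisely $v_y \erefz v_x$; both directions are immediate from these definitional rewrites, so the first equivalence follows.

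Next I would establish $v_y \erefz v_x \Leftrightarrow (v_y, v_x) \in E$. This is exactly the edge convention of the OPERA chain: by the definition of the OPERA chain $G = (V, E)$, an edge $(v_i, v_j) \in E$ denotes a hashing reference from $v_i$ to $v_j$, i.e.\ $v_i \erefz v_j$. Instantiating $v_i := v_y$ and $v_j := v_x$ gives the claimed equivalence. Composing the two steps yields $v_x \hibefore v_y \Leftrightarrow v_y \erefz v_x \Leftrightarrow (v_y, v_x) \in E$, which is the proposition.

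The statement is essentially a definitional lemma, so the only real point requiring care is the orientation of edges: the text also introduces the reverse graph $G' = (V, E')$ for the happened-before relation, and one must be sure the claim is read against $E$ (where references point ``backwards'' in history) and not against $E'$. I would also note explicitly that the equivalence holds uniformly for leaf events and for fork events, since $\erefz$ subsumes both $\eref$ and $\eself$ and no honesty, acyclicity, or top-event assumption enters the argument.
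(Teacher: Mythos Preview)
Your proposal is correct and matches the paper's approach: the paper states the proposition without an explicit proof, treating it as immediate from the definitions of Happened-Immediate-Before, $\erefz$, and the edge convention of the OPERA chain, which is exactly the definitional unfolding you spell out. Your extra remarks about edge orientation and the uniform treatment of self-refs versus refs are accurate elaborations but go beyond what the paper records.
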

\begin{lem}[Happened-Before Lemma]
	$v_x \hbefore v_y$ iff $v_y \eancestor v_x$.
\end{lem}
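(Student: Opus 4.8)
The plan is to collapse both sides of the biconditional onto the same combinatorial object — a finite chain of reference edges in the OPERA chain — and to observe that each side is, by definition, equivalent to the existence of such a chain. The key bridge is the Happened-Immediate-Before OPERA proposition: $v \hibefore w$ holds exactly when $w \erefz v$, i.e. exactly when $(w,v)$ is an edge of $G$. Since $v_x \hbefore v_y$ is defined to be the transitive closure of $\hibefore$, it means there is a sequence $v_x = u_0 \hibefore u_1 \hibefore \dots \hibefore u_m = v_y$ with $m \geq 1$; rewriting every step via the proposition turns this into a sequence $v_y = u_m \erefz u_{m-1} \erefz \dots \erefz u_0 = v_x$, which is precisely the defining condition of $v_y \eancestor v_x$. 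Here I would stress that $\erefz$ abbreviates both $\eref$ and $\eself$ and that $\eancestor$ abbreviates both the ancestor and self-ancestor relations, so chains mixing self-references and ordinary references are handled uniformly and need no case split.

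For the forward direction ($v_x \hbefore v_y \Rightarrow v_y \eancestor v_x$) I would induct on the length $m$ of a witnessing $\hibefore$-chain: the base case $m = 1$ is exactly the Happened-Immediate-Before OPERA proposition, and the inductive step appends one more reference edge to a chain of length $m-1$, which by the induction hypothesis already realizes an ancestor relationship, producing a longer reference chain of the required form. For the converse ($v_y \eancestor v_x \Rightarrow v_x \hbefore v_y$) I would induct on the length of the reference chain witnessing $v_y \eancestor v_x$, using transitivity of $\hbefore$ to splice the single-edge case (again the proposition) together with the shorter sub-chain supplied by the induction hypothesis.

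The only point requiring genuine care — and the one I would flag as the main obstacle — is bookkeeping the edge orientation. An OPERA-chain edge $(v_i, v_j) \in E$ points from the \emph{referencing} (later) event $v_i$ to the \emph{referenced} (earlier) event $v_j$, whereas the happened-before graph $G' = (V, E')$ reverses these edges; so throughout both inductions one must stay consistent about which endpoint plays the role of $v_x$ (the ancestor / earlier event) and which plays $v_y$ (the descendant / later event), and in particular about the direction in which the chain $v_y \erefz v_1 \erefz \dots \erefz v_x$ is read. Once that convention is pinned down, the remainder is a routine two-way induction with no further subtlety; notably, the self-ref case demands no separate argument because it is already subsumed under $\erefz$ and $\eancestor$, and the statement reduces to the standard fact that the transitive closure of a relation coincides with reachability in the associated directed graph.
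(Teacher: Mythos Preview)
Your argument is correct, but it does substantially more work than the paper does. In the paper, Happened-Before is \emph{defined} by ``$v_x \hbefore v_y$ if $v_x$ is a (self-)ancestor of $v_y$,'' and the notation $v_y \eancestor v_x$ is precisely how ``$v_x$ is an ancestor of $v_y$'' is written; consequently the lemma is stated without proof, as it is an immediate restatement of the definition. Your inductive argument instead takes the \emph{other} characterization the paper mentions in passing---that $\hbefore$ is the transitive closure of $\hibefore$---as primary, and then proves that this transitive closure coincides with the ancestor relation. That is a perfectly valid and arguably more honest route, since it actually reconciles the two descriptions of $\hbefore$ that the paper presents side by side; but from the paper's point of view the biconditional is definitional, and your careful two-way induction on chain length, while sound, is not required.
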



\begin{defn}[concurrent]
	Two event blocks $v_x$ and $v_y$ are said concurrent if neither of them  happened before the other. Denoted by $v_x \concur v_y$.
\end{defn}

Given two vertices $v_x$ and $v_y$ both contained in two OPERA chains $G_1$ and $G_2$ on two nodes. We have the following: 
	(1) $v_x \hbefore v_y$ in $G_1$ iff $v_x \hbefore v_y$ in $G_2$; (2)
	$v_x \concur v_y$ in $G_1$ iff $v_x \concur v_y$ in $G_2$.

Below is some main definitions in Lachesis protocol.
\begin{defn}[Leaf]
	The first created event block of a node is called a leaf event block.
\end{defn}

\begin{defn}[Root]
\label{def:root}
	The leaf event block of a node is a root.
	When an event block $v$ can reach more than $2n/3$ of the roots in the previous frames, $v$ becomes a root.
\end{defn}

\begin{defn}[Root set]
	The set of all first event blocks (leaf events) of all nodes form the first root set $R_1$ ($|R_1|$ = $n$). The root set $R_k$ consists of all roots $r_i$ such that $r_i$ $\not \in $ $R_i$, $\forall$ $i$ = 1..($k$-1) and $r_i$ can reach more than 2n/3 other roots in the current frame, $i$ = 1..($k$-1).  
\end{defn}

\begin{defn}[Frame]
Frame $f_i$ is a natural number that separates Root sets. 
\end{defn} 

The root set at frame $f_i$ is denoted by $R_i$.

\begin{defn}[consistent chains]\label{dfn:conchains} OPERA chains $G_1$ and $G_2$ are consistent iff for any event $v$ contained in both chains, $G_1[v] = G_2[v]$. Denoted by $G_1 \sim G_2$.
\end{defn}
When two consistent chains contain the same event $v$, both chains contain the same set of ancestors for $v$, with the same reference and self-ref edges between those ancestors:
\begin{thm}\label{thm:conchains}
	All nodes have consistent OPERA chains.
\end{thm}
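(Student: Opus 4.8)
The plan is to prove the equivalent pointwise statement by well-founded induction: \emph{for any two local states $G_1, G_2$ and any event block $v$ with $v \in G_1$ and $v \in G_2$, we have $G_1[v] = G_2[v]$} (as hash-labelled graphs). The theorem then follows at once from Definition~\ref{dfn:conchains}. I would induct on the number of ancestors of $v$ inside $G_1$, i.e.\ on $|V_v|$ where $G_1[v] = (V_v, E_v)$; this is a well-founded measure because an OPERA chain is acyclic, so whenever $v \erefz u$ the block $u$ has strictly fewer ancestors than $v$. Equivalently one could induct on the Lamport timestamp carried inside $v$, which is intrinsic to the block and strictly decreases along $\erefz$; either way the point is to have a chain-independent handle, so that there is no circularity --- we induct on a quantity and prove a statement.

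Base case: $v$ is a leaf event, hence a root of the first frame and carrying no parent hashes. Then $G_1[v]$ and $G_2[v]$ each consist of the single vertex $v$ with no edges. Since an event block is identified by its secure hash, which commits to the creator, the payload, and the (empty) list of references, $v$ is literally the same vertex in both chains, so $G_1[v] = G_2[v]$.

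Inductive step: let $v$ be non-leaf. By the event-creation rule $v$ stores exactly $k$ hash references, say to $u_1, \dots, u_k$ (one self-ref, $k-1$ refs to peers' top events). Because $v$ is the same block in $G_1$ and in $G_2$ (same hash), it stores the same $k$ hash values in both; collision-resistance of the hash function then forces the referenced blocks $u_1, \dots, u_k$ to be identical across the two chains. Next, by the append-only validity discipline of local states --- enforced by the sync protocol (Algorithm~\ref{al:syncevents}), which lets a node accept $v$ only once it already holds all $k$ of $v$'s references --- each $u_\ell$ lies in both $G_1$ and $G_2$. Each $u_\ell$ has strictly fewer ancestors in $G_1$ than $v$ does, so the induction hypothesis yields $G_1[u_\ell] = G_2[u_\ell]$ for all $\ell$. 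Finally $V_v = \{v\} \cup \bigcup_{\ell} V_{u_\ell}$ and $E_v$ consists of the edges $(v, u_\ell)$ together with the edges induced on $\bigcup_{\ell} V_{u_\ell}$; since every piece agrees between the two chains, $G_1[v] = G_2[v]$, closing the induction.

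The main obstacle is not the induction itself but making precise the two semantic hypotheses it rests on: (i) collision-resistance of the hashing scheme, which is what upgrades ``$v$ stores the same reference hash in both chains'' to ``$v$ references the same block in both chains'' --- I would state this as a standing cryptographic assumption, as it is already flagged informally in the preliminaries; and (ii) the validity requirement that $v \in G_i$ forces all references of $v$ into $G_i$, which must be read off from the definition of a valid local state and from the sync algorithm. With those two in hand the argument is routine. A final point to state carefully is that $G_1[v] = G_2[v]$ means equality of hash-labelled graphs, so ``same vertex'' always means ``same hash'', and the leaf base case is exactly where that identification is used.
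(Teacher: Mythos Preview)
Your proposal is correct and follows the same approach as the paper's proof: both argue that a shared event $v$ carries the same $k$ reference hashes in each chain, that collision-resistance of the hash function forces the referenced events to coincide, that the sync rule guarantees those references are present in each chain, and then conclude by induction that all ancestors of $v$ agree. The paper compresses the induction into a single sentence (``By induction, all ancestors of $v$ must be the same''), whereas you spell out the well-founded measure, the leaf base case, and the decomposition $V_v = \{v\}\cup\bigcup_\ell V_{u_\ell}$ explicitly; but the underlying argument is identical.
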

\begin{proof}
	 If two nodes have OPERA chains containing event $v$, then they have the same $k$ hashes contained within $v$. A node will not accept an event during a sync unless that node already has $k$ references for that event, so both OPERA chains must contain $k$ references for $v$. The cryptographic hashes are assumed to be secure, therefore the references must be the same. By induction, all ancestors of $v$ must be the same. Therefore, the two OPERA chains are consistent.
\end{proof}

\begin{defn}[creator] If a node $n_x$ creates an event block $v$, then the creator of $v$, denoted by $cr(v)$, is $n_x$.
\end{defn}

\begin{defn}[fork]
	The pair of events ($v_x$, $v_y$) is a fork if $v_x$ and $v_y$ have the same creator, but neither is a self-ancestor of the other. Denoted by $v_x \efork v_y$.
\end{defn}
For example, let $v_z$ be an event in node $n_1$ and two child events $v_x$ and $v_y$ of $v_z$. if $v_x \eself v_z$, $v_y \eself v_z$, $v_x \not \eself v_y$, $v_y \not \eself v_z$, then ($v_x$, $v_y$) is a fork.
The fork relation is symmetric; that is $v_x \efork v_y$ iff $v_y \efork v_x$.
\begin{lem}
	$v_x \efork v_y$ iff $cr(v_x)=cr(v_y)$ and $v_x \concur v_y$.
\end{lem}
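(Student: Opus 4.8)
The plan is to prove the biconditional by unwinding the definitions on each side and isolating the single substantive step. The hypothesis $cr(v_x)=cr(v_y)$ occurs verbatim on both sides, so the real content is the equivalence, \emph{for events sharing a creator}, between ``neither of $v_x,v_y$ is a self-ancestor of the other'' (the residual part of the fork definition) and $v_x \concur v_y$. By the definition of $\concur$ together with the Happened-Before Lemma, $v_x \concur v_y$ is the same as: $v_y \not\eancestor v_x$ and $v_x \not\eancestor v_y$, i.e.\ neither event is an \emph{ancestor} of the other. Hence the whole lemma reduces to the claim: if $cr(v_x)=cr(v_y)$, then one of $v_x,v_y$ is a self-ancestor of the other \emph{iff} one of them is an ancestor of the other.

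One direction of this reduced claim is immediate and yields ($\Leftarrow$): any self-ancestor witness $v_c \eself v_1 \eself \dots \eself v_a$ is in particular a reference chain $v_c \erefz v_1 \erefz \dots \erefz v_a$, so $v_c \eselfancestor v_a$ implies $v_c \eancestor v_a$. Thus if $cr(v_x)=cr(v_y)$ and $v_x \concur v_y$, then neither event is an ancestor of the other, so neither is a self-ancestor of the other, so the pair meets the fork definition and $v_x \efork v_y$. For ($\Rightarrow$), assume $v_x \efork v_y$; then $cr(v_x)=cr(v_y)$ and neither event is a self-ancestor of the other. If $v_x \concur v_y$ were to fail, then by the Happened-Before Lemma we would have, say, $v_y \eancestor v_x$; the hard direction of the reduced claim would then force $v_y \eselfancestor v_x$, i.e.\ $v_x$ a self-ancestor of $v_y$, contradicting the fork property (the other case is symmetric).

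The one nontrivial step, and the expected obstacle, is the converse of the reduced claim: $cr(v_x)=cr(v_y)=n_i$ and $v_y \eancestor v_x$ should imply $v_y \eselfancestor v_x$. I would argue causally: a hashing reference always points to an already existing, causally earlier event, so the reference relation on $G$ is acyclic and the set of $n_i$-created events reachable from $v_y$ forms an initial segment of $n_i$'s own history; following $\eself$ backwards from $v_y$ down to $n_i$'s leaf event enumerates exactly that segment as long as $n_i$ does not fork, so the $n_i$-event $v_x$ reachable from $v_y$ must lie on that self-chain. The delicate point is precisely the possibility that $n_i$ forks and one forked branch re-enters $v_y$'s ancestry through some \emph{other} node's reference without ever becoming a self-ancestor of $v_y$; ruling this out is exactly where one uses that the common creator behaves honestly, consistent with the surrounding development (e.g.\ the preceding lemma already restricts its fork statement to honest nodes). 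I would therefore either state the lemma under that standing assumption or fold the no-fork hypothesis into the proof of the reduced claim.
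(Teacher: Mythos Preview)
Your overall decomposition matches the paper's: unfold the fork definition, invoke the Happened-Before Lemma to rewrite ancestry as $\hbefore$, and then read off the definition of $\concur$. The paper's proof is in fact just those three sentences; it simply restates the fork condition as $cr(v_x)=cr(v_y)$, $v_x \not\eancestor v_y$, $v_y \not\eancestor v_x$ and is done.

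The difference is that you take the fork definition at face value (``neither is a \emph{self}-ancestor of the other'') and therefore isolate a nontrivial residual step---that for same-creator events, ancestor and self-ancestor coincide---while the paper silently writes $\eancestor$ in place of $\eselfancestor$ when restating the definition, relying on its earlier remark that $\eancestor$ is used for both relations ``unless we need to distinguish the two cases.'' Under that reading the lemma is immediate and your extra work is unnecessary. Under the literal self-ancestor reading, however, the concern you raise is genuine: a dishonest creator can produce $v_x,v_y$ with the same creator, neither a self-ancestor of the other, yet with $v_y$ reachable from $v_x$ through another node's reference, so $v_x \not\concur v_y$. In that case the $(\Rightarrow)$ direction fails without the honesty assumption you propose, and the paper's argument does not address this. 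Your proposal is thus at least as rigorous as the paper's; what you flag as the ``expected obstacle'' is exactly the point the paper elides via its notational convention.
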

\begin{proof}
By definition, ($v_x$, $v_y$) is a fork if $cr(v_x)=cr(v_y)$, $v_x \not \eancestor v_y$ and $v_y \not \eancestor v_x$. Using Happened-Before, the second part means $v_x \not \rightarrow v_y$ and $v_y \not \rightarrow v_x$. By definition of concurrent, we get $v_x \concur v_y$.
\end{proof}

\begin{lem} (fork detection). If there is a fork $v_x \efork  v_y$, then $v_x$ and $v_y$ cannot both be roots on honest nodes.
\end{lem}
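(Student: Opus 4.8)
The plan is to argue by contradiction using a quorum-intersection (pigeonhole) argument, exactly in the spirit of the informal sketch given after the corresponding statement in the ``Detecting Forks'' section. Assume $v_x \efork v_y$ and that, contrary to the claim, $v_x$ is a root on an honest node and $v_y$ is a root on an honest node. First I would dispose of the trivial case: by Theorem~\ref{thm:conchains} honest nodes hold consistent OPERA chains, and by the very definition of honest behaviour a node never inserts into its local DAG two events that constitute a fork; hence $v_x$ and $v_y$ cannot both be roots on the \emph{same} honest node. So we may assume $v_x$ is a root on an honest node $n_x$ and $v_y$ is a root on an honest node $n_y$ with $n_x \neq n_y$.

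Next I would unfold Definition~\ref{def:root}. Being a root means $v_x$ reaches (in the ancestor / happened-before sense, using the Happened-Before Lemma) roots — and therefore events — created by a set $A$ of more than $2n/3$ distinct nodes; similarly $v_y$ reaches events created by a set $B$ of more than $2n/3$ distinct nodes. Since there are only $n$ nodes, $|A \cap B| \ge |A| + |B| - n > 2n/3 + 2n/3 - n = n/3$. Under the standing assumption that fewer than $n/3$ nodes are Byzantine, the overlap $A \cap B$ must contain at least one honest node; call it $n_m$.

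Then I would derive the contradiction from the honesty of $n_m$. Membership $n_m \in A$ yields an event $e_x$ created by $n_m$ with $v_x \eancestor e_x$, and $n_m \in B$ yields an event $e_y$ created by $n_m$ with $v_y \eancestor e_y$. An honest creator maintains a single self-ancestor chain, so $e_x$ and $e_y$ are self-ancestor-comparable; combined with the fact that $v_x$ and $v_y$ lie on the histories that $n_m$ contributed to, this forces $n_m$ to have witnessed the branching of $\operatorname{cr}(v_x) = \operatorname{cr}(v_y)$'s chain (recall that by the preceding lemma $v_x \efork v_y$ means $\operatorname{cr}(v_x)=\operatorname{cr}(v_y)$ and $v_x \concur v_y$). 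Being honest, $n_m$ would not have accepted and built upon both branches, contradicting $n_m \in A \cap B$.

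The main obstacle I anticipate is making that last step genuinely rigorous rather than hand-wavy: the text leaps from ``there is an honest node in the overlap'' to ``that node does not allow the fork,'' but one must actually argue why $n_m$'s contributing ancestors to \emph{both} $v_x$ and $v_y$ entails that $n_m$'s local history (or the portion of it that propagated) contains enough of both self-ancestor chains of $\operatorname{cr}(v_x)$ to expose the fork, and why an honest node that has seen such a branch cannot have legitimately produced events feeding into both roots. Isolating the precise invariant preserved by honest nodes — essentially, that an honest node never signs or accepts an event whose presence would make some node's self-ancestor chain non-linear in its view — and threading it through the ancestor relation is the delicate part; the counting argument itself is routine.
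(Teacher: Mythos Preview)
Your proposal is correct and follows essentially the same approach as the paper: contradiction, dispose of the same-node case, then quorum intersection to find an honest node $n_m$ in the overlap of the two $>2n/3$ sets, and conclude that $n_m$ ``does not allow the fork.'' Your final paragraph worrying about rigour is well placed, but note that the paper itself stops at exactly the bare assertion you flag as hand-wavy --- it simply says ``Because $n_m$ is honest, $n_m$ does not allow the fork'' and declares the contradiction; so your attempt to unpack this via $e_x$, $e_y$ and the linearity of an honest node's self-ancestor chain already goes further than the original.
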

\begin{proof}
	Here, we show a proof by contradiction. Any honest node cannot accept a fork so $v_x$ and $v_y$ cannot be roots on the same honest node. Now we prove a more general case. Suppose that both $v_x$ is a root of $n_x$ and $v_y$ is root of $n_y$, where $n_x$ and $n_y$ are honest nodes. Since $v_x$ is a root, it reached events created by more than 2/3 of member nodes. Similary, $v_y$ is a root, it reached events created by  more than 2/3 of member nodes. Thus, there must be an overlap of more than $n$/3 members of those events in both sets. Since we assume less than $n$/3 members are not honest, so there must be at least one honest member in the overlap set. Let $n_m$ be such an honest member. Because $n_m$ is honest, $n_m$ does not allow the fork. This contradicts the assumption. Thus, the lemma is proved.
\end{proof}

Each node $n_i$ has an OPERA chain $G_i$. We define a consistent chain from a sequence of OPERA chain $G_i$.
\begin{defn}[consistent chain] 
	A global consistent chain $G^C$ is a chain if $G^C \sim G_i$ for all $G_i$.
\end{defn}

We denote $G \sqsubseteq G'$ to stand for $G$ is a subgraph of $G'$.
\begin{lem}
	$\forall G_i$ ($G^C \sqsubseteq G_i$).
\end{lem}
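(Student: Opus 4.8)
The plan is to unfold $G^{C} \sqsubseteq G_i$ into a vertex-containment claim and an edge-containment claim, and then discharge each of them using Definition~\ref{dfn:conchains} (consistency of chains) together with the fact that $G^{C}$ is assembled as the common part of the local views $G_i$. Throughout, fix an arbitrary node $n_i$ with local OPERA chain $G_i$.

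The first step is to show that every vertex of $G^{C}$ occurs in $G_i$. Take $v \in G^{C}$. Since $G^{C}$ is built from the sequence of OPERA chains and satisfies $G^{C} \sim G_j$ for all $j$, an event enters $G^{C}$ only once it has been accepted by every node, so $v \in G_i$; moreover, by the closure property listed just before the lemma ($v_c \in G^{C}$ and $v_p \hbefore v_c$ imply $v_p \in G^{C}$) together with Theorem~\ref{thm:conchains}, the whole ancestor set of $v$ is carried along. I expect this to be the main obstacle: it is exactly the point where one must pin down what ``global consistent chain constructed from the $G_i$'' means and argue that membership in $G^{C}$ forces membership in each $G_i$. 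Granting it, $G^{C}$ and $G_i$ share the vertex $v$, so $G^{C} \sim G_i$ gives $G^{C}[v] = G_i[v]$; hence not only $v$ but all of its ancestors in $G^{C}$, and all reference and self-reference edges among them, coincide in the two chains.

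The second step assembles the subgraph claim from this. For vertices: any $v \in V(G^{C})$ lies in $V(G^{C}[v]) = V(G_i[v]) \subseteq V(G_i)$, so $V(G^{C}) \subseteq V(G_i)$. For edges: any edge $(v_c, v_r)$ of $G^{C}$ encodes a reference $v_c \erefz v_r$, so $v_r$ is an ancestor of $v_c$ and the edge belongs to the induced subgraph $G^{C}[v_c]$; since $G^{C}[v_c] = G_i[v_c]$, that edge also lies in $E(G_i[v_c]) \subseteq E(G_i)$. Together these give $G^{C} \sqsubseteq G_i$, and as $n_i$ was arbitrary the lemma follows. Note the argument uses nothing beyond (i) every event of $G^{C}$ occurs in $G_i$ and (ii) pairwise consistency of chains; if instead $G^{C}$ is taken literally as the intersection $\bigcap_j G_j$, step (i) is immediate and only the edge bookkeeping via induced subgraphs remains.
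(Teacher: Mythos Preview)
The paper actually states this lemma (and the two lemmas immediately following it) without any proof; in the preliminaries the same three statements are simply listed as ``properties of $G^C$'', and in the appendix they are promoted to lemmas but again no argument is supplied. So there is no authors' proof to compare your proposal against.

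Your argument is essentially sound, and, importantly, you have put your finger on the genuine gap: the paper's definition of $G^C$ says only that $G^C \sim G_i$ for all $i$, and consistency (Definition~\ref{dfn:conchains}) constrains only vertices that lie in \emph{both} chains. From that definition alone nothing prevents $G^C$ from containing a vertex absent from some $G_i$, so vertex containment---and hence $G^C \sqsubseteq G_i$---does not follow formally. Your edge step via $G^C[v_c] = G_i[v_c]$ is correct once vertex containment is granted. The cleanest repair, as you observe, is to read $G^C$ as the intersection of the $G_i$; this is evidently the authors' intent, and under that reading your proof goes through cleanly. Strictly speaking, though, you are supplying a missing hypothesis rather than deriving the lemma from the stated definition, and it is worth flagging that explicitly.
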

\begin{lem}
	$\forall v \in G^C$ $\forall G_i$ ($G^C[v] \sqsubseteq G_i[v]$).
\end{lem}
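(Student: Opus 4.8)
The plan is to reduce the claim to the consistency of OPERA chains already established in Theorem~\ref{thm:conchains}. First I would fix an arbitrary vertex $v \in G^C$ and an arbitrary local chain $G_i$. By the preceding lemma we have $G^C \sqsubseteq G_i$, hence $v \in G_i$ as well, so $v$ is an event contained in both $G^C$ and $G_i$. Since $G^C$ is a global consistent chain, by definition $G^C \sim G_i$, and applying Definition~\ref{dfn:conchains} to the common event $v$ yields $G^C[v] = G_i[v]$. In particular $G^C[v] \sqsubseteq G_i[v]$, which is exactly the desired conclusion; note that asking only for $\sqsubseteq$ rather than equality makes the target strictly weaker, hence easier.

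The one point that needs care is that the induced subgraph $G^C[v]$, computed inside $G^C$, really does coincide with the one computed inside $G_i$ --- i.e., that $v$ has the same ancestor set and the same reference/self-ref edges among those ancestors in both chains. This is precisely the content of Theorem~\ref{thm:conchains}: because the $k$ reference hashes stored in $v$ are identical in the two chains (a node never accepts $v$ during a sync without already holding all $k$ referenced blocks), an induction on happened-before depth shows that every ancestor and every reference edge of $v$ agrees. So I would either cite Theorem~\ref{thm:conchains} directly, or, if a self-contained argument is wanted, reproduce its induction: the base case is $v$ itself; for the inductive step, if $w \in G^C[v]$ with $w \neq v$ then there is some $w'$ with $v \eancestor w'$ (or $w' = v$) and $w' \erefz w$ lying in $G^C[v]$, and by the induction hypothesis $w' \in G_i[v]$; then $G_i$ must contain $w$ and the edge $(w',w)$ because it holds all $k$ references of $w'$.

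The main obstacle is therefore not the outer argument --- which is essentially immediate from Definition~\ref{dfn:conchains} once $v$ is known to lie in both chains --- but making precise that ``ancestor in $G^C$'' and ``ancestor in $G_i$'' denote the same set, so that the equality $G^C[v] = G_i[v]$ is literally a statement about one subgraph rather than a coincidence of two a priori different induced subgraphs. With Theorem~\ref{thm:conchains} in hand this is routine, and the whole lemma follows in a few lines.
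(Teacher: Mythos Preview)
The paper states this lemma without proof, so there is no paper argument to compare against. Your proposal is correct and is the natural route: from $G^C \sqsubseteq G_i$ you get $v \in G_i$, then the defining property $G^C \sim G_i$ together with Definition~\ref{dfn:conchains} gives $G^C[v] = G_i[v]$ outright, and the subgraph relation follows. Your discussion of the one delicate point --- that the ancestor set of $v$ computed inside $G^C$ agrees with the one computed inside $G_i$ --- is exactly what Definition~\ref{dfn:conchains} is asserting, and your fallback induction (mirroring the proof of Theorem~\ref{thm:conchains}) is the right way to make it self-contained if desired.
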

\begin{lem}
	($\forall v_c \in G^C$) ($\forall v_p \in G_i$) (($v_p \hbefore v_c) \Rightarrow v_p \in G^C$).
\end{lem}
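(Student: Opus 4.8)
The plan is to reduce the statement to the consistency of induced subgraphs, so essentially no induction is needed. Fix $v_c \in G^C$ and $v_p \in G_i$ with $v_p \hbefore v_c$. First I would observe that, since $G^C$ is a consistent chain, the earlier lemma $\forall G_i\,(G^C \sqsubseteq G_i)$ gives $v_c \in G_i$; thus $v_c$ is a vertex common to both $G^C$ and $G_i$. By the definition of consistent chain ($G^C \sim G_i$) together with Definition~\ref{dfn:conchains} and Theorem~\ref{thm:conchains}, this forces $G^C[v_c] = G_i[v_c]$: the two induced subgraphs on the ancestors of $v_c$ coincide exactly, with the same reference and self-ref edges.

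Next I would unpack the hypothesis $v_p \hbefore v_c$. By the Happened-Before Lemma this is equivalent to $v_c \eancestor v_p$, i.e. $v_p$ is a (self-)ancestor of $v_c$. Reading this relation off $G_i$ (which is legitimate since we know $v_p, v_c \in G_i$) and applying the definition of the induced subgraph $G_i[v_c] = (V_{v_c}, E_{v_c})$, whose vertex set $V_{v_c}$ is by definition the set of all ancestors of $v_c$ including $v_c$ itself, we get $v_p \in V_{v_c}$. Combining with the equality $G^C[v_c] = G_i[v_c]$ from the previous step, $v_p$ is a vertex of $G^C[v_c]$; and since $G^C[v_c]$ is an \emph{induced} subgraph of $G^C$, its vertex set is contained in that of $G^C$. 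Hence $v_p \in G^C$, as required.

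The one point I would be careful to state explicitly is in which graph the relation $v_p \hbefore v_c$ is interpreted: a priori we only know $v_p, v_c \in G_i$ (not $v_p \in G^C$ — that is the conclusion), so happened-before must be evaluated in $G_i$. That is precisely why the argument works once we note that the ancestor set of $v_c$ is the same in $G_i$ and in $G^C$. I expect the main (mild) obstacle to be bookkeeping rather than mathematics: verifying that membership in the induced subgraph $G^C[v_c]$ genuinely yields membership in $G^C$, which is immediate from the "induced-subgraph" clause in the definition of $G[v]$, and checking that the consistency equality $G^C[v_c] = G_i[v_c]$ — not merely the containment $G^C[v_c] \sqsubseteq G_i[v_c]$ of the preceding lemma — is available, which it is because $v_c$ lies in both chains.
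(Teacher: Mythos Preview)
Your argument is correct and is exactly the natural one: use $G^C \sim G_i$ on the common vertex $v_c$ to get $G^C[v_c] = G_i[v_c]$, note that $v_p \hbefore v_c$ places $v_p$ among the ancestors of $v_c$ and hence in $G_i[v_c]$, and conclude $v_p \in G^C[v_c] \subseteq G^C$. There is nothing to compare against, however: the paper states this lemma (both as property~(3) in the preliminaries and as a lemma in the appendix) without supplying any proof, so your write-up would in fact fill a gap rather than duplicate existing content.
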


Now we state the following important propositions.
\begin{defn}[consistent root]
	Two chains $G_1$ and $G_2$ are root consistent, if for every $v$ contained in both chains, and $v$ is a root of $j$-th frame in $G_1$, then $v$ is a root of $j$-th frame in $G_2$.
\end{defn}
\begin{prop}
	If $G_1 \sim G_2$, then $G_1$ and $G_2$ are root consistent.
\end{prop}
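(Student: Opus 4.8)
The plan is to prove the statement by \emph{strong} induction on the frame index $j$, and to prove at each level the symmetric biconditional: for every event $v$ contained in both $G_1$ and $G_2$, $v$ is a root of frame $f_j$ in $G_1$ if and only if $v$ is a root of frame $f_j$ in $G_2$. Because the hypothesis $G_1 \sim G_2$ is symmetric and the inductive hypothesis is a biconditional, it suffices to establish one direction at each level; the converse follows by exchanging the roles of $G_1$ and $G_2$. For the base case $j=1$, recall $R_1$ is exactly the set of leaf events, and whether $v$ is a leaf is determined by whether $v$ has a self-ancestor, which depends only on $G_i[v]$. Since $G_1 \sim G_2$ gives $G_1[v] = G_2[v]$ for any $v$ in both chains, $v \in R_1$ in $G_1$ iff $v \in R_1$ in $G_2$.

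For the inductive step, assume the biconditional holds for all frame indices $i \le k$, and suppose $v$ is a root of frame $f_{k+1}$ in $G_1$. Two facts must be transported to $G_2$: (i) $v$ is not a root of any frame $f_i$ with $i \le k$, and (ii) $v$ reaches more than $2n/3$ roots of frame $f_k$. For (i): in $G_1$ we have $v \notin R_i$ for $i = 1,\dots,k$ by Definition~\ref{def:root} and the root-set definition; since $v$ lies in both chains, the inductive hypothesis yields $v \notin R_i$ in $G_2$ for each such $i$. For (ii): there is a set $S$ of roots of frame $f_k$ in $G_1$ with $|S| > 2n/3$ and $u \hbefore v$ for all $u \in S$. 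Each $u \in S$ is an ancestor of $v$, so $u \in G_1[v] = G_2[v]$, hence $u$ is contained in both chains; by the inductive hypothesis each $u$ is a root of frame $f_k$ in $G_2$ as well, and $u \hbefore v$ still holds in $G_2$ by the stated preservation of $\hbefore$ across consistent chains (equivalently, $G_1[v]=G_2[v]$ carries the same paths). Thus the \emph{same} set $S$ witnesses that $v$ reaches more than $2n/3$ roots of frame $f_k$ in $G_2$. Combining (i) and (ii), $v$ is a root of frame $f_{k+1}$ in $G_2$, closing the induction, and hence $G_1$ and $G_2$ are root consistent.

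The main obstacle is purely the bookkeeping around \emph{minimality} of the frame number: ``$v$ is a root of frame $f_{k+1}$'' means not only that $v$ reaches $>2n/3$ roots of $f_k$ but also that $v$ is not already a root of an earlier frame, so the inductive hypothesis has to be carried as the full biconditional over all smaller indices (hence strong induction), and the symmetry observation is what lets us discharge both directions without circularity. A minor preliminary is to fix one reading of the root condition — an event of frame $f_{k+1}$ is a non-earlier-root that reaches more than $2n/3$ roots of frame $f_k$ — since the informal definitions phrase it in slightly varying ways; under that reading the argument goes through verbatim, and since $G_1[v]=G_2[v]$ the witnessing root set is literally identical in the two chains, so no counting slack is required.
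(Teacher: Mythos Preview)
Your proposal is correct and follows essentially the same route as the paper: induction on the frame index, using $G_1[v]=G_2[v]$ to transport the witnessing set $S$ of frame-$f_k$ roots from $G_1$ to $G_2$. Your version is in fact more careful than the paper's own proof, which omits the minimality bookkeeping (that $v$ is not already in some $R_i$ with $i\le k$) and only argues one direction without explicitly invoking symmetry; your strong-induction biconditional and appeal to symmetry cleanly close both of those small gaps.
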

\begin{proof}
	By consistent chains, if $G_1 \sim G_2$ and $v$ belongs to both chains, then $G_1[v]$ = $G_2[v]$.
	We can prove the proposition by induction. For $j$ = 0, the first root set is the same in both $G_1$ and $G_2$. Hence, it holds for $j$ = 0. Suppose that the proposition holds for every $j$ from 0 to $k$. We prove that it also holds for $j$= $k$ + 1.
	 Suppose that $v$ is a root of frame $f_{k+1}$ in $G_1$. 
	Then there exists a set $S$ reaching 2/3 of members in $G_1$ of frame $f_k$ such that $\forall u \in S$ ($u\hbefore v$). As $G_1 \sim G_2$, and $v$ in $G_2$, then $\forall u \in S$ ($u \in G_2$). Since the proposition holds for $j$=$k$, 
	As $u$ is a root of frame $f_{k}$ in $G_1$, $u$ is a root of frame $f_k$ in $G_2$. Hence, the set $S$ of 2/3 members $u$ happens before $v$ in $G_2$. So $v$ belongs to $f_{k+1}$ in $G_2$. The proposition is proved.
\end{proof}

From the above proposition, one can deduce the following:
\begin{lem}
		$G^C$ is root consistent with $G_i$ for all nodes.
 \end{lem}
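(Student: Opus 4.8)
The plan is to derive this lemma directly from the Proposition just established, namely that $G_1 \sim G_2$ implies $G_1$ and $G_2$ are root consistent. By the definition of a (global) consistent chain, $G^C$ satisfies $G^C \sim G_i$ for every local chain $G_i$. So for a fixed $i$, instantiating that Proposition with $G_1 := G^C$ and $G_2 := G_i$ yields that $G^C$ and $G_i$ are root consistent; since $i$ was arbitrary, $G^C$ is root consistent with $G_i$ for all nodes. The whole content of the proof is therefore to justify that the Proposition is actually applicable with $G^C$ in the role of one of the two chains.

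The point that needs care is that the Proposition is phrased for a pair of OPERA chains arising as local views of running nodes, whereas $G^C$ is obtained as a common subgraph; so I would first check that $G^C$ carries a well-defined frame/root structure and that the inductive argument in the Proposition's proof still goes through. For this I would use the earlier lemmas about $G^C$: that $G^C \sqsubseteq G_i$ for all $i$, that $G^C[v] \sqsubseteq G_i[v]$ (in fact equality, by $G^C \sim G_i$), and crucially the ancestor-closure property $(\forall v_c \in G^C)(\forall v_p \in G_i)(v_p \hbefore v_c \Rightarrow v_p \in G^C)$. The base case is immediate: a shared vertex that is a leaf is a frame-$1$ root in every chain that contains it. For the inductive step, if a shared vertex $v$ is a root of frame $f_{k+1}$ in $G_i$, the set $S$ of more than $2n/3$ frame-$f_k$ roots witnessing this consists of ancestors of $v$; by ancestor-closure each such root lies in $G^C$, hence is shared, hence by the induction hypothesis is a frame-$f_k$ root in $G^C$, so $v$ is a frame-$f_{k+1}$ root in $G^C$, and symmetrically in the reverse direction. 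Thus the recursion that defines frames closes inside $G^C$ and agrees with $G_i$ on every common vertex.

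The main obstacle I anticipate is precisely this bookkeeping: $G^C$ may be a proper subgraph of each $G_i$ and need not contain all $n$ leaves, so one cannot simply assert "the first root set is the same in both chains'' the way the Proposition's proof does for two full local views. The resolution is that rootness is defined by reaching more than $2n/3$ roots of the previous frame, where $n$ is the globally fixed node count rather than the number of leaves present in a chain, and every witnessing root is an ancestor of the vertex in question and hence already in $G^C$ by ancestor-closure; so although $G^C$'s frame sets may be smaller than those of $G_i$, the frame index assigned to any vertex that $G^C$ does contain is forced to coincide with the one it receives in $G_i$. Once this observation is in place, the Proposition applies verbatim and the lemma follows.
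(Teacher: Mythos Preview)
Your proposal is correct and matches the paper's approach: the paper simply writes ``From the above proposition, one can deduce the following'' and states the lemma, i.e.\ it instantiates the Proposition with $G_1:=G^C$ and $G_2:=G_i$ using the defining property $G^C\sim G_i$. Your additional bookkeeping about $G^C$ possibly being a proper subgraph (and the ancestor-closure argument to ensure frame indices agree) is more careful than anything the paper spells out, but it does not change the route --- it just fills in a detail the paper leaves implicit.
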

Thus, all nodes have the same consistent root sets, which are the root sets in $G^C$. Frame numbers are consistent for all nodes.

\begin{lem}[consistent flag table] For any top event $v$ in both OPERA chains $G_1$ and $G_2$, and $G_1 \sim G_2$, then the flag tables of $v$ are consistent iff they are the same in both chains.
\end{lem}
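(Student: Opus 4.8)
The plan is to reduce the statement to facts already in hand, namely Theorem~\ref{thm:conchains} and the root-consistency proposition established above. By the definition of \emph{consistent flag table}, two flag tables of $v$ are consistent precisely when they are equal, so the only substantive content is: the flag table of $v$ computed inside $G_1$ and the one computed inside $G_2$ agree cell by cell. The key observation I would isolate first is that the flag table of an event $v$ is a deterministic function of the induced subgraph $G[v]$ together with the root/frame labeling on its vertices: the flag table of a leaf event is the fixed initial one, and the flag table of any other event $v$ is obtained by taking the $OR$ (set-union) of the flag tables of the $k$ (self-)references of $v$, and then, if the resulting reachable-root set has cardinality more than $2n/3$, recording the hash of $v$ as a new root of the next frame. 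Nothing in this recursion consults any vertex, edge, or label lying outside $G[v]$.

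First I would invoke Theorem~\ref{thm:conchains}, the hypothesis $G_1 \sim G_2$, and Definition~\ref{dfn:conchains}: since $v$ is contained in both chains, $G_1[v] = G_2[v]$, i.e. the two induced subgraphs have the same vertex set, the same edge set, and the same $\erefz$ structure; in particular the $k$ references of every vertex of $G[v]$ are identical in the two chains. Next I would apply the root-consistency proposition proved above (together with its corollary for $G^C$): for every $u \in G[v]$ and every frame index $j$, $u$ is a root of frame $f_j$ in $G_1$ iff it is a root of frame $f_j$ in $G_2$. Hence $G[v]$ carries the same root/frame labeling in both chains.

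Then I would conclude by induction along the happened-before order restricted to the ancestors of $v$. For a leaf event the flag table is the fixed initial table, hence equal in both chains. For the inductive step, the $k$ (self-)references of $v$ are the same vertices in $G_1$ and $G_2$, each such reference is an ancestor of $v$ and so has equal flag tables in the two chains by the induction hypothesis, the $OR$ of equal tables is equal, whether the promotion rule fires depends only on the cardinality of the resulting reachable-root set (equal, since the root labeling on $G[v]$ agrees), and the hash recorded upon promotion is the hash of $v$, the same object in both chains. Therefore the flag table of $v$ is identical in $G_1$ and $G_2$, i.e. consistent.

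The main obstacle is making the first observation rigorous: one must argue that the flag-table construction is genuinely a function of the labeled induced subgraph $G[v]$ and is independent of any global bookkeeping a node might keep (the order in which references are combined, which other top events happen to exist elsewhere in the chain, and so on). Once this locality of the construction is pinned down, the remainder is a routine transport of equalities through Theorem~\ref{thm:conchains} and the root-consistency proposition. A minor point to dispatch is that ``$v$ is a top event'' is asymmetric --- $v$ may be a top event in $G_1$ but not in $G_2$ --- but this is harmless, since the flag table of $v$ is given by the same recursion for $v$ regardless of topness and depends only on $G[v]$, which is symmetric in the two chains.
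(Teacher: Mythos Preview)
Your proposal is correct and follows essentially the same route as the paper: invoke $G_1 \sim G_2$ to get $G_1[v] = G_2[v]$, invoke the root-consistency proposition to get identical root/frame labels on the ancestors of $v$, and conclude that the references of $v$ and hence the flag tables coincide. The paper's proof is a three-line sketch of exactly these ingredients; you have additionally made explicit the locality of the flag-table recursion and carried out the induction along the happened-before order, which the paper leaves implicit.
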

\begin{proof}
	From the above lemmas, the root sets of $G_1$ and $G_2$ are consistent. If $v$ contained in $G_1$, and $v$ is a root of $j$-th frame in $G^1$, then $v$ is a root of $j$-th frame in $G_i$. Since $G_1 \sim G_2$, $G_1[v] = G_2[v]$. The reference event blocks of $v$ are the same in both chains. Thus the flag tables of $v$ of both chains are the same.
\end{proof}
Thus, all nodes have consistent flag tables.

\begin{defn}[Clotho]
	A root $r_k$ in the frame $f_{a+3}$ can nominate a root $r_a$ as Clotho if more than 2n/3 roots in the frame $f_{a+1}$ Happened-Before $r_a$ and $r_k$ Happened-Before the roots in the frame $f_{a+1}$.
\end{defn} 

\begin{lem}
	\label{lem:root}
	For any root set $R$, all nodes nominate same root into Clotho.
\end{lem}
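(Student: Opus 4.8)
The plan is to obtain Lemma~\ref{lem:root} as a corollary of the consistency machinery already in place: Theorem~\ref{thm:conchains} (all nodes have consistent OPERA chains), the earlier lemma that $G^C$ is root consistent with every $G_i$ (so the frame indices of roots agree across nodes), the Happened-Before consistency property ($u \hbefore v$ in $G_1$ iff $u \hbefore v$ in $G_2$ whenever both events are present in both chains), and the consistent flag table lemma. The guiding observation is that the Clotho nomination of a root $r_a$ is a predicate evaluated entirely inside the induced subgraph $G_i[r_k]$ of a nominating root $r_k$ in the frame $f_{a+3}$: by the definition of Clotho it refers only to which vertices are roots of frames $f_a$ and $f_{a+1}$, and to the restriction of $\hbefore$ to those roots. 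None of this depends on anything a node has seen outside $G_i[r_k]$.

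First I would make the locality claim precise. By Definition~\ref{def:root}, $r_k \in R_{a+3}$ reaches more than $2n/3$ roots of $R_{a+2}$, each of those reaches more than $2n/3$ roots of $R_{a+1}$, and so on down to $R_a$; hence every root of frames $f_a, f_{a+1}, f_{a+2}$ that the Clotho test can possibly consult, together with the reference and self-reference edges among them, lies in $G_i[r_k]$. Consequently both clauses of the Clotho definition --- ``more than $2n/3$ roots of $f_{a+1}$ Happened-Before $r_a$'' and ``$r_k$ Happened-Before the roots of $f_{a+1}$'' --- are assertions about $\hbefore$ within $G_i[r_k]$, and the identity of ``the roots of frame $f_m$'' is unambiguous because, by the fork-detection lemma, a fork cannot produce two roots on honest nodes.

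Second I would transport truth values along consistency. Let $n_i$ and $n_j$ be honest nodes whose chains both contain $r_k$. By Theorem~\ref{thm:conchains}, $G_i \sim G_j$, so $G_i[r_k] = G_j[r_k]$; by root consistency a vertex of this common subgraph is a root of frame $f_m$ in $G_i$ iff it is in $G_j$; and by Happened-Before consistency $u \hbefore v$ holds in $G_i$ iff it holds in $G_j$ for $u,v \in G_i[r_k]$. Every sub-condition of the Clotho predicate therefore has the same truth value on $n_i$ and on $n_j$, so $r_k$ nominates $r_a$ on $n_i$ iff it does on $n_j$. Quantifying over $r_k \in R_{a+3}$ --- a root set that is itself consistent across nodes --- yields that all nodes nominate exactly the same roots of $R$ into Clotho; since the flag-table realization of the test agrees with the $\hbefore$ formulation by the consistent flag table lemma, the algorithm's nomination is consistent as well.

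The step I expect to be the main obstacle is the first one: pinning down that the Clotho test is genuinely confined to $G_i[r_k]$ and does not covertly depend on extra roots a node happens to hold, on timestamps, or on fork siblings. This is exactly the feature that separates the Clotho step from the later Atropos step, which does involve reselection over later frames; once the locality is established, the remainder is a routine transfer of truth values along the equality $G_i[r_k] = G_j[r_k]$.
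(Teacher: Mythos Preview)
Your proposal is correct and follows the same core idea as the paper: consistency of the OPERA chains forces the Clotho nomination to agree across nodes. The paper's own proof is three sentences long and simply appeals to Theorem~\ref{thm:same} (Fork Absence) to assert that all nodes have ``an OPERA chain with same shape,'' hence identical flag-table values, hence the same nomination. Your argument reaches the same conclusion but is both more careful and more economical in its hypotheses: you work from Theorem~\ref{thm:conchains} together with root consistency and the consistent--flag-table lemma, and you make explicit the locality step---that the Clotho predicate for $r_a$ at nominator $r_k$ is evaluated entirely inside $G_i[r_k]$---which the paper leaves implicit. This buys you two things: you avoid the forward reference to Theorem~\ref{thm:same} (whose proof in the paper comes after this lemma and itself goes through the Clotho selection process via Lemma~\ref{lem:fork}), and you isolate exactly why the nomination cannot depend on events a node has received outside the ancestor cone of $r_k$. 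The paper's version is shorter but relies on the heavier ``same shape'' claim; yours is longer but structurally cleaner.
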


\begin{proof}
	Based on Theorem~\ref{thm:same}, each node nominates a root into Clotho via the flag table. If all nodes have an OPERA chain with same shape, the values in flag table should be equal to each other in OPERA chain. Thus, all nodes nominate the same root into Clotho since the OPERA chain of all nodes has same shape.
\end{proof}

\begin{lem}
	\label{lem:resel}
	In the Reselection algorithm, for any Clotho, a root in OPERA chain selects the same consensus time candidate.
\end{lem}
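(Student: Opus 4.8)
The plan is to show that, for a fixed Clotho $c$, the quantity $r.time(c)$ produced by Algorithm~\ref{al:atc} is a deterministic function of the DAG structure alone, and therefore — by consistency of OPERA chains — is independent of which node's local view $G_i$ it is evaluated on. Since the output of the \textsc{Reselection} function on a root set $R$ (Algorithm~\ref{al:resel}) depends only on the multiset $\{\,r.time(c) : r \in R\,\}$ together with the deterministic tie-breaking rule ``smallest time among the most frequent'', it suffices to prove that each $r.time(c)$ agrees across all OPERA chains containing $r$; the claim of the lemma then follows immediately.

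First I would fix $c$ in frame $f_i$ and argue by induction on the frame gap $d = j - i$, where $r$ is a root in $R_{S_j}$. For the base case $d = 3$, Algorithm~\ref{al:atc} sets $r.time(c) = r.lamport\_time$ precisely when $r$ confirms $c$ as a Clotho. By Theorem~\ref{thm:conchains} the induced subgraph $G_i[r]$ is the same in every OPERA chain containing $r$; the Lamport timestamp of $r$ is computed from this subgraph alone (the recursion ``$1+\max$'' over the timestamps of $r$'s refs), so $r.lamport\_time$ is identical on all nodes. Likewise, by the consistent-root proposition and Lemma~\ref{lem:root}, whether ``$r$ confirms $c$ as Clotho'' — i.e.\ whether the requisite $2n/3$ roots of frame $f_{i+1}$ are happened-before $r$ and $r$ reaches the right roots — is a property of $G_i[r]$ and hence consistent. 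Thus $r.time(c)$ is well-defined for $d=3$.

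For the inductive step $d > 3$, Algorithm~\ref{al:atc} forms $s$, the set of roots in frame $f_{j-1}$ that $r$ is happened-before with the $2n/3$ condition; by the Happened-Before Lemma and the consistent-root lemmas, $s$ is determined by $G_i[r]$ and so is the same on every node. Each $r' \in s$ lies in frame $f_{j-1}$ with gap $d-1 \ge 3$, so by the induction hypothesis $r'.time(c)$ agrees across nodes. If $d \bmod h \neq 0$, then $r.time(c) = \textsc{Reselection}(s,c)$, a deterministic function of the now-agreed multiset; if $d \bmod h = 0$ (a minimal-selection frame) then $r.time(c)$ is the minimum of that same agreed multiset. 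In either branch $r.time(c)$ is identical on all OPERA chains containing $r$, which closes the induction and, applied to the root set passed to \textsc{Reselection}, yields that all roots select the same candidate time for $c$.

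The main obstacle I anticipate is making rigorous the claim that both the Lamport timestamp and the ``confirms-as-Clotho'' predicate are genuinely functions of the local induced subgraph $G_i[r]$ rather than of the node's entire run; this is where consistency of OPERA chains (Theorem~\ref{thm:conchains}), of root sets, and of flag tables must be chained together carefully. A secondary point to check is that the minimal-selection-frame branch does not disrupt the induction — it does not, since it still reads only from the already-consistent set $s$ — but it should be spelled out so the two cases of the recursion are both covered.
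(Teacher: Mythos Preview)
Your proposal is correct and in fact considerably more detailed than the paper's own argument. The paper's proof is a single sentence: it invokes Theorem~\ref{thm:same} (all nodes grow into the same consistent OPERA chain $G^C$) and concludes that, since the Reselection algorithm is a deterministic function of the chain's shape, every node computes the same candidate time. No induction on the frame gap $d$, no separate treatment of the base case $d=3$ versus $d>3$, and no explicit verification that Lamport timestamps or the ``confirms-as-Clotho'' predicate are determined by $G[r]$.

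Your approach unpacks exactly the content that the paper's one-liner leaves implicit: you make precise \emph{why} the algorithm is a function of local DAG structure by inducting through the recursion in Algorithm~\ref{al:atc}, and you correctly identify where each consistency lemma (consistent chains, consistent roots, consistent flag tables) is actually used. The only minor divergence is that you ground everything on Theorem~\ref{thm:conchains} (pairwise consistency of local chains) rather than Theorem~\ref{thm:same} (existence of a single fork-free $G^C$); for the purposes of this lemma the two are interchangeable, since what matters is that $G_1[r] = G_2[r]$ whenever $r$ is in both. Your version buys an actual proof; the paper's version buys brevity.
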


\begin{proof}
	Based on Theorem~\ref{thm:same}, if all nodes have an OPERA chain with the same partial shape, a root in OPERA chain selects the same consensus time candidate by the Reselection algorithm.
\end{proof}


\begin{lem}[Fork Lemma]
	\label{lem:fork}
	If the pair of event blocks (x,y) is fork and a root has Happened-before the fork, this fork is detected in the Clotho selection process.
\end{lem}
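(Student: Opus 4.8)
The plan is to trace the fork through the data structures the Clotho selection rests on --- consistent root sets and, above all, the per-creator flag table --- and to show that an honest node cannot reach a state in which two mutually concurrent events of the same creator are both accounted for, so the conflict must surface exactly where Clotho nomination reads that state.

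First I would record the structural facts. Since $v_x \efork v_y$, the fork-characterisation lemma gives $cr(v_x)=cr(v_y)$, call this node $n_m$, together with $v_x \concur v_y$; in particular $v_x \not \eancestor v_y$ and $v_y \not \eancestor v_x$. By Theorem~\ref{thm:conchains} all honest nodes hold pairwise consistent OPERA chains, and by the consistent-root-set and consistent-flag-table lemmas they agree on frame numbers, on every root set, and on the flag table of every top event they share; hence it suffices to exhibit the detection on a single honest node's chain. The hypothesis that a root has happened-before the fork guarantees that the fork sits inside the aggregated history that later roots observe, so some honest node eventually holds a root $r_k$ with $v_x \hbefore r_k$ and $v_y \hbefore r_k$, and Clotho selection for $r_k$ processes $r_k$ together with its ancestor roots.

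Next, the core argument. Recall from the root-selection section that the flag table of an event is an $n$-slot structure whose $m$-th slot records the hash of the root created by $n_m$ that the event can reach, and that flag tables propagate by the slot-wise $OR$ of the flag tables of the $k$ referenced events; Clotho nomination (Algorithm~\ref{al:acs}) is driven entirely by these flag tables, so by the consistent-flag-table lemma every honest node makes the same nominations. When the honest node owning $r_k$ computes $r_k$'s flag table --- or that of the first ancestor root of $r_k$ that already reaches both branches --- the $OR$ over the referenced events forces slot $m$ to be filled from two events of $n_m$, namely $v_x$ and $v_y$ (or roots from $n_m$ that respectively descend from them), which by $v_x \concur v_y$ are distinct and incomparable. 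Slot $m$ cannot be collapsed to a single root hash: the node has witnessed two events of $n_m$, neither a self-ancestor of the other, which is precisely the fork condition. So the fork surfaces at the flag-table computation that feeds Clotho nomination, i.e.\ it is detected in the Clotho selection process. This is consistent with the fork-detection lemma, which already forbids $v_x$ and $v_y$ from both being roots on an honest node.

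The main obstacle I expect is the bookkeeping in this last step: one must argue carefully, from the definitions of ``reach'', ``root'', and ``flag table'', that a common descendant of two incomparable same-creator events genuinely produces a conflicting slot under the slot-wise $OR$ --- rather than the implementation silently retaining only the first hash it saw --- and that Clotho nomination actually inspects the offending slot rather than short-circuiting earlier. A secondary subtlety is the exact reading of ``a root has happened-before the fork'': I interpreted it as forcing the eventual existence of a root $r_k$ that sees both branches; if instead it literally means that some root is a common ancestor of $v_x$ and $v_y$, the argument is unchanged once such an $r_k$ appears through subsequent event creation, but one should make that step explicit and appeal to the liveness assumption that honest gossip keeps extending the OPERA chain.
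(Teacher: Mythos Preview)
Your route is genuinely different from the paper's, and the mechanism you lean on does not close the argument.

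The paper proves the lemma by a quorum-intersection count across three consecutive frames, with no appeal to the flag table at all. Take a root $r_i$ in frame $f_i$ that witnesses the fork (either $r_i$ is one of $v_x,v_y$, or $r_i$ reaches both). For $r_i$ to be a Clotho candidate, more than $n/3$ roots in $f_{i+1}$ reach $r_i$; write $k$ for that number. Any root in $f_{i+2}$ must, by the definition of root, reach more than $2n/3$ roots of $f_{i+1}$; write $l$ for that number. Since $k+l>n$, every root in $f_{i+2}$ hits at least one $f_{i+1}$-root that already reaches $r_i$; by transitivity of happened-before, every root in $f_{i+2}$ has $r_i$ --- and hence both fork branches --- among its ancestors. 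The honest creators of those $f_{i+2}$-roots therefore hold both $v_x$ and $v_y$ in their local OPERA chains and see the fork during Clotho selection of $r_i$. The engine is pure pigeonhole on the $n/3$ and $2n/3$ thresholds.

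The specific step in your proposal that fails is the claim that slot $m$ of the flag table must register a conflict. The flag table records reachability to \emph{roots}, not to arbitrary events of the forking creator $n_m$. If neither $v_x$ nor $v_y$ is a root, slot $m$ of $r_k$'s table can perfectly well hold a single hash --- that of the last root of $n_m$ preceding the fork point, which both branches reach --- with no collision whatsoever, even though $r_k$'s ancestor set contains both $v_x$ and $v_y$. Your ``main obstacle'' paragraph already names exactly this worry, but the proposal does not discharge it; an implementation that keeps one hash per creator per frame is a faithful flag table and simply never sees the conflict you need. What actually forces detection is that the counting argument above puts both branches into some honest node's chain, and honest nodes recognise two same-creator events with neither a self-ancestor of the other directly from the chain, not from the flag table. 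So the missing idea is the $k+l>n$ overlap bound; once you have it, the flag-table machinery is unnecessary.
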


\begin{proof}
We show a proof by contradiction. Assume that no node can detect the fork in the Clotho selection process.

Assume that there is a root $r_i$ that becomes Clotho in $f_i$, which was selected as Clotho by n/3 of the roots in $f_{i+1}$. More than 2n/3 roots in $f_{i+1}$ should have happened-before by a root in $f_{i+2}$. 
If a pair ($v_x$, $v_y$) is fork, There are two cases: (1) assume that $r_i$ is one of $v_x$ and $v_y$, (2) assume that $r_i$ can reach both $v_x$ and $v_y$.

Our proof for both two cases is as follows. 

Let $k$ denote that the number of roots in $f_{i+1}$ that can reach $r_i$ in $f_{i}$ ($\therefore$ n/3 $<$ $k$).

In order to select root in $f_{i+2}$, the root in $f_{i+2}$ should reach more than 2n/3 of roots in $f_{i+1}$ by Definition~\ref{def:root}. At the moment, assume that $l$ is the number of roots in $f_{i+1}$ that can be reached by the root in $f_{i+2}$ ($\therefore$ 2n/3 $<$ $l$).

At this time, n $<$ k + l ($\because$ n/3 + 2n/3 $<$ $k$ + $l$), there are (n - $k$ + $l$) roots in frame $f_{i+1}$ that should reach $r_i$ in $f_i$ and all roots in $f_{i+2}$ should reach at least n – $k$ + $l$ of roots in $f_{i+1}$. It means that all roots in $f_{i+2}$ know the existence of $r_i$. Therefore, the node that generated all the roots of $f_{i+2}$ detect the fork in the Clotho selection of $r_i$, which contradicts the assumption.

It can be covered two cases. If $r_i$ is part of the fork, we can detect in $f_{i+2}$. If there is fork ($v_x$, $v_y$) that can be reached by $r_i$, it also can be detected in $f_{i+2}$ since we can detect the fork in the Clotho selection of $r_i$ and it indicates that all event blocks that can be reached by $r_i$ are detected by the roots in $f_{i+2}$.
\end{proof}

\begin{lem}
	For a root $v$ happened-before a fork in OPERA chain, $v$ must see the fork before becoming Clotho.
\end{lem}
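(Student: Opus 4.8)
The plan is to read this statement as a timing refinement of the Fork Lemma (Lemma~\ref{lem:fork}): that lemma already tells us the fork is detected during the Clotho-selection process attached to $v$, so here we only need to pin down that the detection happens \emph{before} the step at which $v$ could be confirmed as a Clotho. The first thing I would record is the frame budget coming from the Clotho definition and from Definition~\ref{def:root}: a root $v$ lying in frame $f_i$ cannot be nominated as a Clotho until some root appears in frame $f_{i+3}$ --- more than $n/3$ roots of $f_{i+1}$ must reach $v$, which propagates $v$ to all roots of $f_{i+2}$, and only a root of $f_{i+3}$ can certify that $v$ has reached more than $2n/3$ nodes. Hence it suffices to show the fork is already visible to every root of frame $f_{i+2}$.

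Next I would re-run the counting argument from the proof of Lemma~\ref{lem:fork}, keeping the same case split on how $v$ relates to the fork $(v_x,v_y)$: either $v$ reaches both fork vertices, or $v$ is itself one of them. In the first case $v_x,v_y\in G[v]$ already at creation time, which is many frames before $v$ could become a Clotho, so there is nothing more to do. In the second case, any root $w$ of frame $f_{i+2}$ reaches more than $2n/3$ of the roots of frame $f_{i+1}$ by Definition~\ref{def:root}, while more than $n/3$ of the roots of $f_{i+1}$ reach $v$ (this is exactly the precondition for $v$ ever becoming a Clotho, so assuming it is harmless); since $\tfrac{2n}{3}+\tfrac{n}{3}>n$ the two sets overlap, so $w$ reaches $v$ and therefore is causally at or below the fork, and the forking node's own view carries both branches. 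So every root of $f_{i+2}$ witnesses the fork.

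Finally I would assemble the two pieces: frame $f_{i+2}$ strictly precedes frame $f_{i+3}$, and the Clotho-selection procedure that would nominate $v$ runs over the roots of frames $f_{i+1}$ through $f_{i+3}$; by the previous step every root of $f_{i+2}$ already has the fork in its subgraph, so the nomination of $v$ --- if it were ever to occur --- is made in full view of the detected fork. For the degenerate sub-case where $v$ is a branch of the fork I would additionally invoke the earlier fork-detection lemma to note that $v$ cannot in fact be a root on two honest nodes, so no honest node ever completes $v$'s Clotho nomination in the first place. Either way, $v$ ``sees'' the fork before becoming a Clotho.

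The step I expect to be the main obstacle is not any calculation but the bookkeeping around the informal phrases ``$v$ happened-before a fork'' and ``$v$ sees the fork'': one has to fix, consistently with Lemma~\ref{lem:fork}, that ``happened-before a fork'' means $v$ is causally at or below the forking point (so that the Fork-Lemma case analysis applies verbatim) and that ``sees the fork'' is witnessed by the presence of both $v_x$ and $v_y$ in the relevant subgraph. Once those conventions are fixed, the only quantitative content is the gap $f_{i+3}-f_{i+2}=1$, which is precisely what makes the ordering ``detection before Clotho'' hold.
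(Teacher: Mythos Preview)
Your proposal is correct and follows the same underlying idea as the paper --- the $2n/3$ propagation threshold for Clotho nomination forces the fork to surface structurally before nomination can complete --- but you supply far more detail than the paper does. The paper's own proof is a three-sentence sketch: it observes that the fork $(v_x,v_y)$ exists, that reaching Clotho status requires propagation to more than $2n/3$ nodes, and concludes directly that the OPERA chain ``structurally'' detects the fork first. Your version unpacks this into an explicit frame-by-frame timeline ($f_i \to f_{i+2}$ for detection versus $f_{i+3}$ for nomination), re-runs the pigeonhole count from Lemma~\ref{lem:fork}, and handles the case split on how $v$ relates to the fork; all of this is sound and is exactly the content the paper's sketch is gesturing at. Your closing remark about the bookkeeping around ``happened-before a fork'' and ``sees the fork'' is apt: the paper never fixes those conventions precisely, which is part of why its proof reads as a sketch rather than a full argument.
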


\begin{proof}
	Suppose that a node creates two event blocks ($v_x, v_y$), which forms a fork. To create two Clothos that can reach both events, the event blocks should reach by more than 2n/3 nodes. Therefore, the OPERA chain can structurally detect the fork before roots become Clotho.
\end{proof}

\begin{thm}[Fork Absence]
\label{thm:same}
All nodes grows up into same consistent OPERA chain $G^C$, which contains no fork.
\end{thm}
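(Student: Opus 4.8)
The plan is to prove the two halves of the statement in turn: first that every honest node's OPERA chain converges to one and the same consistent chain $G^C$, and then that $G^C$ contains no fork.

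\textbf{Convergence to a common $G^C$.} First I would invoke Theorem~\ref{thm:conchains} to get $G_i \sim G_j$ for every pair of honest nodes, so the global consistent chain $G^C$ (with $G^C \sim G_i$ for all honest $i$) is well defined and, by the subgraph lemmas, satisfies $G^C \sqsubseteq G_i$ together with the downward-closure property $v_p \hbefore v_c \in G^C \Rightarrow v_p \in G^C$. Next I would push root sets and frame numbers onto $G^C$ using the root-consistency proposition and the consistent-flag-table lemma, so that all honest nodes agree on which blocks are roots and on their frames. Finally, Lemma~\ref{lem:root} gives that every node nominates the same root as Clotho, and Lemma~\ref{lem:resel} gives that Reselection assigns each Clotho the same consensus time on every node; hence the Atropos set and the Main-chain are identical across honest nodes. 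Together with liveness of block creation --- new blocks are created indefinitely, each referencing the current top events, so the $2n/3$ quorum condition of Definition~\ref{def:root} is met infinitely often --- the finalized prefix of $G^C$ grows without bound, so each honest view indeed ``grows up into'' the single chain $G^C$.

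\textbf{Fork absence in $G^C$.} I would argue by contradiction: suppose $v_x \efork v_y$ with both $v_x, v_y \in G^C$. By liveness and the reference discipline, some later root $r \in G^C$ reaches both $v_x$ and $v_y$ (i.e.\ $v_x \hbefore r$ and $v_y \hbefore r$), so $r$ sees the fork; this is case (2) of the Fork Lemma~\ref{lem:fork}, and together with the lemma that a root reaching a fork must see it before becoming Clotho, the fork is exposed during the Clotho selection of $r$. But the fork-detection lemma shows that $v_x$ and $v_y$ cannot both be roots on honest nodes, and since all honest nodes share the root sets of $G^C$, at most one of $v_x, v_y$ is a root in $G^C$. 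Because Clothos and Atroposes --- hence the Main-chain and the finalized structure of $G^C$ --- are assembled only from roots, and detection forces honest nodes to drop the offending branch, $G^C$ cannot contain both $v_x$ and $v_y$, a contradiction.

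The hard part will be making ``detection'' operationally precise and showing it genuinely forces exclusion from $G^C$ --- in particular ruling out a Byzantine creator who splits the two fork branches across disjoint honest subsets so as to defer detection forever. This is exactly where the quorum-intersection counting is needed: the $n/3$ overlap of any two $2n/3$-quorums used in the fork-detection lemma, and the bound $k+l>n$ in the proof of Lemma~\ref{lem:fork}. The delicate step is threading those bounds through the convergence argument so that some common honest root is guaranteed eventually to witness any fork, which is what closes the contradiction.
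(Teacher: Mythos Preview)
Your core mechanism for the ``no fork'' half---assume a fork survives, invoke the Fork Lemma~\ref{lem:fork} so that it is detected during Clotho selection, hence excluded---is essentially the paper's argument. The paper's own proof is in fact only that step: it posits $v_x,v_y$ present in both $G_1$ and $G_2$ with differing paths, calls this a fork attack, cites Lemma~\ref{lem:fork} to conclude each chain detects and removes the fork before Clotho is generated, and stops. It does not separately develop a convergence argument through roots, Clothos, and Atroposes; ``same consistent $G^C$'' is treated as a direct consequence of fork removal plus Theorem~\ref{thm:conchains}.

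There is, however, a genuine circularity in your convergence half. You invoke Lemma~\ref{lem:root} and Lemma~\ref{lem:resel} to argue that all honest nodes nominate the same Clothos and assign the same consensus times. But in the paper both of those lemmas are proved \emph{from} Theorem~\ref{thm:same}: each proof begins ``Based on Theorem~\ref{thm:same}\ldots''. Using them inside the proof of Theorem~\ref{thm:same} is therefore a cycle in the dependency graph. If you want to keep a structured convergence argument, you must either reorder the dependencies---derive the content of Lemmas~\ref{lem:root} and~\ref{lem:resel} directly from Theorem~\ref{thm:conchains}, the root-consistency proposition, and the consistent-flag-table lemma, without appealing to the theorem you are proving---or else follow the paper's minimal route and treat agreement on Clotho/Atropos as a \emph{corollary} of Theorem~\ref{thm:same} rather than an ingredient in its proof.
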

\begin{proof}
 Suppose that there are two event blocks $v_x$ and $v_y$ contained in both $G_1$ and $G_2$, and their path between $v_x$ and $v_y$ in $G_1$ is not equal to that in $G_2$. We can consider that the path difference between the nodes is a kind of fork attack. Based on Lemma~\ref{lem:fork}, if an attacker forks an event block, each chain of $G_i$ and $G_2$ can detect and remove the fork before the Clotho is generated. Thus, any two nodes have consistent OPERA chain. 
\end{proof}

\begin{defn}[Atropos]
	If the consensus time of Clotho is validated, the Clotho become an Atropos. 
\end{defn}

%

\begin{thm}
\label{thm:ct}
Lachesis consensus algorithm guarantees to reach agreement for the consensus time.
\end{thm}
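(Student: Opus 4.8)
The plan is to split the statement into the two standard requirements of Byzantine agreement on the consensus time of each Clotho: \emph{safety} --- no two honest nodes assign different consensus times to the same Clotho, and an assigned time is never revised --- and \emph{liveness} --- every honest node eventually assigns a consensus time. The engine of the whole argument is Theorem~\ref{thm:same}: since all honest nodes grow into the same fork-free consistent chain $G^C$, and Algorithms~\ref{al:atc} and~\ref{al:resel} are deterministic functions of the chain, the quantities $r.time(c)$ computed for every root $r$ and every Clotho $c$ coincide across nodes as soon as the relevant prefix of $G^C$ is shared. Lemmas~\ref{lem:root} and~\ref{lem:resel} already package the two pieces I need downstream: all nodes nominate the same Clothos, and, for a fixed Clotho, all nodes run the same reselection.

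For safety I would first note that, by the above determinism, if one node records a consensus time $t$ for a Clotho $c$, then every node records exactly that same $t$ once it holds the same frame of $G^C$; so it suffices to show the recorded value is unique. The key lemma is a \emph{locking} claim: if more than $2n/3$ roots of a frame carry the same candidate time $t$, then every root of the next frame reselects $t$. This is a short counting argument from the $>2n/3$-honest hypothesis --- a root of the next frame reaches more than $2n/3$ roots of the previous one, hence reaches more than $n/3$ carriers of $t$, whereas every other value is carried by fewer than $n/3$ roots in total, so $t$ is strictly the most frequent value among those reached and the smallest-time tie-break is irrelevant. Hence once the $>2n/3$ threshold is attained for $t$ it persists forever, no competing value can ever reach it, and the first root in a non-minimal frame that witnesses the threshold (line where $c.consensus\_time \leftarrow t$ in Algorithm~\ref{al:atc}) pins down the unique consensus time.

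For liveness I would argue that reselection is contracting on the set of live candidate values. In a non-minimal frame each root outputs a value that was already present in the previous frame (Algorithm~\ref{al:resel} returns some $t_i$ in $\tau$), so the set of distinct candidate times is non-increasing along the chain; in a minimal-selection frame (one at distance a multiple of $h$) each root instead adopts the minimum of the values it reaches, which I would show drives the set toward its global minimum: once that minimum is carried by more than $n/3$ roots of a frame it is reached by \emph{every} root of the next frame, producing unanimity and therefore the $>2n/3$ threshold, after which the locking lemma of the safety part guarantees the next non-minimal frame decides. Because the OPERA chain keeps growing --- new roots and new frames are produced indefinitely by root-creation liveness, and new Clothos by Lemma~\ref{lem:root} --- such a frame is eventually reached, so every Clotho is eventually validated and becomes an Atropos.

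The main obstacle I expect is exactly the convergence half of liveness: the minimal-selection rule is the device that sidesteps the FLP impossibility~\cite{Fischer85}, but upgrading ``take the minimum'' to a guaranteed strict shrinking of the live-value set needs care --- a single minimal frame only helps when the current global-minimum value is already held by more than $n/3$ roots, so in the worst case one must follow the spread of candidate values across several consecutive minimal frames and argue it cannot remain dispersed, all while re-using the $>2n/3$-honest intersection bounds. The remaining ingredients (determinism from $G^C$, the locking lemma, monotonicity of the distinct-value set) are routine counting once that core claim is established.
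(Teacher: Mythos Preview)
Your proposal follows the same core mechanism as the paper --- reselection plus the periodic minimal-selection frame as the device to sidestep FLP --- but is considerably more structured than the paper's own argument, which is a four-sentence sketch that simply asserts ``the consensus time reaches consensus by time consensus algorithm'' once the minimal-selection rule is mentioned. In particular, the paper does not separate safety from liveness and offers no analogue of your locking lemma; it treats only termination, and that only by assertion. Your safety/liveness decomposition, the $>2n/3$-intersection counting behind the locking claim, and your explicit identification of the remaining convergence obstacle (why iterated minimal frames must eventually push the global minimum past the $n/3$ threshold) all pinpoint steps the paper leaves implicit. So: same approach in spirit, but your write-up would constitute a substantially more complete proof than what actually appears in the paper.
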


\begin{proof}
For any root set $R$ in the frame $f_{i}$, time consensus algorithm checks whether more than 2n/3 roots in the frame $f_{i-1}$ selects the same value. However, each node selects one of the values collected from the root set in the previous frame by the time consensus algorithm and Reselection process. Based on the Reselection process, the time consensus algorithm can reach agreement. However, there is a possibility that consensus time candidate does not reach agreement~\cite{Fischer85}. To solve this problem, time consensus algorithm includes minimal selection frame per next $h$ frame. In minimal value selection algorithm, each root selects minimum value among values collected from previous root set. Thus, the consensus time reaches consensus by time consensus algorithm.
\end{proof}

\begin{thm}
\label{thm:bft}
If the number of reliable nodes is more than $2n/3$, event blocks created by reliable nodes must be assigned to consensus order.
\end{thm}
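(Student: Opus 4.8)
The plan is to combine a \emph{liveness} claim --- every event block created by a reliable node is eventually placed in a frame and then sits below some Atropos on the Main-chain --- with a \emph{determinism} claim --- once a block lies below an Atropos, the three topological ordering rules pin down its position, and all reliable nodes compute the same one. The backbone facts are Theorem~\ref{thm:same} (all nodes grow into the same fork-free consistent chain $G^C$), Lemma~\ref{lem:root} (consistent Clotho nomination from the flag table), and Theorem~\ref{thm:ct} (the Atropos consensus time always reaches agreement).

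\textbf{Liveness.} Since more than $2n/3$ nodes are reliable, each such node keeps executing the main loop forever, so it keeps creating event blocks, and by the peer-synchronization procedure (Algorithm~\ref{al:syncevents}) every block it creates is delivered to every reliable node after finitely many sync rounds. Fix an event block $v$ created by a reliable node. After enough rounds, more than $2n/3$ nodes have produced roots $r$ in a common later frame with $v \hbefore r$; because such roots are created by a reliable majority, the root-set construction of Definition~\ref{def:root} keeps producing new root sets $R_{k}, R_{k+1}, \dots$ without bound, and $v$ lies in a well-defined frame $f_k$. Eventually a root in frame $f_{k+3}$ exists, so by the Clotho condition together with Lemma~\ref{lem:root} all reliable nodes nominate the roots of $f_k$ --- in particular the roots witnessing $v$ --- as Clothos, consistently on the shared $G^C$ by Theorem~\ref{thm:same}. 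Applying Theorem~\ref{thm:ct}, each such Clotho is eventually assigned an Atropos consensus time agreed upon by all reliable nodes (the Reselection procedure of Algorithm~\ref{al:resel} plus the periodic minimum-selection frame forcing termination despite the FLP obstruction~\cite{Fischer85}), so it becomes an Atropos appended identically to every reliable node's Main-chain. Hence there is an Atropos $A$ with $v \hbefore A$.

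\textbf{Determinism of the ordering.} Every reliable node now computes $v$'s consensus position from the nearest Atropos that happened-after $v$: rule (1) orders blocks by the consensus time of their governing Atropos, rule (2) breaks ties by the block's own Lamport timestamp, and rule (3) breaks any remaining ties by hash value. Each input to this computation --- the membership of $v$ and its ancestors in $G^C$, the Atropos consensus times, the Lamport timestamps, and the hashes --- is identical across reliable nodes by Theorems~\ref{thm:same} and~\ref{thm:ct}. Therefore all reliable nodes assign $v$ the same position, so $v$ is assigned a consensus order, which is what the theorem asserts.

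\textbf{Main obstacle.} The delicate part is the liveness step: rigorously ruling out that the asynchronous schedule or the up-to-$n/3$ Byzantine participants can stall frame progression forever --- i.e.\ prevent some reliable block from ever being reached by $2n/3$ honest roots, or prevent the frame $f_{k+3}$ from ever materializing. This requires the reliable-majority hypothesis (enough honest blocks are always being created and gossiped so that roots, and then Clothos at deeper frames, keep appearing) together with Lemma~\ref{lem:fork} and Theorem~\ref{thm:same}, which ensure that Byzantine forks are detected and discarded before they can corrupt the root and Clotho computations; thus Byzantine behaviour can at worst be ignored rather than indefinitely blocking progress.
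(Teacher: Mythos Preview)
Your proposal is correct but emphasizes a different aspect in its second half than the paper does. Both arguments open the same way: with more than $2n/3$ reliable nodes continually gossiping, an honest event block $x$ is eventually shared, some later root $y$ with $x \hbefore y$ acquires enough $f_{i+1}$-support to become Clotho and then Atropos, and hence $x$ inherits a consensus time. From there you pivot to \emph{agreement}: invoking Theorem~\ref{thm:same}, Lemma~\ref{lem:root}, and Theorem~\ref{thm:ct}, you argue that the Atropos time, Lamport timestamp, and hash are identical across all reliable nodes, so the three ordering rules yield the same position everywhere. The paper instead pivots to \emph{finality}: it argues directly that once $x$ has consensus time $t$, no later-appearing block can be inserted with time earlier than $t$, because that would require more than $2n/3$ roots of frame $f_{i+1}$ to reach a new root $r$ --- impossible, since those $f_{i+1}$ roots are already fixed (this is the paper's parasite-chain rebuttal). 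Your route is cleaner in that it reuses the already-proved structural theorems rather than reproving a stability property from scratch; the paper's route buys an explicit no-reordering guarantee that your version leaves implicit in the appeal to Theorem~\ref{thm:same}. Your ``main obstacle'' paragraph, flagging that unbounded frame progression under asynchrony is where the real work hides, is a point the paper's proof glosses over entirely.
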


\begin{proof}
In OPERA chain, since reliable nodes try to create event blocks by communicating with every other nodes continuously, reliable nodes will share the event block $x$ with each other. If a root $y$ in the frame $f_{i}$ Happened-Before event block $x$ and more than n/3 roots in the frame $f_{i+1}$ Happened-Before the root $y$, the root $y$ will be nominated as Clotho and Atropos. Thus, event block $x$ and root $y$ will be assigned consensus time $t$. 

For an event block, assigning consensus time means that the validated event block is shared by more than 2n/3 nodes. Therefore, malicious node cannot try to attack after the event blocks are assigned consensus time. When the event block $x$ has consensus time $t$, it cannot occur to discover new event blocks with earlier consensus time than $t$.
There are two conditions to be assigned consensus time earlier than $t$ for new event blocks. First, a root $r$ in the frame $f_{i}$ should be able to share new event blocks. Second, the more than 2n/3 roots in the frame $f_{i+1}$ should be able to share $r$. Even if the first condition is satisfied by malicious nodes (e.g., parasite chain),
the second condition cannot be satisfied since at least 2n/3 roots in the frame $f_{i+1}$ are already created and cannot be changed. Therefore, after an event block is validated, new event blocks should not be participate earlier consensus time to OPERA chain. 
\end{proof}

\subsection{Semantics of Lachesis protocol}\label{sec:semantics} 

This section gives the formal semantics of Lachesis consensus protocol.
We use CCK model \cite{cck92} of an asynchronous system as the base of the semantics of our Lachesis protocol. Events are ordered based on Lamport's happens-before relation. 
In particular, we use Lamport’s theory to describe global states of an asynchronous system.

We present notations and concepts, which are important for Lachesis protocol. In several places, we adapt the notations and concepts of CCK paper to suit our Lachesis protocol. 

An asynchronous system consists of the following:
\begin{defn}[process]
	A process $p_i$ represents a machine or a node. The process identifier of $p_i$ is $i$. A set $P$ = \{1,...,$n$\} denotes the set of process identifiers.
\end{defn}
\begin{defn}[channel]
	A process $i$ can send messages to process $j$ if there is a channel ($i$,$j$). Let $C$ $\subseteq$ \{($i$,$j$) s.t. $i,j \in P$\} denote the set of channels.
\end{defn}
\begin{defn}[state]
	A local state of a process $i$ is denoted by $s_j^i$.
\end{defn}
A local state consists of a sequence of event blocks $s_j^i = v_0^i, v_1^i, \dots, v_j^i$. 

In a DAG-based protocol, each $v_j^i$ event block is valid only the reference blocks exist exist before it. From a local state $s_j^i$, one can reconstruct a unique DAG. That is, the mapping from a local state  $s_j^i$ into a DAG is \emph{injective} or one-to-one. 
Thus, for Lachesis, we can simply denote the $j$-th local state of a process $i$ by the OPERA chain $g_j^i$ (often we simply use $G_i$ to denote the current local state of a process $i$).

\begin{defn}[action]
	An action is a function from one local state to another local state.
\end{defn}
Generally speaking, an action can be either: a $send(m)$ action where $m$ is a message, a $receive(m)$ action, and an internal action. A message $m$ is a triple $\langle i,j,B \rangle$ where $i \in P$ is the sender of the message, $j \in P$ is the message recipient, and $B$ is the body of the message. Let $M$ denote the set of messages. 
In Lachesis protocol, $B$ consists of the content of an event block $v$. 
Semantics-wise, in Lachesis, there are  two actions that can change a process's local state: creating a new event and receiving an event from another process.

\begin{defn}[event] An event is a tuple $\langle  s,\alpha,s' \rangle$ consisting of a state, an action, and a state.
\end{defn}

Sometimes, the event can be represented by the end state $s'$.
The $j$-th event in history $h_i$ of process $i$ is $\langle  s_{j-1}^i,\alpha,s_j^i \rangle$, denoted by $v_j^i$.

\begin{defn}[local history] A local history $h_i$ of process $i$ is a (possibly infinite) sequence of alternating local states  --- beginning with a distinguished initial state. A set $H_i$ of possible local histories for each process $i$ in $P$.
\end{defn}

The state of a process can be obtained from its initial state and the sequence of actions or events that have occurred up to the current state. 
In Lachesis protocol, we use append-only sematics. The local history may be equivalently described as either of the following:
$$h_i = s_0^i,\alpha_1^i,\alpha_2^i, \alpha_3^i \dots $$
$$h_i = s_0^i, v_1^i,v_2^i, v_3^i \dots $$
$$h_i = s_0^i, s_1^i, s_2^i, s_3^i, \dots$$

In Lachesis, a local history is equivalently expressed as:
$$h_i = g_0^i, g_1^i, g_2^i, g_3^i, \dots$$
where $g_j^i$ is the $j$-th local OPERA chain (local state) of the process $i$.

\begin{defn}[run] Each asynchronous run is a vector of local histories. Denoted by
	$\sigma = \langle h_1,h_2,h_3,...h_N \rangle$.
\end{defn}

Let $\Sigma$ denote the set of asynchronous runs.

We can now use Lamport’s theory to talk about global states of an asynchronous system.
A global state of run $\sigma$ is an $n$-vector of prefixes of local histories of $\sigma$, one prefix per process.
The happens-before relation can be used to define a consistent global state, often termed a consistent cut, as follows.

\begin{defn}[Consistent cut] A consistent cut of a run $\sigma$ is any global state such that if $v_x^i \rightarrow v_y^j$ and $v_y^j$ is in the global state, then $v_x^i$ is also in the global state. Denoted by $\vec{c}(\sigma)$.
\end{defn}

By Theorem~\ref{thm:conchains}, all nodes have consistent local OPERA chains. The concept of consistent cut formalizes such a global state of a run. A consistent cut consists of all consistent OPERA chains. A received event block exists in the global state implies the existence of the original event block.
Note that a consistent cut is simply a vector of local states; we will use the notation $\vec{c}(\sigma)[i]$ to indicate the local state of $i$ in cut $\vec{c}$ of run $\sigma$.



The formal semantics of an asynchronous system is given via  the satisfaction relation $\vdash$. Intuitively $\vec{c}(\sigma) \vdash \phi$, ``$\vec{c}(\sigma)$ satisfies $\phi$,'' if fact $\phi$ is true in cut $\vec{c}$ of run $\sigma$. 
We assume that we are given a function $\pi$ that assigns a truth value to each primitive proposition $p$. The truth of a primitive proposition $p$ in $\vec{c}(\sigma)$ is determined by $\pi$ and $\vec{c}$. This defines $\vec{c}(\sigma) \vdash p$.

\begin{defn}[equivalent cuts]
	Two cuts $\vec{c}(\sigma)$ and $\vec{c'}(\sigma')$ are equivalent  with respect to $i$ if: $$\vec{c}(\sigma) \sim_i \vec{c'}(\sigma') \Leftrightarrow \vec{c}(\sigma)[i] = \vec{c'}(\sigma')[i]$$
\end{defn}

We introduce two families of modal operators, denoted by $K_i$ and $P_i$, respectively. Each family indexed by process identifiers. 
Given a fact $\phi$, the modal operators are defined as follows:
\begin{defn}[$i$ knows $\phi$]
	$K_i(\phi)$ represents the statement ``$\phi$ is true in all possible consistent global states that include $i$’s local state''. 
		$$\vec{c}(\sigma) \vdash K_i(\phi) \Leftrightarrow \forall \vec{c'}(\sigma')   (\vec{c'}(\sigma') \sim_i \vec{c}(\sigma) \ \Rightarrow\ \vec{c'}(\sigma') \vdash \phi) $$
\end{defn}

\begin{defn}[$i$ partially knows $\phi$]
	$P_i(\phi)$ represents the statement ``there is some consistent global state in this run that includes $i$’s local state, in which $\phi$ is true.''
		$$\vec{c}(\sigma) \vdash P_i(\phi) \Leftrightarrow \exists \vec{c'}(\sigma) ( \vec{c'}(\sigma) \sim_i \vec{c}(\sigma) \ \wedge\ \vec{c'}(\sigma) \vdash \phi )$$ 
\end{defn}

The next modal operator is written $M^C$ and stands for ``majority concurrently knows.'' This is adapted from the ``everyone concurrently knows'' in CCK paper~\cite{cck92}.
The definition of $M^C(\phi)$ is as follows.

\begin{defn}[majority concurrently knows]
	$$M^C(\phi) =_{def} \bigwedge_{i \in S} K_i P_i(\phi), $$ where $S \subseteq P$ and $|S| > 2n/3$.	
\end{defn}

In the presence of one-third of faulty nodes, the original operator ``everyone concurrently knows'' is sometimes not feasible.
Our modal operator $M^C(\phi)$ fits precisely the semantics for BFT systems, in which unreliable processes may exist.

The last modal operator is concurrent common knowledge (CCK), denoted by $C^C$.
\begin{defn}[concurrent common knowledge]
	$C^C(\phi)$ is defined as a fixed point of $M^C(\phi \wedge X)$
\end{defn}
CCK defines a state of process knowledge that implies that all processes are in that same state of knowledge, with respect to $\phi$, along some cut of the run. In other words, we want a state of knowledge $X$ satisfying: $X = M^C(\phi \wedge X)$.	
$C^C$ will be defined semantically as the weakest such fixed point, namely as the greatest fixed-point of $M^C(\phi \wedge X)$.
It therefore satisfies:
$$C^C(\phi) \Leftrightarrow  M^C(\phi \wedge C^C(\phi))$$

Thus, $P_i(\phi)$ states that there is some cut in the same asynchronous run $\sigma$ including $i$’s local state, such that $\phi$ is true in that cut.

Note that $\phi$ implies $P_i(\phi)$. But it is not the case, in general, that $P_i(\phi)$ implies $\phi$ or even that $M^C(\phi)$ implies $\phi$. The truth of $M^C(\phi)$ is determined with respect to some cut $\vec{c}(\sigma)$. A process cannot distinguish which cut, of the perhaps many cuts that are in the run and consistent with its local state, satisfies $\phi$; it can only know the existence of such a cut. 

\begin{defn}[global fact]
	Fact $\phi$ is valid in system $\Sigma$, denoted by $\Sigma \vdash \phi$, if $\phi$ is true in all cuts of all runs of $\Sigma$.	
	$$\Sigma \vdash \phi 
	\Leftrightarrow (\forall \sigma \in \Sigma)(\forall\vec{c}) (\vec{c}(a) \vdash \phi)$$
\end{defn} 

\begin{defn}
	Fact $\phi$ is valid, denoted $\vdash \phi$, if $\phi$ is valid in all systems, i.e. 
	$(\forall \Sigma) (\Sigma \vdash \phi)$.
\end{defn}

\begin{defn}[local fact]
	 A fact $\phi$ is local to process $i$ in system $\Sigma$ if
	 $\Sigma \vdash (\phi \Rightarrow K_i \phi)$
\end{defn}

\begin{thm} If $\phi$ is local to process $i$ in system $\Sigma$, then $\Sigma \vdash (P_i(\phi) \Rightarrow \phi)$.	
\end{thm}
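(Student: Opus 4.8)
The plan is to unfold the satisfaction relation $\vdash$ and chase the definitions of $P_i$, $K_i$, and of ``$\phi$ is local to $i$''. Fix an arbitrary run $\sigma \in \Sigma$ and an arbitrary cut $\vec{c}$; by the definition of validity in $\Sigma$ it suffices to show $\vec{c}(\sigma) \vdash (P_i(\phi) \Rightarrow \phi)$, so I would assume $\vec{c}(\sigma) \vdash P_i(\phi)$ and aim to conclude $\vec{c}(\sigma) \vdash \phi$.

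First I would apply the definition of $P_i$: from $\vec{c}(\sigma) \vdash P_i(\phi)$ we obtain a cut $\vec{c'}(\sigma)$ of the \emph{same} run $\sigma$ with $\vec{c'}(\sigma) \sim_i \vec{c}(\sigma)$ and $\vec{c'}(\sigma) \vdash \phi$. Next, since $\phi$ is local to $i$ in $\Sigma$, the fact $\phi \Rightarrow K_i\phi$ is valid in $\Sigma$, hence it holds at the cut $\vec{c'}(\sigma)$ in particular; together with $\vec{c'}(\sigma) \vdash \phi$ this gives $\vec{c'}(\sigma) \vdash K_i\phi$.

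Then I would invoke the definition of $K_i$ at $\vec{c'}(\sigma)$: for every cut $\vec{c''}(\sigma'')$ (over \emph{all} runs) with $\vec{c''}(\sigma'') \sim_i \vec{c'}(\sigma)$ we have $\vec{c''}(\sigma'') \vdash \phi$. The one preliminary observation needed is that $\sim_i$ is an equivalence relation — in particular symmetric — since it is defined by equality of the $i$-th component of the cut; hence from $\vec{c'}(\sigma) \sim_i \vec{c}(\sigma)$ we get $\vec{c}(\sigma) \sim_i \vec{c'}(\sigma)$. Instantiating the universal quantifier in $K_i$ with $\vec{c''}(\sigma'') := \vec{c}(\sigma)$ then yields $\vec{c}(\sigma) \vdash \phi$, which closes the argument.

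This proof has essentially no hard step; the only thing to watch is the quantifier bookkeeping between $P_i$ and $K_i$. The witness supplied by $P_i$ lives in the same run $\sigma$, whereas the guarantee from $K_i$ ranges over cuts of all runs; the final instantiation is legitimate precisely because that universal quantifier subsumes $\sigma$ itself. For that reason I would state and use symmetry (and reflexivity) of $\sim_i$ explicitly as a small lemma before the main computation, so that transferring $\phi$ from $\vec{c'}$ back to $\vec{c}$ is unambiguous.
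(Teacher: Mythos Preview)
Your proof is correct and is precisely the standard unwinding-of-definitions argument: extract the witnessing cut $\vec{c'}(\sigma)$ from $P_i(\phi)$, upgrade $\phi$ to $K_i\phi$ there via locality, and then use the universal force of $K_i$ together with symmetry of $\sim_i$ to transport $\phi$ back to $\vec{c}(\sigma)$. The paper does not actually supply its own proof of this theorem; it states the result and later remarks that ``for a proof of the above theorems and lemmas in this Section, we can use similar proofs as described in the original CCK paper''~\cite{cck92}. Your argument is exactly that CCK proof, so there is nothing to compare --- you have filled in what the paper left implicit.
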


\begin{lem}
	If fact $\phi$ is local to 2/3 of the processes in a system $\Sigma$, then $\Sigma \vdash (M^C(\phi) \Rightarrow \phi)$ and furthermore $\Sigma \vdash (C^C(\phi) \Rightarrow \phi)$.
	\end{lem}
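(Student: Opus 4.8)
The plan is to reduce both assertions to the previous theorem --- that $\Sigma \vdash (P_i(\phi) \Rightarrow \phi)$ whenever $\phi$ is local to $i$ --- by a quorum-intersection argument, using two elementary facts about the modal operators: reflexivity of $\sim_i$ and monotonicity of $K_i$ and $P_i$ (hence of $M^C$). First I would fix the reading of ``local to $2/3$ of the processes'': there is a set $T \subseteq P$ with $|T| > 2n/3$ such that $\phi$ is local to every $i \in T$. Next I would record the reflexivity observation: since $\vec{c}(\sigma)[i] = \vec{c}(\sigma)[i]$ we have $\vec{c}(\sigma) \sim_i \vec{c}(\sigma)$, so $\vec{c}(\sigma) \vdash K_i(\psi)$ implies $\vec{c}(\sigma) \vdash \psi$ for any fact $\psi$; in particular, $\vec{c}(\sigma) \vdash K_i P_i(\phi)$ implies $\vec{c}(\sigma) \vdash P_i(\phi)$.

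For the first claim, I would take an arbitrary run $\sigma$ and consistent cut $\vec{c}$ with $\vec{c}(\sigma) \vdash M^C(\phi)$. By the definition of $M^C$ there is a set $S \subseteq P$ with $|S| > 2n/3$ and $\vec{c}(\sigma) \vdash K_i P_i(\phi)$ for every $i \in S$. Since $|S| + |T| > 4n/3 > n$, inclusion--exclusion gives $|S \cap T| \geq |S| + |T| - n > n/3 > 0$, so I may pick some $i \in S \cap T$. From $i \in S$ and the reflexivity observation, $\vec{c}(\sigma) \vdash P_i(\phi)$; from $i \in T$ and the previous theorem (applied \emph{at this same cut}), $\vec{c}(\sigma) \vdash \phi$. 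As $\sigma$ and $\vec{c}$ were arbitrary, $\Sigma \vdash (M^C(\phi) \Rightarrow \phi)$.

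For the concurrent common knowledge claim, I would first note monotonicity: if $\psi \Rightarrow \psi'$ is valid then $P_i(\psi) \Rightarrow P_i(\psi')$ and $K_i(\psi) \Rightarrow K_i(\psi')$, and consequently $M^C(\psi) \Rightarrow M^C(\psi')$ (keeping the same witnessing set $S$). Applying this to the valid implication $\phi \wedge C^C(\phi) \Rightarrow \phi$ yields $M^C(\phi \wedge C^C(\phi)) \Rightarrow M^C(\phi)$. Combining with the fixed-point identity $C^C(\phi) \Leftrightarrow M^C(\phi \wedge C^C(\phi))$ and the first claim gives the chain $C^C(\phi) \Rightarrow M^C(\phi \wedge C^C(\phi)) \Rightarrow M^C(\phi) \Rightarrow \phi$, hence $\Sigma \vdash (C^C(\phi) \Rightarrow \phi)$.

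I expect the only real point of care to be bookkeeping rather than substance: verifying that two quorums each of size exceeding $2n/3$ must intersect (which holds because $n$ is the total number of processes, independent of how many are faulty), and ensuring the previous theorem is invoked at exactly the cut at which $M^C(\phi)$ was assumed. The reflexivity and monotonicity of the knowledge operators are immediate from their definitions, so no deeper obstacle should arise.
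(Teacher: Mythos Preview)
Your argument is correct. The paper itself does not supply an explicit proof of this lemma; it simply remarks that ``for a proof of the above theorems and lemmas in this Section, we can use similar proofs as described in the original CCK paper.'' Your proposal is precisely the natural adaptation of the CCK argument to the majority operator $M^C$: in CCK, where the operator $E^C$ ranges over all processes, it suffices that $\phi$ be local to a single process; here, since $M^C$ only guarantees a witnessing set $S$ with $|S| > 2n/3$, you correctly identify that the new ingredient is the pigeonhole step ensuring $S$ meets the set $T$ of processes to which $\phi$ is local. The reflexivity of $\sim_i$ to pass from $K_iP_i(\phi)$ to $P_i(\phi)$, the invocation of the preceding theorem at the same cut, and the monotonicity-plus-fixed-point unwinding for $C^C$ are all standard and exactly what the deferred CCK proof would do. In short, you have written out what the paper leaves implicit, with the quorum-intersection observation being the one genuinely new (and easy) point.
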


\begin{defn}
	A fact $\phi$ is attained in run $\sigma$ if $\exists \vec{c}(\sigma) (\vec{c}(\sigma) \vdash \phi)$.
\end{defn}

Often, we refer to ``knowing'' a fact $\phi$ in a state rather than in a consistent cut, since knowledge is dependent only on the local state of a process.
Formally, $i$ knows $\phi$ in state $s$ is shorthand for
$$\forall \vec{c}(\sigma) (\vec{c}(\sigma)[i] = s \Rightarrow \vec{c}(\sigma) \vdash \phi)$$

For example, if a process in Lachesis protocol knows a fork exists (i.e., $\phi$ is the exsistenc of fork) in its local state $s$ (i.e., $g_j^i$), then a consistent cut contains the state $s$ will know the existence of that fork.

\begin{defn}[$i$ learns $\phi$]
 Process $i$ learns $\phi$ in state $s_j^i$ of run $\sigma$ if $i$ knows $\phi$ in $s_j^i$ and, for all previous states $s_k^i$ in run $\sigma$, $k < j$, $i$ does not know $\phi$.
 \end{defn}

The following theorem says that if $C_C(\phi$ is attained in a run then all processes $i$ learn $P_i C^C(\phi)$ along a single consistent cut.

\begin{thm}[attainment]
	 If $C^C(\phi)$ is attained in a run $\sigma$, then the set of states in which all processes learn $P_i C^C(\phi)$ forms a consistent cut in $\sigma$.	
\end{thm}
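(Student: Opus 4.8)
The plan is to follow the structure of the attainment theorem of the CCK paper~\cite{cck92}, adapted to the ``majority'' operator $M^C$. First I would unpack the hypothesis: $C^C(\phi)$ attained in $\sigma$ means there is a cut $\vec{c}_0$ with $\vec{c}_0(\sigma) \vdash C^C(\phi)$. Using the fixed-point identity $C^C(\phi) \Leftrightarrow M^C(\phi \wedge C^C(\phi))$ together with the monotonicity of $K_i$ and $P_i$ (note $\phi \wedge C^C(\phi)$ entails $C^C(\phi)$), we obtain $\vec{c}_0(\sigma) \vdash K_i P_i(C^C(\phi))$ for every $i$ in a majority set $S_0$ with $|S_0| > 2n/3$. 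Since $K_i P_i(\cdot)$ depends only on $i$'s local state, each such $i$ already knows $P_i C^C(\phi)$ in state $\vec{c}_0(\sigma)[i]$; hence the state $\ell_i$ in which $i$ learns $P_i C^C(\phi)$ is well defined and is a prefix of $\vec{c}_0(\sigma)[i]$. Let $\vec{c}^{*}$ be the global state with $\vec{c}^{*}[i] = \ell_i$; the goal is to show $\vec{c}^{*}$ is a consistent cut.

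The key step would be the following lemma: for each $i$ there is a consistent cut $\vec{d}^{(i)}$ with $\vec{d}^{(i)}[i] = \ell_i$ and $\vec{d}^{(i)}(\sigma) \vdash C^C(\phi)$. To prove it, extend the local state $\ell_i$ to some consistent cut $\vec{e}$ with $\vec{e}[i] = \ell_i$ (routine, since $\ell_i$ is a prefix of $\vec{c}_0(\sigma)[i]$ and $\vec{c}_0$ is consistent); since $i$ knows $P_i C^C(\phi)$ at $\ell_i$ we have $\vec{e}(\sigma) \vdash P_i C^C(\phi)$, so by the definition of $P_i$ there is $\vec{d}^{(i)} \sim_i \vec{e}$ with $\vec{d}^{(i)}(\sigma) \vdash C^C(\phi)$, and $\vec{d}^{(i)}[i] = \vec{e}[i] = \ell_i$. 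A second, easy consequence: since $\vec{d}^{(i)}(\sigma) \vdash C^C(\phi)$, unfolding the fixed point once more shows every $j$ in the associated majority set knows $P_j C^C(\phi)$ at $\vec{d}^{(i)}[j]$, and because $\ell_j$ is the \emph{earliest} such state, $\ell_j$ is a prefix of $\vec{d}^{(i)}[j]$.

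With these two facts, consistency of $\vec{c}^{*}$ follows directly. Suppose $v_x^i \rightarrow v_y^j$ and $v_y^j$ lies in $\vec{c}^{*}$, i.e. $v_y^j$ is at or before $\ell_j$. By the second fact $\ell_j$ is a prefix of $\vec{d}^{(i)}[j]$, so $v_y^j$ belongs to the consistent cut $\vec{d}^{(i)}$; since $\vec{d}^{(i)}$ is consistent and $v_x^i \rightarrow v_y^j$, the event $v_x^i$ also belongs to $\vec{d}^{(i)}$, i.e. $v_x^i$ is at or before $\vec{d}^{(i)}[i] = \ell_i$, so $v_x^i$ lies in $\vec{c}^{*}$. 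Hence $\vec{c}^{*}$ is a consistent cut, and by construction it consists exactly of the states in which the processes learn $P_i C^C(\phi)$.

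The main obstacle I expect is not this core argument but the bookkeeping forced by replacing CCK's ``everyone'' operator with the majority operator $M^C$: the majority set witnessing $C^C(\phi)$ may differ between $\vec{c}_0$ and the cuts $\vec{d}^{(i)}$, so one must argue that the learning points $\ell_i$ are simultaneously defined for a single set of more than $2n/3$ processes --- for instance by intersecting the relevant witnessing sets (the intersection still exceeds $n/3$), or, more cleanly, by appealing to Theorem~\ref{thm:same} so that all honest nodes share the same OPERA chain $G^C$ and hence agree on the witnessing set. Care is also needed to justify that $\ell_i$ extends to a consistent cut and that ``knowing'' is genuinely local (this is exactly the remark preceding the theorem), but these are routine once the majority-set issue is settled.
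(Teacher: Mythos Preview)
Your reconstruction is faithful to the CCK argument, which is exactly what the paper does: it gives no explicit proof of this theorem but simply states that ``for a proof of the above theorems and lemmas in this Section, we can use similar proofs as described in the original CCK paper''~\cite{cck92}. In that sense your proposal already goes further than the paper itself.

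The one substantive point worth flagging is the one you yourself raise at the end: the paper replaces CCK's ``everyone'' operator $E^C$ by the majority operator $M^C$ but leaves the theorem statement reading ``all processes learn $P_i C^C(\phi)$''. Under $M^C$, the fixed-point unfolding only guarantees $K_i P_i C^C(\phi)$ for $i$ in some set $S$ with $|S| > 2n/3$, and, as you note, this witnessing set can in principle vary across the cuts $\vec{d}^{(i)}$. The paper does not address this gap at all; your suggestion to resolve it either by intersecting witnessing sets or by invoking Theorem~\ref{thm:same} (so that honest nodes share $G^C$ and hence a common witnessing set) is the right instinct, and is strictly more than the paper provides.
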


We have presented a formal semantics of Lachesis protocol based on the concepts and notations of concurrent common knowledge~\cite{cck92}.
For a proof of the above theorems and lemmas in this Section, we can use similar proofs as described in the original CCK paper.

With the formal semantics of Lachesis, the theorems and lemmas described in Section~\ref{se:proof} can be expressed in term of CCK. For example, one can study a fact $\phi$ (or some primitive proposition $p$) in the following forms: `is there any existence of fork?'. One can make Lachesis-specific questions like 'is event block $v$ a root?', 'is $v$ a clotho?', or 'is $v$ a atropos?'. This is a remarkable result, since we are the first that define such a formal semantics for DAG-based protocol.




\clearpage
\section{Reference}\label{se:ref}

\renewcommand\refname{\vskip -1cm}
\bibliographystyle{unsrt}
\bibliography{LCA}

\end{document}